\newcommand{\bm}[1]{\mbox{$\bf #1$}}
\newcommand{\joint}{\mbox{\scriptsize joint}}
\newcommand{\simple}{\mbox{\scriptsize simple}}
\def\boldtheta{\boldsymbol{\theta}}
\def\boldcaptheta{\boldsymbol{\Theta}}
\def\pfp{\epsilon_{1}}
\newtheorem{lemma}{Lemma}
\newtheorem{proposition}{Proposition}
\newtheorem{conjecture}{Conjecture}
\newcommand{\seq}[2]{\mathbf{#1}_{#2}}
\newcommand{\symb}[3]{#1(#2,#3)}
\newcommand{\set}[1]{\mathcal{#1}}
\newcommand{\nchoosek}[2]{\left(_{#2}^{#1}\right)}
\newcommand{\Prob}[2]{\mbox{Pr}_{#1}[#2]}
\newcommand{\Probb}[1]{\mbox{Pr}_{#1}}
\newcommand{\expect}[2]{\mathbb{E}_{#1}\left[#2\right]}
\newcommand{\partialder}[1]{\frac{\partial}{\partial #1}}
\newcommand{\class}[2]{\set{P}^{#1}(#2)}
\title{Worst case attacks against binary probabilistic traitor tracing codes}
\author{Teddy Furon and Luis P{\'e}rez-Freire}
\begin{document}

\maketitle

\begin{abstract}
An insightful view into the design of traitor tracing codes should
necessarily consider the worst case attacks that the colluders can
lead. This paper takes an information-theoretic point of view
where the worst case attack is defined as the collusion strategy
minimizing the achievable rate of the traitor tracing code. Two
different decoders are envisaged, the joint decoder and the simple
decoder, as recently defined by P. Moulin~\cite{Moulin08universal}.
Several classes of colluders are defined with
increasing power. The worst case attack is derived for each class
and each decoder when applied to Tardos' codes and a probabilistic
version of the Boneh-Shaw construction. This contextual study
gives the real rates achievable by the binary probabilistic
traitor tracing codes. Attacks usually considered in literature, such as majority or minority votes, are indeed largely suboptimal.
This article also shows the utmost importance of the time-sharing concept in probabilistic codes.
\end{abstract}

\section{Introduction}  \label{sec:intro}
This article deals with traitor tracing which is also known as
active fingerprinting, content serialization, user forensics or
transactional watermarking. A typical application is, for instance, as
follows: A video on demand server distributes personal copies
of the same content to $n$ buyers. Some are dishonest users whose
goal is to illegally redistribute a pirate copy. The rights holder
is interested in identifying these dishonest users. For this
purpose, a unique user identifier consisting on a sequence of $m$
symbols is embedded in each video content thanks to a watermarking
technique, thus producing $n$ different (although perceptually
similar) copies. This allows tracing back which user has illegally
redistributed his copy. However, there might be a collusion of $c$
dishonest users, $c>1$. This collusion mixes their copies in order
to forge a pirated content which contains none of the identifiers
but a mixture of them.

The traitor tracing code invented by Gabor Tardos in 2003
\cite{Tardos2003:Optimal} becomes more and more popular. This code
is a probabilistic weak traitor tracing code, where the
probability of accusing an innocent is not null.
Its performance is usually evaluated in terms of the probability
$P_{FA}$ of accusing an innocent and the probability of missing all
colluders $P_{FN}$. Most of the articles dealing with the Tardos code
aim at finding a tighter lower bound on the length of the code. In
his seminal work, G. Tardos shows that, in order to guarantee
$P_{FA}<\pfp$ and $P_{FN}<\pfp^{c/4}$ as defined in the Boneh \& Shaw problem \cite{Boneh98:fingerprinting}, the code length must satisfy $m> 100 c^2\log
n/\pfp$. Many researchers found the constant $100$ too arbitrary.
Better approximations\footnote{Numbers are given for the non symmetric decoding, where symbols `0' are discarded as in the original Tardos setup.}
are $4\pi^2$\cite{Skoric2006:Tardos},
$85$ \cite{BlayerT08}, and $16$ \cite{CerouF08}.
A main improvement came from the symmetric decoding~\cite{Skoric2008:Symmetric}.
Other works propose more practical implementations of the Tardos
code~\cite{Nuida07IH}. The reader will also find a pedagogical presentation
of this code in~\cite{Furon08IH}.

Our article is very different than these past threads of studies
as we give the theoretical performances of the code whatever the
accusation algorithm.
In a nutshell, our work consists in applying
the results of~\cite{Moulin08universal}.
In this article, P. Moulin gives the definition of capacity for the traitor tracing problem, providing exact capacity expressions for the blind model, i.e. when the decoder does not know in advance neither the number of colluders nor the particular collusion strategy followed by them. So far, only bounds to the capacity had been derived by other authors (see references in~\cite{Moulin08universal}). In words, capacity is defined as the maximum (over all traitor tracing codes) of the minimum (over all strategies allowed by the collusion model) of an appropriate mutual information functional. Nevertheless, the problems of finding the best traitor tracing codes and the optimal collusion attacks are left open, although some important hints are given in~\cite{Moulin08universal} and more recently in~\cite{Amiri2009:High}.
Our results are not in the direction of solving this game-theoretic problem. We consider specific binary fingerprinting codes and seek for the collusion strategy minimizing the mutual information. 
Therefore, we cannot speak of capacity of a given collusion channel as in~\cite{Moulin08universal},
but of the maximum achievable rate of a given binary code. Our results are mainly aimed at providing more insight into the binary Tardos code, but the methodology can be easily extended, in general, to other code constructions based on the same principles. In fact, as explained in the sequel, our study also deals with a probabilistic version of the Boneh \& Shaw code \cite{Boneh98:fingerprinting}.

The goal of our study is twofold. First, it seems that an invariance
property governs the design of Tardos code~\cite{Furon08IH}: the Markov
lower bounds on code length derived
in~\cite{Tardos2003:Optimal,Skoric2006:Tardos,BlayerT08} involve means and
variances of the innocents and colluders scores which are invariant with
respect to the collusion strategy.
Therefore, this nuisance parameter unknown at the accusation side is no
longer a problem since the bounds hold whatever its value. A priori, there
is no
collusion attack which is worse than any others. This is yet only
true as far as the first and second order statistics of the scores
are concerned. Higher order statistics do not share this
invariance property. Furthermore, this invariance property only holds for
the decoder
originally suggested by G. Tardos.
On contrary, the achievable rate of the traitor tracing code is an
appropriate measure to quantify how damageable is a given collusion
process whatever the decoding algorithm.
Therefore, we are looking for the
worst case collusion attack for a given number of colluders minimizing
this quantity.
The code is deemed sound whenever its rate is below this minimal mutual
information, hence the term maximum achievable rate.
These results clearly show that classical assessment against, for instance,
majority or minority attacks can largely overestimate the
performance of the code because these are far from being the worst
collusion processes.

The second goal of this article is to show the importance of time-sharing, which has been already highlighted in the theoretical derivations of P. Moulin~\cite{Moulin08universal}, in practice when a binary Tardos code is considered. Time-sharing is a concept well known in multiuser information theory \cite{CT91}, by which using two or more codes of different rates a new code can be constructed by using each code in disjoint fractions of the time.
In the Tardos code, the probability of having a symbol `1' in a code sequence changes from one index $i$ to another according to a given auxiliary random variable $P$, which is indeed the ``time-sharing'' variable that selects the code to be used in each index. 
Therefore, the achievable rate of the codes studied in this paper is defined as an expectation of a function over the time-sharing random variable $P$. It is very interesting to plot this later function with respect to $P$. Some attacks succeed in canceling this function over a range. Therefore, the support and the values taken by the probability density function $f(P)$ of the
time-sharing variable is of utmost importance.
An appropriate time-sharing leads to huge improvements,
provided the time-sharing sequence remains secret for the colluders.
Moreover, this study also shows that even when this sequence is disclosed,
performing traitor tracing is still possible in theory as the rate never exactly cancels. An interesting byproduct of our analysis is that it indeed addresses the analysis of binary traitor tracing codes without time-sharing, which has not been addressed before from the information-theoretic viewpoint, specially in the case of the simple decoder.

We recently discovered that E. Amiri and G. Tardos~\cite{Amiri2009:High}, and independently Huang and Moulin~\cite{Huang09}, addressed the same issues. However, relatively few of their results cover exactly our propositions:
\begin{itemize}
\item For the joint decoder, they succeeded to derive in a game-theoretic setting the capacity-achieving parameter $f(P)$. This is indeed a probability mass function (pmf - i.e. the time-sharing variable is discrete) strongly dependent on the collusion size. However, in a real case scenario one cannot foresee the exact number of colluders: at most, a maximum collusion size can be anticipated. The code is guaranteed to perform well whenever the number of colluders is below the predicted maximum; however, for bigger collusion sizes the code becomes unreliable. Furthermore, the numerical computation of the optimal $f(P)$ seems not feasible for a large number of colluders \cite{Huang09}. This motivates the interest in the study of the probabilistic traitor tracing code with a fixed continuous $f(P)$ which, albeit suboptimal, does reasonably well for any collusion size. Remarkably, the $f(P)$ proposed by Tardos which we study in this paper seems to be very close to the optimal $f(P)$ when the number of colluders is very large, according to \cite{Huang09}, and the asymptotic loss with respect to the capacity is only within a small factor. 
\item For the simple decoder, Amiri and Tardos~\cite{Amiri2009:High} considered a scenario where all colluder identities were disclosed except one, and the decoder is looking for the identity of this unknown colluder. Our simple decoder is the one defined by P. Moulin \cite{Moulin08universal} which is very different and more realistic: no colluder is caught, and the goal of the decoder is to make a first accusation. 
\end{itemize}

Sec.~\ref{Sec:Model} introduces all the mathematical definitions
and assumptions needed to derive the worst case attacks: the type
of traitor tracing code we are dealing with (Sec.~\ref{sub:BinaryCode}), the introduction of
four different classes of colluders referred to as A, B, C and D (Sec.~\ref{sub:Collusion}),
and two possible accusation strategies based on the so-called
joint and simple decoders~(Sec.~\ref{sub:DecodingFamilies}).
This paper gives the worst case attacks that a given class of colluders can lead against a given family of decoders: joint decoder in Sec.~\ref{sec:Joint} and simple decoder in Sec.~\ref{sec:simple}.

\section{Model}	\label{Sec:Model}

\subsection{Notation}
\label{subsec:notation}

First of all we summarize the most important notational conventions to be used throughout the paper. Random variables and their realizations are denoted by capital and lowercase letters, respectively. Boldface letters denote column vectors. Calligraphic letters are reserved for sets. $\Prob{X}{x}$ is the probability that the discrete random variable $X$ takes the value $x$. The shorthand $[m]$ will be used to denote a sequence of indices $\{1,\ldots,m\}$. $H(.)$ denotes entropy of a discrete random variable. $h_b(x) = -x\log(x) -(1-x)\log(1-x)$ is the binary entropy. $D_{KL}(\Probb{X}||\Probb{Y})$ is the Kullback-Leibler divergence or relative entropy between the random variables $X$ and $Y$. $\log$, the logarithm to the base $2$, is preferably used in order to give all rates and entropies in bits, whereas $\ln$ is the natural logarithm.

\subsection{Binary probabilistic code with time-sharing}	\label{sub:BinaryCode}

We briefly remind how the Tardos code is designed, as an example of a probabilistic code with time-sharing.
The binary code $\set{X}$ is composed of $n$ sequences of $m$ bits. The sequence
$\seq{X}{j}=(\symb{X}{j}{1},\cdots,\symb{X}{j}{m})^T$ identifying
user $j$ is composed of $m$ independent binary symbols, with
$\Prob{\symb{X}{j}{i}}{1}=p_{i}$, $\forall i\in[m]$. 
The auxiliary random variables $\{P_{i}\}_{i=1}^{m}$ are independent
and identically distributed in the
range $[0,1]$ according to the probability density function $f(p)$: $P_{i}\sim f(p)$. Tardos pdf
$f(p)=\left(\pi\sqrt{p(1-p)}\right)^{-1}$ is symmetric around $1/2$:
$f(p)=f(1-p)$. It means that symbols `1' and `0' play a similar
role with probability $p$ or $1-p$. Both the code $\set{X}$ and the time-sharing sequence $\seq{p}{}=(p_1,\ldots,p_m)^T$ must remain as secret parameters. In the original paper, the pdf is slightly different as it is defined in $[t,1-t]$ where $t>0$ is the cut-off parameter fixed to $1/300c$. We do not consider this cut-off since the integrals are all well defined over $(0,1)$. 

This definition might encompass more fingerprinting codes than the Tardos one.
Although its construction is very different, the Boneh \& Shaw code (BS) shares a similar statistical structure~\cite{Boneh98:fingerprinting}.
When $n$ users are addressed, the ratio $P$ of symbol `1' in the code symbols  $\{\symb{X}{j}{i}\}_{j=1}^{n}$ for a given index $i\in[m]$ can be considered as a discrete random variable whose probability mass function is given by $\Prob{P}{k/n}=1/n,\,\forall k\in[n]$. This means that the sequence identifying user $j$ is composed of $m$ binary symbols, with
$\Prob{\symb{X}{j}{i}}{1}=p_i$, $\forall i\in[m]$, where $p_i \in \{1/n, 2/n,\ldots,(n-1)/n,1\}$ is chosen equiprobably. Therefore, the resemblance with the Tardos construction is clear: as $n$ goes to infinity, this code can be constructed as a Tardos code but with a flat pdf over $[0,1]$: $f(p)=1\;\forall\;p\in[0,1]$.

However, the difference between the Tardos and the BS codes is that the rate of the latter is imposed by construction.
Let us define the rate $R$ of a fingerprinting code by $R=\log(n)/m$. In a BS code, the rate is known to be $\log(n)/r(n-1)$, where $r$ is the so-called ``replication factor''~\cite{Boneh98:fingerprinting}. However, in order to perform a reliable accusation, the rate of any code must be lower than the capacity of the collusion channel~\cite{Moulin08universal}. Finding the capacity induced by a collusion process is a hard problem in general. This paper only deals with the achievable rate of Tardos-like codes (either with the Tardos pdf or a flat pdf to simulate a BS code), which is defined as the maximum rate guaranteeing a reliable decoding for any collusion process in a given class.

\subsection{Collusion process}	\label{sub:Collusion}
Denote the subset of colluder indices by
$\set{C}=\{j_{1},\cdots,j_{c}\}$, and
$\set{X}_\set{C}=\{\seq{X}{j_{1}},\ldots,\seq{X}{j_{c}}\}$
the restriction of the code to this subset. The collusion attack is the
process of taking sequences in $\set{X}_\set{C}$ as inputs and
yielding the pirated sequence $\seq{Y}{}$ as an output.

Traitor tracing codes have been first studied by the cryptographic
community and a key-concept is the marking assumption introduced
by Boneh and Shaw~\cite{Boneh98:fingerprinting}. It states that,
in its narrow-sense version, whatever the strategy of the
collusion $\set{C}$, we have
$Y(i)\in\{\symb{X}{j_{1}}{i},\cdots,\symb{X}{j_{c}}{i}\}$. In
words, colluders forge the pirated copy by assembling chunks from
their personal copies. It implies that if, at index $i$, the
colluders' symbols are identical, then this symbol value is
decoded at the $i$-th chunk of the pirated copy.

This is what watermarkers have understood from the pioneering cryptographic work. However, this has led to misconceptions. Another important thing is the way cryptographers have modeled a host content: it is a binary string where some symbols can be changed without spoiling the regular use of the content. These locations are used to insert the code sequence symbols. Cryptographers assume that colluders disclose codeword symbols from their identifying sequences comparing their personal copies symbol by symbol. The colluders cannot spot a hidden symbol if it is identical on all copies, hence the marking assumption.

In a multimedia application, for instance, the content is typically divided into chunks. A chunk can be a few second clip of audio or video. Symbol $\symb{X}{j}{i}$ is hidden in the $i$-th chunk of the content with a watermarking technique. This gives the $i$-th chunk sent to the $j$-th user. In this paper, we only address collusion processing where the pirated copy is forged by picking chunks from the colluders' personal copies. We do not cope with the mixing of several chunks into one (we assume that the watermarking technique is robust enough to handle this mixing collusion process). The marking assumption still holds but for another reason: as the colluders ignore the watermarking secret key, they cannot create chunks of content watermarked with a symbol they do not have.
However, contrary to the original cryptographic model, this also implies that the colluders might not know which symbol is embedded in a chunk.

\subsubsection{Mathematical model} \label{subsub:MathModelColl}
Our mathematical model of the collusion is essentially based on four main assumptions.
The first assumption is the memoryless nature of the collusion attack. Since the symbols of the code are independent, it seems relevant that the pirated sequence $\seq{Y}{}$ also shares this property. Therefore, the value of $Y(i)$ only depends on $\{\symb{X}{j_{1}}{i},\cdots,\symb{X}{j_{c}}{i}\}$.
The second assumption is the stationarity of the collusion process. Except when the Tardos code is broken (this is explained in the next section), we assume that the collusion strategy is independent of the index $i$ in the sequence. Therefore, we can describe it forgetting the index $i$ in the sequel.
The third assumption is that the colluders select the value of the symbol $Y$ depending on the values of their symbols, but not on their order. That is, the collusion channel is invariant to permutations of $\{\symb{X}{j_{1}}{i},\cdots,\symb{X}{j_{c}}{i}\}$. Therefore, the input of the collusion process is indeed the type of their symbols. In the binary case, this type is fully defined by the following sufficient statistic: the number $\Sigma_i$ of symbols `1': $\Sigma_i=\sum_{j=1}^c \symb{X}{j}{i}$. These three first assumptions greatly simplify the analysis of the problem without restricting the power of the colluders because they do not prevent them from  implementing an optimal collusion attack (see sections 2 and 3 of \cite{Moulin08universal}). Hence, our approach does not imply any loss of generality.  The fourth assumption is that the collusion process may not be deterministic, but random. These four assumptions yield that the collusion attack is fully described by the following parameter vector: $\boldtheta=(\theta_0,\ldots,\theta_c)^T$, with $\theta_\sigma = \Prob{Y}{1|\Sigma=\sigma}$. The following subsection gives examples of such collusion attacks, but we can already state that they all share the following property: The marking assumption enforces that $\theta_0=0$ and $\theta_c=1$. The authors of~\cite{Amiri2009:High} also speak about `eligible channel'.

\subsubsection{Classes of colluders}	\label{subsub:CollClass}
We introduce four classes of attacks with increasing power.

\paragraph{Class-A}
The weakest kind of colluders decides the value of the symbol
$Y(i)$ without considering all their symbols. Before receiving the
personal copies, these $c$ dishonest users have already agreed on
how to forge the pirated copy. This strategy amounts to set an
assignation sequence $(M_{1},\cdots,M_{m})$ with
$M_{i}\in\set{C}$, such that $Y(i)=\symb{X}{M_i}{i}$. We assume
that the colluders share the risk, so that the cardinality
$|\{i|M_{i}=j\}|\approx m/c$, for all $j\in\set{C}$. The
assignation sequence is random and independent of the personal
copies. Hence, for each collusion size, Class-A has a single collusion attack $\boldtheta$ given by $\theta_\sigma=\sigma/c$, $\forall\,\sigma=0,\ldots,c$.
For the sake of coherence with the subsequent notation, we say that $\boldtheta \in \class{A}{c}\triangleq\{c^{-1}(0,1,\ldots,c-1,c)^T\}$.

\paragraph{Class-B}
This second class of colluders differs from Class-A in the fact
that the assignation sequence is now a function of the personal
copies. These colluders are able to split their copies in chunks
and to compare them sample by sample. Hence, for any index $i$,
they are able to notice that, for instance, chunks $c_{j_{1}i}$
and $c_{j_{2}i}$ are different or identical. For binary embedded
symbols, they can constitute two stacks, each containing identical
chunks. This allows new collusion processes such as majority vote, minority vote, coin flip~\cite{Furon08IH}.

The important thing is that colluders can notice differences
between chunks, but they cannot tell which chunk contains symbol
`0'.\footnote{Note that in order this to be strictly true, we need
the probability distribution of the time-sharing sequence to be
symmetric, as it is the case in this paper.} Hence, symbols `1'
and `0' play a symmetric role, which strongly links the
conditional probabilities:
$\Prob{Y}{1|\Sigma=\sigma}=\Prob{Y}{0|\Sigma=c-\sigma}=1-\Prob{Y}{1|\Sigma=c-\sigma}$.
Therefore, Class-B collusion attacks are constrained in the
following way:
\begin{eqnarray}
    \boldtheta \in \class{B}{c} \triangleq
    \{\boldtheta: \theta_0=0, \theta_c=1,
    \theta_\sigma \in [0,1]\, \textrm{for}\, \sigma\in [c-1],\,
    \theta_\sigma=1-\theta_{c-\sigma},\, \textrm{for }
    \sigma\in[c-1]\}.
    \label{eq:classBcollusion}
\end{eqnarray}
Hence, a Class-B collusion attack has $\lfloor (c-1)/2 \rfloor$
degrees of freedom, and for even $c$ we necessarily have
$\theta_{c/2} = 1/2$. Clearly, for $c=2$ the only possible Class-B
collusion strategy is $\boldtheta = \{0, 0.5, 1\}$, which is also the Class-A attack.
Class-B collusion is relevant for traitor tracing in the multimedia
scenario, where each bit of the code is embedded in a different
chunk (frame, group of frames, etc.) of the multimedia signal by
means of a watermarking technique. The authors of~\cite{Amiri2009:High} refer to this class as ``strongly symmetric eligible channel.''

\paragraph{Class-C}
This is the classical collusion model used by cryptographers since
Boneh and Shaw~\cite{Boneh98:fingerprinting}.
The bits are directly pasted in the host content string, and thus the colluders
can compare their copies bitwise in order to disclose the location
of the traitor tracing code.
Class-C collusion attacks are no longer constrained like in Class
B, and new strategies are then possible such as the following:
\begin{itemize}
\item All `1's. The colluders put the symbol `1' whenever they can: $\theta_\sigma=1$, $0<\sigma\leq c$,
\item All `0's. The colluders put the symbol `0' whenever they can: $\theta_\sigma=0$, $0\leq \sigma< c$,
\end{itemize}
In general, a Class-C collusion strategy belongs to the following
set:
\begin{eqnarray}
    \boldtheta \in \class{C}{c}\triangleq \{\boldtheta: \theta_0=0, \theta_c=1, \theta_\sigma \in [0,1],\,
    \sigma \in [c]\},
    \label{eq:classCcollusion}
\end{eqnarray}
and therefore it has $c-1$ degrees of freedom.

\paragraph{Class-D}
This last class is quite special because it no longer fulfills the stationarity assumption introduced in \ref{subsub:MathModelColl}.
Now, the knowledge of the time-sharing sequence $\seq{p}{}$ is granted to the colluders.
From a statistical point of view,
the conditional probabilities  depend on $\Sigma_{i}$ and
$P_{i}$: $\Prob{Y_{i}}{1|\Sigma_{i},P_{i}}$. The collusion model for this class is a set of $c+1$ functions
$\boldtheta(p)=(\theta_0(p),\ldots,\theta_c(p))^T$ such that
\begin{eqnarray}
    \boldtheta(p) \in \class{C}{c},\,\forall p\in[0,1],
    \label{eq:classDcollusion}
\end{eqnarray}

The interest of Class-D is twofold: on one
hand, it gives the rate achievable by a code that does not
perform time-sharing (i.e. the value of $p$ is fixed) 
and on the other hand it shows the achievable rate when the code has been
broken (i.e. the secret of the probabilistic code has been
disclosed), meaning that the colluders know the value of $p_{i}$
for all index $i \in [m]$. Therefore, they can adapt their strategy for
each index chunk according to its value $p_{i}$. 
Notice that the attack is still assumed to be memoryless.

\subsection{Decoding families}	\label{sub:DecodingFamilies}
The study of traitor tracing codes from an achievable rate standpoint largely decouples their performances from any particular decoding algorithm. However, we consider two different families of decoders: the simple decoder \cite[Sec. 4]{Moulin08universal} and the joint decoder~\cite[Sec. 5]{Moulin08universal}. The simple decoder calculates the empirical mutual information between each user codeword and the pirated sequence, whereas the joint decoder calculates the empirical mutual information between each possible subset of $c$ users and the pirated sequence.
Due to their different nature, the two families have different achievable rates. Briefly, the joint decoder represents what the accusation side could do in an ideal world where complexity is not a matter, and it has been shown to be capacity-achieving. However, it has to tackle $\nchoosek{n}{c}$ groups which seems hardly affordable for large $n$. The simple decoder, suboptimal in general, represents the upper performance limit for more practical decoders.

\subsubsection{Joint decoder}
The achievable rate for the joint decoder against a given collusion attack is based on
the mutual information between $Y$, a symbol of the pirated sequence, and $X_{\set{C}}$, the symbols of the colluders' code sequences~\cite[Sec. 5]{Moulin08universal}. This holds for any index thanks to the symbol independence, and this is taken in expectation over the time-sharing random variable $P$: 
\begin{eqnarray}
R_{\joint}(\boldtheta)   & = & \frac{1}{c}\expect{P}{I(Y;X_{\set{C}}|P=p,\boldcaptheta=\boldtheta)} \nonumber\\
                & = & \frac{1}{c}\left(\expect{P}{H(Y|P=p,\boldcaptheta=\boldtheta)} -
    \expect{P}{H(Y|X_{\set{C}},P=p,\boldcaptheta=\boldtheta)}\right) \nonumber\\
    & = & \frac{1}{c}\left(\expect{P}{H(Y|P=p,\boldcaptheta=\boldtheta)} - \expect{P}{H(Y|\Sigma,P=p,\boldcaptheta=\boldtheta)}\right),
    \label{eq:rate-joint-decoder}
\end{eqnarray}
where $\Sigma$ is the random variable defined as the number of ones in the set $X_\set{C}$.  Equality in \eqref{eq:rate-joint-decoder} follows because of the assumption stated in Sect.~\ref{sub:Collusion}, namely that the output of the collusion channel only depends on the type of $X_\set{C}$, not on the order of its elements.
For the sake of clarity, we omit the expression $\boldcaptheta=\boldtheta$ in the sequel, but all the probability, entropy or mutual information expressions are given for a given collusion attack.
Plugging the collusion model introduced in~\ref{subsub:MathModelColl}, we have:
\begin{eqnarray}
\label{eq:ProbY}
\Prob{Y}{1|P=p}&=&\sum_{\sigma=0}^c\theta_\sigma\Prob{\Sigma}{\sigma|P=p},\\
\Prob{Y}{1|\Sigma=\sigma,P=p}&=&\theta_\sigma,
\end{eqnarray}
with $\Prob{\Sigma}{\sigma|P=p}=\nchoosek{c}{\sigma}p^\sigma(1-p)^{(c-\sigma)}$, known as the Bernstein polynomials~\cite{Sevy1995:Lagrange}.
Therefore, Eq.~\eqref{eq:rate-joint-decoder} can be rewritten as:
\begin{equation}
R_{\joint}(\boldtheta) =
\frac{1}{c}\left(\expect{P}{h_b\left(\sum_{\sigma=0}^c\theta_\sigma\Prob{\Sigma}{\sigma|P=p}\right)}
- \sum_{\sigma=0}^c \expect{P}{\Prob{\Sigma}{\sigma|P=p}}
h_b(\theta_\sigma) \right). \label{eq:RJoint}
\end{equation}

A possible interpretation of (\ref{eq:rate-joint-decoder}) is that the rate can also be expressed in terms of the average discrimination (or Kullback Leibler distance as~\cite{Blahut72}):
\begin{eqnarray}
    R_{\joint}(\boldtheta)&=& \expect{P}{D_{KL}(\Probb{Y,\Sigma}\,||\,\Probb{Y}\Probb{\Sigma}|P=p)}/c,\\
    &=&\expect{P}{\sum_{y\in\{0,1\}}\sum_{\sigma=0}^c\Prob{\Sigma}{\sigma|P=p}\Prob{Y}{y|\Sigma=\sigma}\log\left(\frac{\Prob{Y}{y|\Sigma=\sigma}}{\Prob{Y}{y|P=p}}\right)}/c.
    \label{eq:rjoint-kl}
\end{eqnarray}
The usefulness of this expression will become patent in Section~\ref{sub:JointC}.

\subsubsection{Simple decoder}
The achievable rate for the simple decoder against a given collusion attack is given in~\cite[Sec. 4]{Moulin08universal}:
\begin{eqnarray}
R_{\simple}(\boldtheta)  & = & \expect{P}{I(Y;X|P=p)} \nonumber\\
                & = & \expect{P}{H(Y|P=p)} -
    \expect{P}{H(Y|X,P=p)}\\
    &=&\expect{P}{D_{KL}(\Probb{X,Y}||\Probb{X}\Probb{Y}|P=p)}
    \label{eq:rate-simple-decoder}
\end{eqnarray}
This links the notion of rate to the inherent capability of distinguishing two hypothesis:
\begin{itemize}
	\item $\mathcal{H}_0$: User $j$ is innocent, and his codeword is independent of $Y$ given $P$: $\Prob{}{Y,X|\mathcal{H}_0}=\Prob{}{Y}\Prob{}{X}$, 
	\item $\mathcal{H}_1$: User $j$ is guilty and $Y$ has been created from his codeword: $\Prob{}{Y,X|\mathcal{H}_1}=\Prob{}{Y|X}\Prob{}{X}$.
\end{itemize}

The calculation of the rate needs the expressions of the
conditional probabilities induced by the collusion model:
\begin{eqnarray}
\Prob{Y}{1|X=1,P=p}&=&\sum_{k=1}^c\theta_k\nchoosek{c-1}{k-1}p^{k-1}(1-p)^{c-k},
\label{eq:CondProb1}\\
\Prob{Y}{1|X=0,P=p}&=&\sum_{k=0}^{c-1}\theta_k\nchoosek{c-1}{\phantom{ii}k}p^{k}(1-p)^{c-k-1}.
\label{eq:CondProb0}
\end{eqnarray}

\begin{proposition}
Two simple considerations:
\begin{itemize}
\item For the classes A, B and C, the following relationships hold:
\begin{eqnarray}
\Prob{Y}{1|X=1,P=p}&=&\Prob{Y}{1|P=p}+\frac{1-p}{c}\frac{\partial}{\partial p}\Prob{Y}{1|P=p},\\
\Prob{Y}{1|X=0,P=p}&=&\Prob{Y}{1|P=p}-\frac{p}{c}\frac{\partial}{\partial p}\Prob{Y}{1|P=p}.
\end{eqnarray}
\item For $c=2$, $\class{A}{2}=\class{B}{2}$.
\end{itemize}
\label{prop:CondProb}
\end{proposition}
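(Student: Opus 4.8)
The plan is to handle the two items separately; only the first needs any computation, and even that is short.

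\emph{First item.} Set $A(p)\triangleq\Prob{Y}{1|X=1,P=p}$, $B(p)\triangleq\Prob{Y}{1|X=0,P=p}$ and $g(p)\triangleq\Prob{Y}{1|P=p}$. For classes A, B and C the parameter vector $\boldtheta$ is a constant, not depending on $p$, so it suffices to produce two linear relations among $A$, $B$, $g$ and $g'$ and then invert a $2\times2$ system. The first relation is the law of total probability, conditioning on the symbol $X$ of one arbitrarily chosen colluder, which equals $1$ with probability $p$ given $P=p$:
\begin{equation*}
g(p)=p\,A(p)+(1-p)\,B(p).
\end{equation*}
The second is the Bernstein-derivative identity $\partialder{p}g(p)=c\bigl(A(p)-B(p)\bigr)$.

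To establish this derivative identity I would differentiate the Bernstein expansion $g(p)=\sum_{\sigma=0}^{c}\theta_\sigma\nchoosek{c}{\sigma}p^{\sigma}(1-p)^{c-\sigma}$ and use the elementary identities $\sigma\nchoosek{c}{\sigma}=c\nchoosek{c-1}{\sigma-1}$ and $(c-\sigma)\nchoosek{c}{\sigma}=c\nchoosek{c-1}{\sigma}$; after shifting the summation index, the two resulting sums match term by term with $c\,A(p)$ and $-c\,B(p)$ as given by \eqref{eq:CondProb1} and \eqref{eq:CondProb0}. Solving $\{\,g=pA+(1-p)B,\ g'=c(A-B)\,\}$ for $A$ and $B$ gives $A=g+\tfrac{1-p}{c}\,g'$ and $B=g-\tfrac{p}{c}\,g'$, which is the claim. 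The one point worth flagging — and the reason classes A, B, C are singled out rather than all eligible channels — is that a $p$-dependent $\boldtheta$ (Class D) adds extra terms $\sum_\sigma\theta_\sigma'(p)\nchoosek{c}{\sigma}p^{\sigma}(1-p)^{c-\sigma}$ to $g'(p)$, which breaks the second relation; this deserves a remark but no calculation.

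\emph{Second item.} Specialize \eqref{eq:classBcollusion} to $c=2$: the constraints force $\theta_0=0$ and $\theta_2=1$, while the symmetry constraint $\theta_\sigma=1-\theta_{2-\sigma}$ read at $\sigma=1$ gives $\theta_1=1-\theta_1$, hence $\theta_1=1/2$. Thus $\class{B}{2}=\{(0,1/2,1)^T\}$, which is exactly the single element of $\class{A}{2}=\{2^{-1}(0,1,2)^T\}$. No obstacle arises here; the entire difficulty of the proposition is the brief index bookkeeping in the derivative identity of the first item.
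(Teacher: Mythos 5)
Your proof is correct and follows essentially the same route as the paper: it rests on the same two relations, the total-probability decomposition $\Prob{Y}{1|P=p}=p\Prob{Y}{1|X=1,P=p}+(1-p)\Prob{Y}{1|X=0,P=p}$ and the derivative identity $\partialder{p}\Prob{Y}{1|P=p}=c(\Prob{Y}{1|X=1,P=p}-\Prob{Y}{1|X=0,P=p})$, which the paper states after "some manipulations" and you justify explicitly via the Bernstein expansion before inverting the $2\times 2$ system. Your explicit check that $\class{B}{2}=\{(0,1/2,1)^T\}=\class{A}{2}$ and the remark on why a $p$-dependent $\boldtheta$ (Class D) breaks the identity simply flesh out what the paper leaves as obvious.
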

\begin{proof}
After some manipulations, we have $\frac{\partial}{\partial p}\Prob{Y}{1|P=p}=c(\Prob{Y}{1|X=1,P=p}-\Prob{Y}{1|X=0,P=p})$.
Moreover, $\Prob{Y}{1|P=p}=p\Prob{Y}{1|X=1,P=p}+(1-p)\Prob{Y}{1|X=0,P=p}$. The second item is obvious.
\end{proof}

\subsubsection{Achievable rate under Class-Z attack}
\label{sec:class-z-attack}
For a given decoding family and size of collusion, the achievable rate of a code under Class-$Z$ attack (with $Z\in\{A,B,C,D\}$) is the mutual information produced by the worst collusion process in this class. For instance, with straightforward notation:
\begin{equation}
R_{\simple}^{Z}(c)=\min_{\boldtheta\in\class{Z}{c}} R_{\simple}(\boldtheta).
\label{eq:AchievRate}
\end{equation}
Since the colluders are more and more powerful as we consider upcoming classes, the following relationships hold for the simple decoder (and similarly for the joint decoder):
\begin{equation}
R_{\simple}^{D}(c)\leq R_{\simple}^{C}(c)\leq R_{\simple}^{B}(c)\leq R_{\simple}^{A}(c).
\label{eq:Inclusion}
\end{equation}

%

To stress the importance of the time-sharing variable $P$, it is interesting to define the function
\begin{equation}
	r_{\simple}^{Z}(c,p_0) \triangleq I(Y;X|P=p_0, \boldcaptheta = \boldtheta^*),
	\label{eq:rjoint-simple-def}
\end{equation}
where 
\begin{equation}
\boldtheta^* = \arg\min_{\boldtheta\in\class{Z}{c}} R_{\simple}(\boldtheta).
\nonumber
\end{equation}
The strong non-convexity of $r_{\simple}^{Z}(c,p)$ in $p$, in general, justifies the need of time-sharing~\cite{Moulin08universal}, as will be seen. 
Obviously, $R_{\simple}^{Z}(c)=\expect{P}{r^{Z}_{\simple}(c,p)}$. The extension of this definition to the joint decoder is straightforward.

\section{The joint decoder}
\label{sec:Joint}

\begin{figure*}[t] 
    \begin{minipage}[b]{1.0\textwidth}
        \centerline{\epsfig{figure = 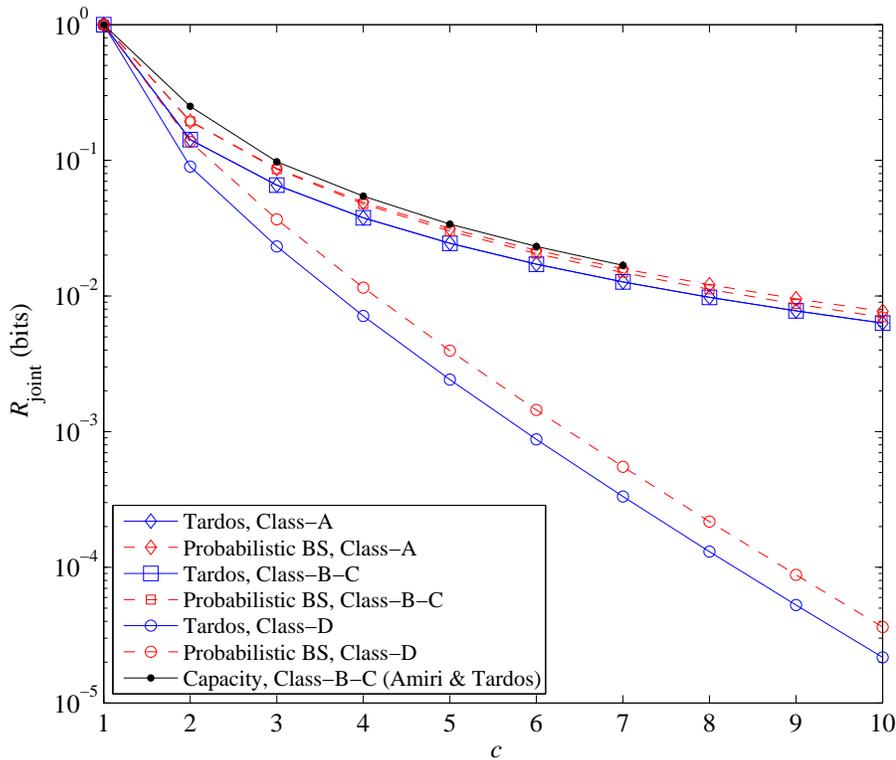, width=0.75\textwidth}}
    \end{minipage}
    \caption{Achievable rates for the joint decoder against different classes of collusion, for Tardos and probabilistic BS codes. The fingerprinting capacity (according to Amiri and Tardos) against Class-B-C colluders is plotted for comparison.}
    \label{fig:rates-joint}
\end{figure*}

\subsection{Colluders Class-A}
\label{sub:JointA}

\begin{proposition}
A Class-A collusion leads to pirated sequence symbols whose probability is $\Prob{Y}{1|P=p}=p$, for any collusion size.
Hence, the achievable rates of the joint decoder against Class-A collusion is
for the Tardos and the flat distribution:
\begin{equation}
R_{\joint}^{A}(c)=c^{-1}\left(\expect{P}{h_b(P)} - \sum_{\sigma=0}^c \expect{P}{\Prob{\Sigma}{\sigma|P}} h_b(\sigma/c)\right).
\label{eq:RateJointClassA}
\end{equation}
\end{proposition}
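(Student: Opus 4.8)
The plan is to prove the two assertions in turn; both reduce to short computations and no minimization is actually needed. \textbf{Step 1 (the output bias).} Recall from Section~\ref{subsub:CollClass} that $\class{A}{c}$ is the singleton $\{c^{-1}(0,1,\ldots,c)^T\}$, i.e.\ $\theta_\sigma=\sigma/c$. Substituting this into~\eqref{eq:ProbY} gives $\Prob{Y}{1|P=p}=\sum_{\sigma=0}^c (\sigma/c)\,\Prob{\Sigma}{\sigma|P=p}$, which is $1/c$ times the mean of a binomial random variable with parameters $c$ and $p$, hence equal to $p$. Equivalently one may use the identity $(\sigma/c)\nchoosek{c}{\sigma}=\nchoosek{c-1}{\sigma-1}$ to pull out a factor $p$ and collapse the surviving Bernstein polynomials to $1$; or, most transparently, observe that in Class-A one has $Y(i)=\symb{X}{M_i}{i}$ with $M_i$ uniform on $\set{C}$ and independent of the code, so $\Prob{Y}{1|P=p}$ is the average of $c$ identical quantities $\Prob{\symb{X}{j}{i}}{1|P=p}=p$. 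The point to stress is that the answer is $p$ \emph{for every collusion size $c$}.

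\textbf{Step 2 (the rate).} Since $\class{A}{c}$ has a single element, the achievable rate $R_{\joint}^{A}(c)$ of Section~\ref{sec:class-z-attack} is simply $R_{\joint}(\boldtheta)$ evaluated at that element, so the minimization is vacuous. Insert $\theta_\sigma=\sigma/c$ into the rate expression~\eqref{eq:RJoint}. In the first term the argument of $h_b$ equals $\sum_{\sigma=0}^c\theta_\sigma\Prob{\Sigma}{\sigma|P=p}=\Prob{Y}{1|P=p}=p$ by Step 1, so that term collapses to $\expect{P}{h_b(P)}$. In the second term $h_b(\theta_\sigma)=h_b(\sigma/c)$ verbatim. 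Assembling the two pieces yields exactly~\eqref{eq:RateJointClassA}. Nothing in the argument used the shape of $f(p)$, so the formula holds identically for the Tardos pdf and for the flat pdf.

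\textbf{Main obstacle.} There is essentially none here: the only substantive observation is that the Class-A output bias is $p$ independently of $c$, which is precisely what lets the first entropy term simplify to $\expect{P}{h_b(P)}$. This invariance is special to Class-A and fails for the richer classes B, C and D, which is why it deserves to be recorded as a separate statement before the rate formula is written down.
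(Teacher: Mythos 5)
Your proposal is correct and follows essentially the same route as the paper: substitute $\theta_\sigma=\sigma/c$ into \eqref{eq:ProbY}, recognize the binomial mean $cp$ to get $\Prob{Y}{1|P=p}=p$, and then read off the rate from \eqref{eq:RJoint}, the minimization over $\class{A}{c}$ being vacuous since the class is a singleton. Your additional remarks (the combinatorial identity and the direct $Y(i)=\symb{X}{M_i}{i}$ argument) are fine but not needed beyond what the paper does.
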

\begin{IEEEproof}
Since we have $\theta_\sigma=\sigma/c$, $\forall \sigma\in \{0,\ldots,c\}$, $\Prob{Y}{1|P=p}$ has a simple expression:
\begin{equation}
\Prob{Y}{1|P=p}=c^{-1}\expect{\Sigma}{\sigma|P=p}=p.
\label{eq:ProbYClassA}
\end{equation}
The last equality comes from the fact that $\Sigma$ is a random
variable distributed as a binomial $B(c,p)$, so its expectation is
$cp$.
\end{IEEEproof}
With the help of {\it Mathematica}, the expectations find closed form expressions, and the achievable rate in bits is for the Tardos pdf:
\begin{equation}
R_{\joint}^{A}(c)=c^{-1}\left(2-\log(e) - \pi^{-1}\sum_{\sigma=0}^c \frac{\Gamma(\sigma+1/2)\Gamma(c-\sigma+1/2)}{\Gamma(\sigma+1)\Gamma(c-\sigma+1)} h_b(\sigma/c)  \right)
\label{eq:TardosRateA}
\end{equation}
whereas for the flat pdf (i.e. for the probabilistic BS code):
\begin{equation}
R_{\joint}^{A}(c)=c^{-1}\left(\log(e)/2 - (c+1)^{-1}\sum_{\sigma=0}^c h_b(\sigma/c). \right)
\label{eq:FlatRateA}
\end{equation}

The resulting achievable rates are plotted in Fig.~\ref{fig:rates-joint}. These plots suggest that they decrease as $1/c^2$. This is confirmed by the next proposition.
\begin{proposition}
\label{prop:JointClassA}
For any pdf $f(p):[0,1]\rightarrow \mathbb{R}^+$, we have (in natural units)
\begin{equation}
\lim_{c\rightarrow+\infty}R_{\joint}^{A}(c)-\frac{1}{2\ln(2)c^{2}}=0.
\end{equation}
\end{proposition}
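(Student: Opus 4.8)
The plan is to show that $c^2 R_{\joint}^{A}(c) \to \frac{1}{2\ln 2}$ by analyzing the two terms in \eqref{eq:RateJointClassA} separately, using a Taylor expansion of the binary entropy function around $p$. Write $c^2 R_{\joint}^{A}(c) = c\,\expect{P}{h_b(P)} - c\sum_{\sigma=0}^{c}\expect{P}{\Prob{\Sigma}{\sigma|P}}h_b(\sigma/c)$. The key observation is that, conditionally on $P=p$, the quantity $\Sigma/c$ concentrates around $p$ with variance $p(1-p)/c$, so $\sum_\sigma \Prob{\Sigma}{\sigma|p}h_b(\sigma/c) = \expect{\Sigma}{h_b(\Sigma/c)|P=p}$ is close to $h_b(p)$, and the gap is governed by the second derivative of $h_b$. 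Concretely, for fixed $p\in(0,1)$, a second-order Taylor expansion of $h_b$ gives (in natural units) $h_b(\sigma/c) = h_b(p) + h_b'(p)(\sigma/c - p) + \tfrac12 h_b''(p)(\sigma/c - p)^2 + \text{remainder}$, and since $h_b''(p) = -\frac{1}{p(1-p)}$ (in natural units), taking expectation over $\Sigma\sim B(c,p)$ yields
\begin{equation}
\expect{\Sigma}{h_b(\Sigma/c)|P=p} = h_b(p) - \frac{1}{2c}\cdot\frac{1}{p(1-p)}\cdot p(1-p) + o(1/c) = h_b(p) - \frac{1}{2c} + o(1/c).
\nonumber
\end{equation}
Therefore $\expect{P}{h_b(P)} - \sum_\sigma \expect{P}{\Prob{\Sigma}{\sigma|P}}h_b(\sigma/c) \approx \frac{1}{2c}$, so $R_{\joint}^{A}(c) \approx \frac{1}{2c^2}$ in natural units, which is exactly the claim after converting units (the statement is explicitly given in natural units).

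The steps, in order: first, establish the exact identity $R_{\joint}^{A}(c) = c^{-1}\expect{P}{h_b(P) - \expect{\Sigma}{h_b(\Sigma/c)|P}}$ from Proposition on Class-A (so $\Prob{Y}{1|P=p}=p$) and \eqref{eq:RJoint}. Second, for each fixed $p$, Taylor-expand $g(t) := h_b(t)$ around $t=p$ to second order with an explicit remainder bound, and evaluate the moments $\expect{\Sigma}{(\Sigma/c - p)|P=p} = 0$ and $\expect{\Sigma}{(\Sigma/c - p)^2|P=p} = p(1-p)/c$; the $h_b'$ term vanishes and the $h_b''$ term contributes exactly $-\frac{1}{2c}$ independently of $p$. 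Third, control the remainder: the third-order Taylor term involves $\expect{\Sigma}{|\Sigma/c-p|^3|P=p} = O(c^{-3/2})$, so it contributes $O(c^{-3/2})$ to the inner quantity, hence $O(c^{-5/2})$ to $R_{\joint}^{A}(c)$ — negligible compared to $c^{-2}$. Fourth, integrate over $P\sim f(p)$ and interchange limit and expectation to conclude $\lim_{c\to\infty}\big(c^2 R_{\joint}^{A}(c) - \tfrac{1}{2\ln 2}\big) = 0$, equivalently the stated form.

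The main obstacle is the uniformity of the Taylor remainder near the endpoints $p=0$ and $p=1$, where $h_b'(p)$ and $h_b''(p)$ blow up and the naive pointwise bound on the third derivative of $h_b$ is not integrable against a general $f(p)$ (which may itself be unbounded, as in the Tardos pdf). The fix is to handle a shrinking neighborhood of the endpoints separately: for $p$ within $O(1/c)$ of $0$ or $1$, bound $h_b(\sigma/c) \le \log(c+1)$ crudely and note that $\Prob{\Sigma}{\sigma|p}$ is concentrated near $\sigma=0$ or $\sigma=c$ where $h_b$ is small, and use $h_b(t)\le 2\sqrt{t(1-t)}$ or a similar concavity bound to show the contribution of this region to $\expect{P}{h_b(P) - \expect{\Sigma}{h_b(\Sigma/c)|P}}$ is $o(1/c)$; for $p$ bounded away from the endpoints the pointwise expansion above applies uniformly. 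A cleaner alternative, exploiting $f(p) = f(1-p)$ and that any $f$ is a probability density hence integrable, is to split the range at $\delta$ and $1-\delta$, take $c\to\infty$ first with $\delta$ fixed, then $\delta\to 0$, using that $\expect{P}{h_b(P)\mathbf{1}_{P\in[\delta,1-\delta]}} \to \expect{P}{h_b(P)}$ and that the integrand $c(h_b(p)-\expect{\Sigma}{h_b(\Sigma/c)|p}) \to \tfrac{1}{2}$ boundedly on $[\delta,1-\delta]$. Once the endpoint contributions are shown to be uniformly small, the rest is routine.
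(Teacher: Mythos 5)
Your proposal follows essentially the same route as the paper: rewrite $R_{\joint}^{A}(c)$ as $c^{-1}\expect{P}{h_b(P)-\expect{\Sigma}{h_b(\Sigma/c)\,|\,P=p}}$, Taylor-expand $h_b$ to second order around $p$, and use the binomial variance $p(1-p)/c$ so that $h_b''(p)=-1/(p(1-p))$ (in nats) yields the $1/(2c)$ per-symbol gap and hence $R_{\joint}^{A}(c)\approx 1/(2\ln(2)c^2)$ in bits. Your additional treatment of the remainder and of the uniformity near $p=0,1$ is sound and in fact more careful than the paper's, which handles these $o(\cdot)$ terms only informally.
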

See Appendix~\ref{sub:ProofClassAJoint} for the proof.

Consider now the achievable rate as the expectation over $P$ of the function $r_{\joint}^A(c,p)$ defined according to \eqref{eq:rjoint-simple-def}. This function, which is plotted in Fig.~\ref{fig:rate-joint-class-a} for different values of $c$, is symmetric around $p=1/2$ because $h_b(\sigma/c)=h_b((c-\sigma)/c)$. For $c=2$ and $c=3$, its maximum is in $p=1/2$. This shows that the best pdf would be a Dirac's delta in $p=1/2$. There would no longer be need for time-sharing variable $P$, and the code $\set{X}$ would be composed of i.i.d. binary components with $\Prob{}{X_{ji}=1}=1/2$. For $c>3$, the maximum is no longer in $p=1/2$, but on two symmetric values depending on $c$, so that the capacity-achieving pdf is composed of two Dirac's deltas. This is a very special case where the capacity can be numerically derived. The achievable rates of the Tardos and the probabilistic BS codes are lower than the capacity, as can be seen in Fig.~\ref{fig:rates-joint}. The next section shows however that the Dirac's delta pdf achieving capacity in Class-A is a very dangerous choice under other collusion classes, and that time-sharing becomes a necessity.

\begin{figure*}[t] 
		\psfrag{prov}[c][]{$r_{\joint}^A(c,p)$ (bits)}
    \begin{minipage}[b]{1.0\textwidth}
        \centerline{\epsfig{figure = 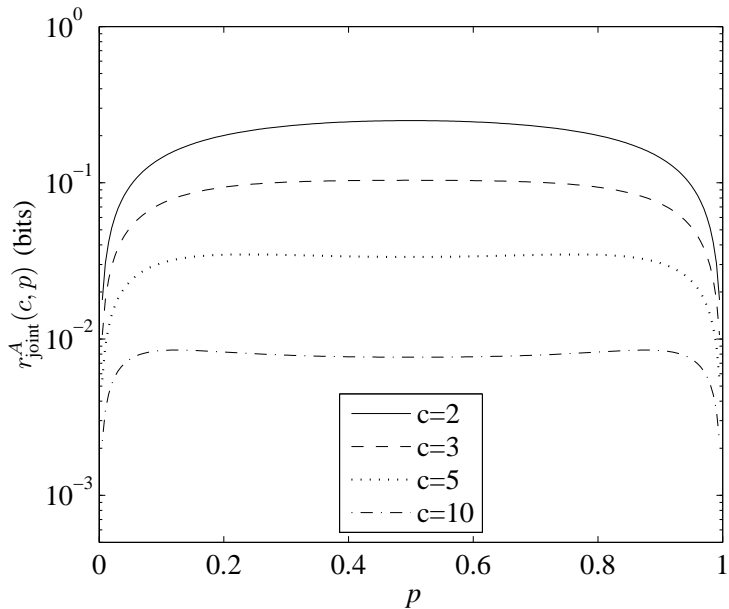, width=0.6\textwidth}}
    \end{minipage}
    \caption{Plot of $r_{\joint}^A(c,p)$ for the joint decoder under Class-A attack.}
    \label{fig:rate-joint-class-a}
\end{figure*}

\subsection{Colluders Classes B and C}	\label{sub:JointB}
\label{sub:JointC}

\subsubsection{2 colluders}
Thanks to Prop.~\ref{prop:CondProb}, the rationale for Class-A also holds for Class-B when $c=2$.
Tardos code rate is $R_{\joint}^{B}(2) = 7/8-\log(e)/2 \approx  0.154$ bits, whereas the rate for probabilistic BS code (i.e. flat pdf) $R_{\joint}^{B}(2) = \log(e)/4-1/6 \approx  0.194$ bits. The capacity is achieved with the Dirac's delta pdf, $f(p)=\delta(p-1/2)$, and it is approximately $0.25$ bits. We found back the same result as Amiri and Tardos (see line 2 of Table 1 in~\cite{Amiri2009:High}). However, this strategy is very risky if the number of colluders is actually bigger than 2 as we shall see in the next section.

\subsubsection{More colluders}
\label{sec:more-colluders-joint-class-b-c}
When $c\geq 3$, the analysis is much more complex and we have only succeeded to find out the worst collusion process and thus, the achievable rate for a given pdf $f(p)$.   

We resort now to the expression \eqref{eq:rjoint-kl} of the achievable rate in terms of the relative entropy.
The problem of minimizing \eqref{eq:rjoint-kl} can be rewritten as
a double minimization, exactly like the Blahut-Arimoto algorithm
for the computation of the rate-distortion function
\cite{Blahut72}. The main difference is that our minimization problem
corresponds to a degenerate rate-distortion problem where the only
distortion constraint is that $\boldtheta \in \class{B-C}{c}$ (in the sense that $\boldtheta \in \class{B}{c}$ or $\boldtheta \in \class{C}{c}$ depending on the class of colluders). The reader is referred to~\cite{Blahut72} or \cite[Chapt.~13]{CT91} for a detailed presentation of the Blahut-Arimoto algorithm as we only explain its application to our model.

In a slight abuse of notation, let us denote the rhs of (\ref{eq:rjoint-kl}) by $R(\Prob{Y}{Y|P},\boldtheta)$. 
The worst collusion process is disclosed by iteratively minimizing over each argument of this function, keeping the other constant. Thus, each iteration is comprised of two steps:

\begin{enumerate}
	\item		In the first step of the $k$-th iteration, for a fixed law $\Prob{Y}{1|P=p}=q^{(k-1)}(p)$ whose expression complies with~\eqref{eq:ProbY}, we minimize $R(q^{(k-1)}(p),\boldtheta)$ over $\boldtheta$. 
Note that $R(q^{(k-1)}(p), \boldtheta)$ is convex in $\boldtheta$ because for fixed $p_0 \in [0,1]$, or equivalently for fixed $\Prob{\Sigma}{\sigma|P=p_0}$, the argument of the expectation in (9) is convex in $\boldtheta$ \cite[Th.~2.7.4]{CT91}, and the expectation of this function over $P$ is still convex. Hence, the minimization of $R(q^{(k-1)}(p), \boldtheta)$ amounts to canceling the $(c-1)$ partial derivatives ($\theta_0$ and $\theta_c$ are already fixed to 0 and 1, respectively). Notice that we also have to impose the constraint $\boldtheta \in \class{B-C}{c}$. Ignoring temporarily this constraint, we have
\begin{equation}
    \frac{\partial}{\partial \theta_\sigma}R(q^{(k-1)}(p),\boldtheta)
     =  \frac{1}{c}\expect{P}{\Prob{\Sigma}{\sigma|P=p}\left(\log\frac{\theta_\sigma}
    {1-\theta_\sigma}+\log\frac{1-q^{(k-1)}(p)}{q^{(k-1)}(p)}\right)}.
\end{equation}
By setting the last expression to 0, we obtain
\begin{equation}
    \theta_\sigma^{(k)} = \frac{1}{1+B^{(k)}(\sigma)},\; \sigma=1,\ldots,c-1,
    \label{eq:1StepRJoint}
\end{equation}
with
\begin{equation}
B^{(k)}(\sigma)= 2\,\hat{}\,{\left(\frac{\expect{P}{\Prob{\Sigma}{\sigma|P=p}\log\frac{1-q^{(k-1)}(p)}{q^{(k-1)}(p)}}}{\expect{P}{\Prob{\Sigma}{\sigma|P=p}}}\right)}.
\label{eq:BSigma}
\end{equation}
Note that $B^{(k)}(\sigma)$ is well defined because: 
	\begin{itemize}
		\item	$q^{(k-1)}(p)$ is a polynomial of degree $c$ which equals $0$ only for $p=0$ (resp. $q^{(k-1)}(p)=1$ only for $p=1$);
		\item	$\Prob{\Sigma}{\sigma|P=p}$ is also a polynomial that goes to zero for $p=0$ and $p=1$. Hence, by continuity, the numerator of \eqref{eq:BSigma} equals 0 for $p=0$ and $p=1$;
		\item	the denominator of \eqref{eq:BSigma} doesn't cancel as there exist $p\in]0,1[$ such that $f(p)>0$.
	\end{itemize}
Finally, Eq.~(\ref{eq:1StepRJoint}) is always between 0 and 1, showing
that the constraint $\boldtheta \in \class{B-C}{c}$ is actually inactive.

	\item		The second step of the $k$-th iteration consists in updating the
function $\Prob{Y}{1|P=p}$ in order to provide the next function $q^{(k)}(p)$ with respect to the new collusion model $\boldtheta^{(k)}$ found in the first step. This is done by finding the function $q^{(k)}(p)$ minimizing the functional $R(q(p),\boldtheta^{(k)})$.
Let us denote by $r(q(p))$ the integrand of (\ref{eq:rjoint-kl}) for a fixed $\boldtheta^{(k)}$. We create an extension of the derivative $r(q(p))$ in $q(p)$ by a Taylor expansion of the difference $R(q(p)+\epsilon(p),\boldtheta^{(k)})-R(q(p),\boldtheta^{(k)})=\expect{P}{\left.\frac{\partial r}{\partial q}\right|_{q(p)}\epsilon(p)}+\expect{P}{o(\epsilon(p))}$. The minimum is reached for a function $q^{(k)}(p)$ such that any perturbation $\epsilon(p)$ doesn't change the value of the functional at least up to the first order. In other words, it cancels $\left.\frac{\partial r}{\partial q}\right|_{q(p)}$. This leads to the following update:
\begin{equation}
q^{(k)}(p)=\sum_{\sigma=0}^c\theta_\sigma^{(k)}\Prob{\Sigma}{\sigma|P=p}.
\label{eq:2StepRJoint}
\end{equation}

\end{enumerate}

Very much like for the Blahut-Arimoto algorithm, convergence to the worst collusion channel is monotonic, i.e. every step decreases the objective function. Since the optimization problem is convex, convergence to the optimal $\boldtheta$ is assured. 

Fig.~\ref{fig:rates-joint} shows the resulting achievable rate $R_{\joint}^{C}(c)$ when this algorithm is applied to the Tardos and probabilistic BS codes. We observe two surprising facts:

\begin{proposition} \label{prop:ClassBJoint1}
    For a symmetric $f(p)$ (being it a continuous pdf or a discrete pmf), the absolute minimum of the rate in $\boldtheta \in \class{C}{c}$ is achieved for a Class-B collusion attack (i.e. $\theta_\sigma^* = 1 - \theta_{c-\sigma}^*,\,\forall\sigma\in\{0,\ldots,c\}$), hence $R_{\joint}^{B}(c)=R_{\joint}^{C}(c)$.
\end{proposition}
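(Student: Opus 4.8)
The plan is to combine the symmetry of $f(p)$ with the convexity of $R_{\joint}(\boldtheta)$ in $\boldtheta$ already invoked in Section~\ref{sub:JointC} (the integrand of \eqref{eq:rjoint-kl} is convex in $\boldtheta$ for each fixed $p$ by convexity of mutual information in the channel, and expectation over $P$ preserves this). For a vector $\boldtheta\in\class{C}{c}$ define its \emph{mirror} $\tilde{\boldtheta}$ by $\tilde\theta_\sigma = 1-\theta_{c-\sigma}$. First I would observe that mirroring maps $\class{C}{c}$ into itself: $\tilde\theta_0 = 1-\theta_c = 0$, $\tilde\theta_c = 1-\theta_0 = 1$, and $\tilde\theta_\sigma\in[0,1]$; moreover $\boldtheta\in\class{B}{c}$ if and only if $\tilde{\boldtheta}=\boldtheta$.

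The key lemma is $R_{\joint}(\tilde{\boldtheta})=R_{\joint}(\boldtheta)$. I would prove this directly from \eqref{eq:RJoint} using three elementary facts: (i) $h_b(x)=h_b(1-x)$; (ii) the Bernstein polynomials satisfy $\Prob{\Sigma}{\sigma|P=p}=\Prob{\Sigma}{c-\sigma|P=1-p}$; and (iii) $f(p)=f(1-p)$, so $\expect{P}{g(P)}=\expect{P}{g(1-P)}$ for any integrable $g$ (this holds verbatim for a symmetric discrete pmf as well). For the second term of \eqref{eq:RJoint}, substituting $\tilde\theta_\sigma$, re-indexing $\sigma\mapsto c-\sigma$, and then using (ii)--(iii) to replace $\expect{P}{\Prob{\Sigma}{c-\sigma|P}}$ by $\expect{P}{\Prob{\Sigma}{\sigma|P}}$ returns the original sum. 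For the first term, one computes $\sum_{\sigma}\tilde\theta_\sigma\Prob{\Sigma}{\sigma|P=p} = 1-\sum_{\sigma}\theta_\sigma\Prob{\Sigma}{c-\sigma|P=p}$, so that by (i) and then (ii)--(iii) the quantity $\expect{P}{h_b(\cdot)}$ is brought back to its original value.

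With the symmetry lemma established I would conclude as follows. By continuity of $R_{\joint}$ and compactness of $\class{C}{c}$ a minimizer $\boldtheta^*$ exists, and by the lemma $\tilde{\boldtheta^*}$ is also a minimizer. Since $R_{\joint}$ is convex in $\boldtheta$ and $\class{C}{c}$ is convex, the midpoint $\bar{\boldtheta}=\tfrac12(\boldtheta^*+\tilde{\boldtheta^*})$ is feasible and $R_{\joint}(\bar{\boldtheta})\le\tfrac12 R_{\joint}(\boldtheta^*)+\tfrac12 R_{\joint}(\tilde{\boldtheta^*})=R_{\joint}^{C}(c)$, hence $\bar{\boldtheta}$ is itself a minimizer. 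But $\bar\theta_\sigma = \tfrac12(\theta^*_\sigma + 1-\theta^*_{c-\sigma})$ satisfies $\bar\theta_{c-\sigma}=1-\bar\theta_\sigma$, i.e. $\bar{\boldtheta}\in\class{B}{c}$ (and automatically $\bar\theta_{c/2}=1/2$ for even $c$). Thus the minimum over $\class{C}{c}$ is attained at a Class-B attack, so $R_{\joint}^{C}(c)=R_{\joint}(\bar{\boldtheta})\ge R_{\joint}^{B}(c)$; together with $R_{\joint}^{C}(c)\le R_{\joint}^{B}(c)$ (the joint-decoder analogue of \eqref{eq:Inclusion}, from $\class{B}{c}\subseteq\class{C}{c}$) this gives $R_{\joint}^{B}(c)=R_{\joint}^{C}(c)$.

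The only delicate point is the bookkeeping in the symmetry lemma, where the reflection of the index $\sigma\mapsto c-\sigma$ and the change of variable $p\mapsto 1-p$ must be tracked simultaneously; everything else is a standard ``symmetrize the optimizer by convexity'' argument. Note also that the conclusion does not require uniqueness of the minimizer: it suffices that \emph{some} optimal $\boldtheta$ be Class-B, which the midpoint $\bar{\boldtheta}$ provides.
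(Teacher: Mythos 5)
Your proof is correct, but it takes a genuinely different route from the paper's. The paper proves the proposition through the Blahut--Arimoto-type machinery it sets up for computing the worst attack: two lemmas show that, for symmetric $f(p)$, the quantity $B^{(k)}(\sigma)$ in the update satisfies $B^{(k)}(\sigma)=1/B^{(k)}(c-\sigma)$ whenever $q^{(k-1)}(p)=1-q^{(k-1)}(1-p)$, so that $\class{B}{c}$ is closed under the iteration; since the algorithm converges to the global minimum from any initialization (in particular from a Class-B starting point), a Class-B minimizer exists. You instead prove directly from \eqref{eq:RJoint} that the rate is invariant under the mirror map $\tilde\theta_\sigma=1-\theta_{c-\sigma}$ (your bookkeeping with $h_b(x)=h_b(1-x)$, $\Prob{\Sigma}{\sigma|P=p}=\Prob{\Sigma}{c-\sigma|P=1-p}$ and $f(p)=f(1-p)$ checks out), and then symmetrize: by convexity of $R_{\joint}(\boldtheta)$ in $\boldtheta$ (which the paper itself asserts via the convexity of mutual information in the channel, the transition probabilities being affine in $\boldtheta$), the midpoint of a minimizer and its mirror is again a minimizer and lies in $\class{B}{c}$. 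The underlying $p\mapsto 1-p$, $\sigma\mapsto c-\sigma$ symmetry computation is essentially the same in both arguments, but your conclusion mechanism is a standard convex-symmetrization step that is self-contained and does not rely on convergence properties of the iterative algorithm, whereas the paper's route has the side benefit of showing that the numerical procedure can be run entirely inside Class B. Your remark that only existence (not uniqueness) of a Class-B minimizer is needed is also the right reading of the statement.
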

\begin{proof}
See Appendix~\ref{sub:ProofPropClassBJoint1}.
\end{proof}
This proposition is contained in~\cite[Lemma 4.1]{Amiri2009:High} (although its proof is not given therein) which states that the capacity is achieved with a symmetric $f(p)$ and a ``strongly symmetric channel'', i.e. a Class-B attack.
Yet, these authors were able to `{\it computationally}' find out the worst collusion attack only  for the capacity-achieving pdf $f(p)$ whereas we provide here a powerful solving algorithm for any pdf. On the other hand, they are able to find the capacity-achieving pdf. Yet, it is a discrete pmf with strong dependency on the  collusion size, which is not known in practice. The worst case attacks are given in Table~\ref{tab:TardosjointB} for small collusion sizes and the Tardos pdf.

The second fact is illustrated in Figure~\ref{fig:rate-joint-class-b-c}(a) showing that the difference $\theta_\sigma^* - \sigma/c$ for the Tardos pdf is very small, especially for large $c$. This means that the optimal Class-B-C strategy for the Tardos pdf is surprisingly very close to the Class-A attack, a fact reflected in Figure~\ref{fig:rates-joint}, where one can see that the rates under Class-A-B-C attacks are indistinguishable for the Tardos pdf. Interestingly, it has been mentioned in \cite{Huang09} that the Class-A attack seems to be asymptotically optimal when the optimal $f(p)$ (which is asymptotically very close to the Tardos $f(p)$) is used. Nevertheless, this does not mean that $\theta_\sigma^*$ strictly converges to $\sigma/c$, neither for the Tardos $f(p)$ nor for other arbitrary time-sharing pdfs. As shown in Figure~\ref{fig:rate-joint-class-b-c}(b), for instance, the worst Class-B attack for the probabilistic BS code diverges from Class-A as $c$ is increased. Indeed, the achievable rates plotted in Figure~\ref{fig:rates-joint} for this code under Class A and Class-B-C are different. 

In conclusion, this section lowers the importance of the colluders classification introduced in Sec.~\ref{subsub:CollClass} for the Tardos and probabilistic BS codes. Surprisingly, there is no need to distinguish Class-B and Class-C, and experimentally, the worst collusion attack against the Tardos code seems to be very close to the Class-A. In the light of Prop.~\ref{prop:JointClassA}, this would mean that the achievable rate under Class-B of the Tardos code is asymptotically converging to $2\ln(2)/c^2$, just like the capacity for many pirates given in~\cite[Sec. 4.2]{Amiri2009:High}, which is also plotted in Fig.~\ref{fig:rates-joint} for reference. In this regard, it is worth recalling that the capacity-achieving time-sharing distribution depends on the number of colluders, whereas the Tardos and flat pdf remain independent of $c$.

\begin{figure*}[t] 
    \begin{minipage}[b]{0.49\textwidth}
    		\psfrag{prov}[c][]{$\theta_\sigma^* - \sigma/c$}
        \centerline{\epsfig{figure = 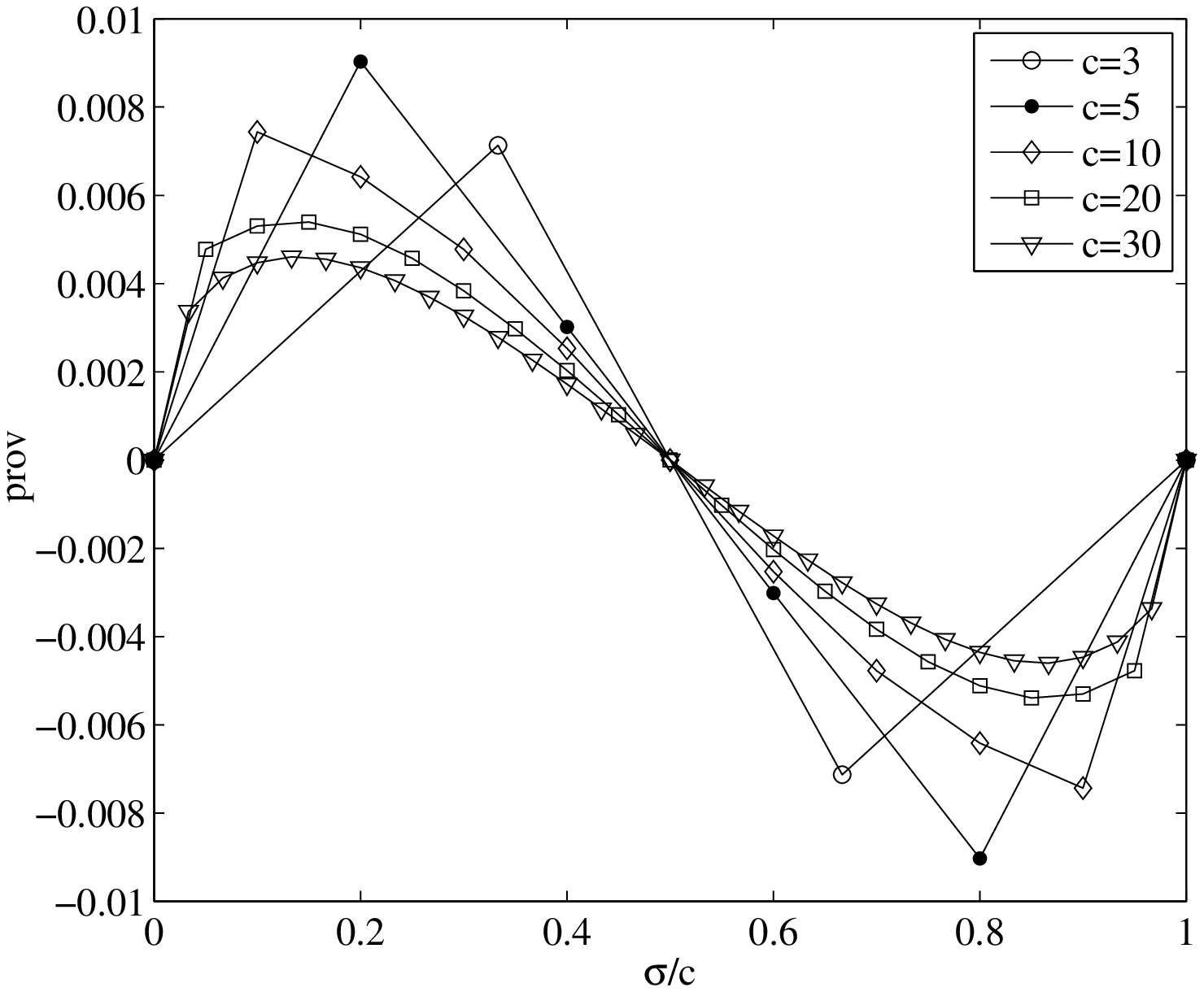,width=1.0\textwidth}}
        \centerline{(a) Tardos}
    \end{minipage}
    \begin{minipage}[b]{0.49\textwidth}
        \psfrag{prov}[c][]{$\theta_\sigma^* - \sigma/c$}
        \centerline{\epsfig{figure = 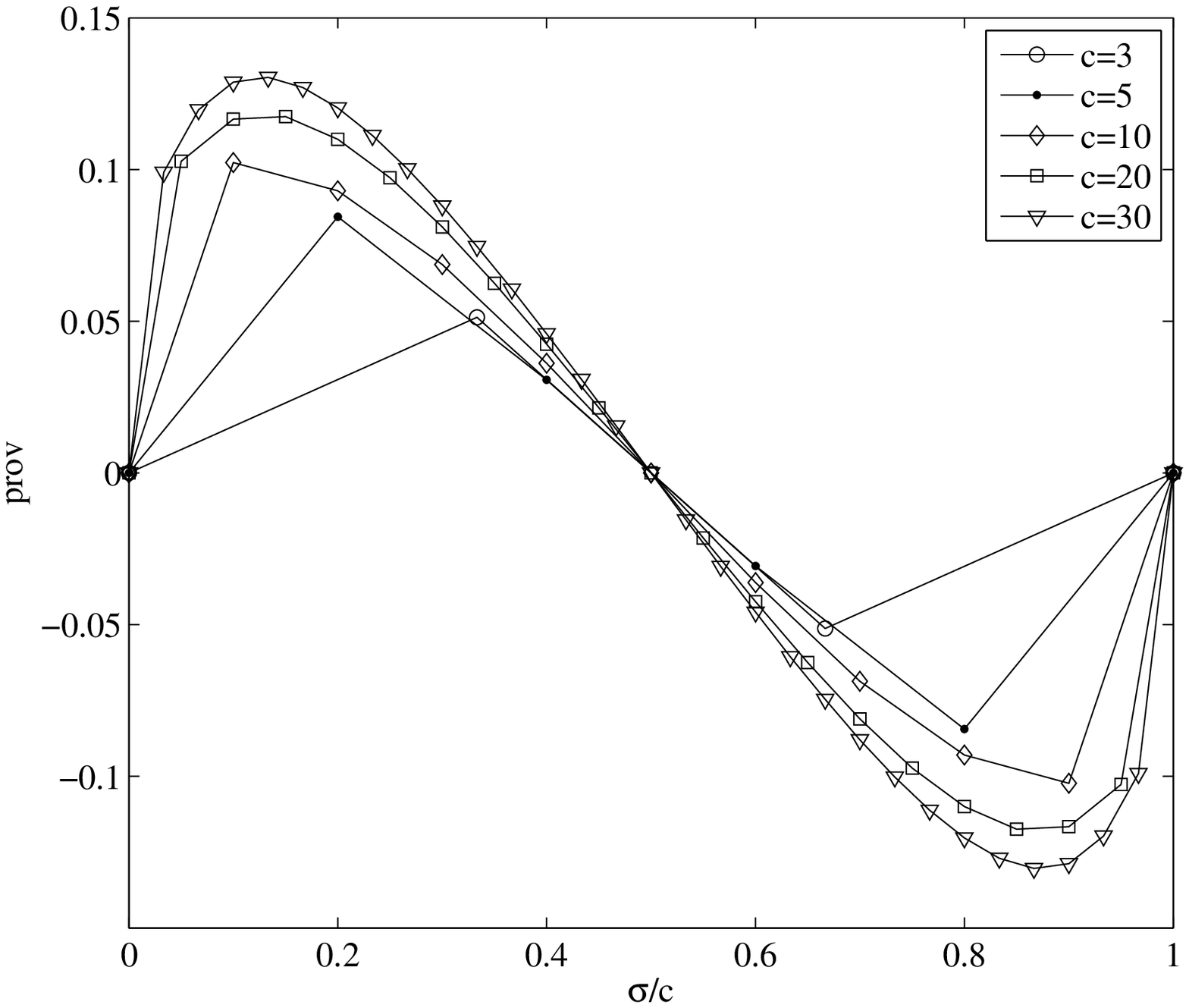, width=1.0\textwidth}}
        \centerline{(b) Probabilistic BS}
    \end{minipage}
    \caption{Worst Class-B-C collusion attack against the joint decoder.}
    \label{fig:rate-joint-class-b-c}
\end{figure*}

\begin{table}
\begin{center}
\caption{Worst collusion attacks, joint decoder, Tardos pdf, Class-C.}
\label{tab:TardosjointB}
  \begin{tabular}{|c|l|c|}
    \hline
    $c$ & $\boldtheta^*$ & $R_{\joint}^{B}(c)$ in bits\\
    \hline
    \hline
    2 & $(0,0.5,1)$ & 0.153 \\
    \hline
    3 & $(0,0.340,0.660,1)$ & 0.071 \\
    \hline
    4 & $(0,0.260,0.5,0.741,1)$ & 0.041 \\
    \hline
    5 & $(0,0.209,0.403,0.597,0.791,1)$ & 0.026 \\
    \hline
    6 & $(0,0.176,0.338,0.5,0.662,0.824,1)$ & 0.019 \\
    \hline
    7 & $(0,0.151, 0.291, 0.431,0.569,0.709, 0.849,1)$ & 0.014 \\
    \hline
    8 & $(0,0.133, 0.256, 0.378, 0.5, 0.622, 0.744, 0.867,1)$ & 0.011 \\
    \hline
    9 & $(0,0.119, 0.229, 0.338, 0.446, 0.554, 0.662, 0.771, 0.881,1)$ & 0.008 \\ \hline 
  \end{tabular}
  \end{center}
\end{table}

\subsection{Colluders Class-D}	\label{sub:JointD}
Remember that Class D colluders have disclosed
the exact values of $\{p_i\}_{i\in[m]}$, so that their strategy
is, a priori, no longer stationary, but on the contrary dependent
on $p$. The colluders minimize the achievable rate of the code by finding
the worst collusion attack $\boldtheta^*(p_i)$ minimizing
$I(Y;\Sigma|P=p_i)$.

\begin{proposition}\label{prop:jointD}
The worst case collusion strategy minimizing the rate of the joint decoder is given by $\boldtheta^*(p) = [0, \theta^*(p),\ldots,\theta^*(p),1]$, with
\begin{eqnarray}
    \theta^*(p) = \frac{p^c}{p^c+(1-p)^c}.
    \label{eq:optimal-attack-joint-knownp-final}
\end{eqnarray}
\label{th:worstattack-joint-knownp}
\end{proposition}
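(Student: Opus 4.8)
The plan is to exploit the fact that Class-D colluders may tune $\boldtheta(p)$ independently for every value of $p$. Since the rate for Class D is $\frac1c\expect{P}{I(Y;\Sigma|P=p,\boldcaptheta=\boldtheta(p))}$, a positively weighted average of the per-$p$ mutual informations, and since the only constraint coupling the components of $\boldtheta(p)$ is $\boldtheta(p)\in\class{C}{c}$ (which does not couple different $p$'s), minimizing the rate is equivalent to minimizing the integrand $I(Y;\Sigma|P=p)$ over $\boldtheta(p)\in\class{C}{c}$ separately for each $p$. Values of $p$ with $f(p)=0$, and the endpoints $p\in\{0,1\}$, do not matter: at $p=0$ or $p=1$ the mutual information vanishes for every $\boldtheta$ and the claimed formula returns $\theta^*=0$ or $1$ accordingly. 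Hence the whole proof reduces to a finite-dimensional minimization in the $c-1$ free variables $\theta_1,\dots,\theta_{c-1}$ (recall $\theta_0=0$, $\theta_c=1$), for each fixed $p\in(0,1)$.

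For fixed $p$, the objective $I(Y;\Sigma|P=p)=h_b\!\left(\sum_{\sigma=0}^{c}\theta_\sigma\Prob{\Sigma}{\sigma|P=p}\right)-\sum_{\sigma=0}^{c}\Prob{\Sigma}{\sigma|P=p}\,h_b(\theta_\sigma)$ is convex in $\boldtheta$; this is precisely the convexity of the integrand of \eqref{eq:RJoint} already invoked in Section~\ref{sub:JointC} (\cite[Th.~2.7.4]{CT91}). Therefore an interior stationary point is automatically a global minimum, and it suffices to cancel the $c-1$ partial derivatives. Writing $q\triangleq\Prob{Y}{1|P=p}=\sum_\sigma\theta_\sigma\Prob{\Sigma}{\sigma|P=p}$ and using $h_b'(x)=\log\frac{1-x}{x}$, one gets for $\sigma=1,\dots,c-1$
\[
\frac{\partial}{\partial\theta_\sigma}\,I(Y;\Sigma|P=p)=\Prob{\Sigma}{\sigma|P=p}\left(\log\frac{1-q}{q}-\log\frac{1-\theta_\sigma}{\theta_\sigma}\right).
\]
Since $\Prob{\Sigma}{\sigma|P=p}=\nchoosek{c}{\sigma}p^{\sigma}(1-p)^{c-\sigma}>0$ on $(0,1)$, setting every such derivative to zero forces $\theta_\sigma=q$ for all $\sigma\in\{1,\dots,c-1\}$ (the map $x\mapsto(1-x)/x$ being injective on $(0,1)$). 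In particular the free components all share one common value $\theta$.

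Substituting $\theta_1=\dots=\theta_{c-1}=\theta$, $\theta_0=0$, $\theta_c=1$ into the definition of $q$ yields the self-consistency relation $\theta=q=\theta\bigl(1-(1-p)^{c}-p^{c}\bigr)+p^{c}$, whose unique solution is $\theta=\dfrac{p^{c}}{p^{c}+(1-p)^{c}}=\theta^*(p)$. This value lies strictly between $0$ and $1$ for $p\in(0,1)$, so the stationary point $\boldtheta^*(p)=[0,\theta^*(p),\dots,\theta^*(p),1]$ is interior; by convexity it is the global minimizer of $I(Y;\Sigma|P=p)$ over $\class{C}{c}$, and, exactly as in Section~\ref{sub:JointC}, this shows a posteriori that the box constraint is inactive. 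This proves the proposition.

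The computation itself is routine; the only genuine care needed is in the logical scaffolding — justifying that the minimization decouples over $p$, and that convexity promotes the unique interior stationary point to a true minimum rather than a saddle, so that one need not inspect the $2^{c-1}$ vertices of the feasible box. A minor loose end is the treatment of $p\in\{0,1\}$ and of the $p$ with $f(p)=0$, settled by continuity as above. As a byproduct one reads off $r_{\joint}^{D}(c,p)=\frac{1}{c}\bigl(p^{c}+(1-p)^{c}\bigr)\,h_b\!\bigl(\theta^*(p)\bigr)$, whose exponentially small value for $p\neq 1/2$ explains the collapse of $R_{\joint}^{D}(c)$ compared with the other classes.
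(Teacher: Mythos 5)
Your proof is correct and follows essentially the same route as the paper's: cancel the $c-1$ partial derivatives of $I(Y;\Sigma|P=p)$, deduce $\theta_\sigma=\Prob{Y}{1|P=p}$ for all $\sigma\in[c-1]$, and solve the resulting fixed-point equation to obtain $\theta^*(p)=p^c/(p^c+(1-p)^c)$. The only difference is that you make explicit the per-$p$ decoupling and the convexity argument guaranteeing the interior stationary point is the global minimum, points the paper leaves implicit (relying on the convexity already invoked in its Blahut--Arimoto section).
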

\begin{IEEEproof}
    See Appendix~\ref{app:worstattack-joint-knownp}.
\end{IEEEproof}

The worst case attack is not constant along the pirated sequence
if time-sharing has been done. Note also that it depends on $c$,
the number of colluders. Interestingly, as $c$ grows, the worst
case attack amounts to a simple deterministic strategy, which
depends only on whether $p$ is larger or smaller than $1/2$, as
illustrated in Fig.~\ref{fig:opt-joint-knownp}(a). It summarizes
as selecting the All `1' (resp. All `0') strategy when $p>1/2$ (resp.
$p<1/2$) and the `coin-flip' strategy (cf. Sect.~\ref{subsub:CollClass}) when $p=1/2$. The resulting $r_{\joint}^{D}(c,p)$ is shown in
Fig.~\ref{fig:opt-joint-knownp}(b) for different values of $c$ as
a function of $p$. Fig.~\ref{fig:rates-joint} shows the
achievable rate $R_{\joint}^D(c)=\expect{P}{r_{\joint}^{D}(c,p)}$ for Tardos and probabilistic BS codes, where we can see that both rapidly decrease with similar slope. It is very interesting to notice that, although the colluders have disclosed the secret of the code, they cannot set the rate to 0. Nevertheless, the following proposition shows that the capacity vanishes exponentially fast as the number of colluders is increased.

\begin{proposition}
\label{th:joint_capacity_classD}
For any value of $c\geq 2$, capacity under Class-D collusion is achieved with $f(p)=\delta(p-1/2)$, and it is given in bits by 
\begin{equation}
C^{D}(c) = \frac{1}{c2^{c-1}}.
\label{eq:CapaJointD}
\end{equation} 
\end{proposition}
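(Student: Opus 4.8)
The plan is to turn the capacity into a scalar maximization and then solve it. Class-D colluders observe the time-sharing value at each index, so the attack is optimized separately at each $p$, and by Prop.~\ref{prop:jointD} the rate of a code with time-sharing law $f$ is $R_{\joint}^{D}(c)=\expect{P}{r_{\joint}^{D}(c,p)}$ with $r_{\joint}^{D}(c,p)=c^{-1}I(Y;\Sigma|P=p,\boldcaptheta=\boldtheta^{*}(p))$ and $\boldtheta^{*}(p)$ given by \eqref{eq:optimal-attack-joint-knownp-final}. Since $R_{\joint}^{D}(c)$ is linear in $f$ and $p\mapsto r_{\joint}^{D}(c,p)$ is continuous on $[0,1]$ (with value $0$ at the endpoints, because $h_b(0)=h_b(1)=0$), the maximum of $R_{\joint}^{D}(c)$ over all time-sharing laws is attained at an extreme point of the simplex of probability measures, i.e. at $f(p)=\delta(p-p^{*})$ with $p^{*}\in\arg\max_{p\in[0,1]}r_{\joint}^{D}(c,p)$. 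Hence $C^{D}(c)=\max_{p\in[0,1]}r_{\joint}^{D}(c,p)$, and it remains to locate the maximizer and compute the value.

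Next I would obtain a closed form for $r_{\joint}^{D}(c,p)$. Substituting $\boldtheta^{*}(p)=[0,\theta^{*}(p),\ldots,\theta^{*}(p),1]$ into \eqref{eq:RJoint} and using $\Prob{\Sigma}{0|P=p}=(1-p)^{c}$, $\Prob{\Sigma}{c|P=p}=p^{c}$, $h_b(\theta^{*}_{0})=h_b(\theta^{*}_{c})=0$, one checks after a short manipulation that $\Prob{Y}{1|P=p}$ collapses to $\theta^{*}(p)$ itself, so that the two entropy terms combine into
\[
r_{\joint}^{D}(c,p)=\frac{1}{c}\bigl(p^{c}+(1-p)^{c}\bigr)\,h_b\!\left(\frac{p^{c}}{p^{c}+(1-p)^{c}}\right).
\]
(Equivalently: $Y$ depends on $\Sigma$ only through the sufficient statistic indicating whether $\Sigma=0$, $1\le\Sigma\le c-1$, or $\Sigma=c$; mutual information with this statistic equals $I(Y;\Sigma|P=p)$ and yields the same expression directly.) Evaluating at $p=1/2$ gives $r_{\joint}^{D}(c,1/2)=c^{-1}2^{1-c}h_b(1/2)=1/(c2^{c-1})$, the claimed value.

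Finally I would show that $p=1/2$ is the global maximizer. The expression is symmetric under $p\leftrightarrow 1-p$, so it suffices to work on $[1/2,1]$. The clean route is the bound $h_b(q)\le 2\sqrt{q(1-q)}$ (in bits, with equality iff $q=1/2$): applied with $q=p^{c}/(p^{c}+(1-p)^{c})$ it gives $r_{\joint}^{D}(c,p)\le \tfrac{2}{c}\sqrt{p^{c}(1-p)^{c}}=\tfrac{2}{c}\bigl(p(1-p)\bigr)^{c/2}\le \tfrac{2}{c}\,4^{-c/2}=\tfrac{1}{c2^{c-1}}$, using $p(1-p)\le 1/4$; both inequalities are tight exactly at $p=1/2$, which pins the maximum there. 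Combined with the reduction of the first paragraph this proves the proposition: $C^{D}(c)=1/(c2^{c-1})$, achieved by $f(p)=\delta(p-1/2)$.

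The step I expect to be the real obstacle is justifying the tight entropy estimate $h_b(q)\le 2\sqrt{q(1-q)}$: it is correct and touches the curve $2\sqrt{q(1-q)}$ only at $q=1/2$, but it does not drop out of the crude inequality $\ln x\le x-1$ and needs its own (elementary but non-obvious) argument. If one prefers to avoid it, the alternative is a direct sign analysis of $\partial_{p}r_{\joint}^{D}(c,p)$: this derivative is antisymmetric about $p=1/2$, vanishes there, and one shows it is positive on $(0,1/2)$ and negative on $(1/2,1)$, so that $r_{\joint}^{D}(c,\cdot)$ rises from $0$ to its peak at $1/2$ and falls back, giving the same conclusion.
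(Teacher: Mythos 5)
Your proof is correct, and it reaches the same intermediate object as the paper: your closed form $r_{\joint}^{D}(c,p)=\frac{1}{c}\left(p^{c}+(1-p)^{c}\right)h_b\!\left(p^{c}/(p^{c}+(1-p)^{c})\right)$ is exactly the paper's expression $\frac{1}{c}\left(p^c\log\left(\frac{(1-p)^c}{p^c}+1\right)+(1-p)^c\log\left(\frac{p^c}{(1-p)^c}+1\right)\right)$ rewritten, and your observation that $\Prob{Y}{1|P=p}$ collapses to $\theta^*(p)$ is the fixed-point identity already used in the paper's derivation of the worst attack. Where you genuinely diverge is the maximization step: the paper computes $r^{D\prime}(c,p)$, bounds it below using $(1-p)^{-1}<2$, $p^{-1}>2$ and the positivity of $(1-x)\log(1-x)-x\log x$ on $(0,1/2)$, and concludes strict monotonic increase on $(0,1/2)$, hence a unique maximum at $p=1/2$; you instead invoke the global bound $h_b(q)\le 2\sqrt{q(1-q)}$ (bits) together with $p(1-p)\le 1/4$ to get $r_{\joint}^{D}(c,p)\le \frac{2}{c}(p(1-p))^{c/2}\le \frac{1}{c2^{c-1}}$ with attainment only at $p=1/2$. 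Your route is shorter and yields the capacity value and its achiever in one stroke, but, as you note, it rests on the entropy--Bhattacharyya inequality, which is true and standard (it follows, e.g., from Topsøe's bound $h_b(q)\le(4q(1-q))^{1/\ln 4}$) but requires its own proof; the paper's derivative-sign analysis, which you list as a fallback, is self-contained and additionally gives strict unimodality of $r_{\joint}^{D}(c,\cdot)$. Two small remarks: equality in $h_b(q)\le 2\sqrt{q(1-q)}$ also holds at $q\in\{0,1\}$, not only at $q=1/2$, though this is harmless since the second inequality alone pins $p=1/2$; and your explicit extreme-point/linearity justification of $C^{D}(c)=\max_p r_{\joint}^{D}(c,p)$ makes precise a reduction the paper uses only implicitly.
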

\begin{IEEEproof}
See Appendix~\ref{app:joint_capacity_classD}   
\end{IEEEproof}
According to~\cite{Moulin08universal}, \eqref{eq:CapaJointD} is also the capacity reached by a code which does not perform time-sharing. Thus, as far as a joint decoder is concerned, time-sharing under Class-D collusion does not bring any gain in terms of capacity.
The comparison between this exponentially vanishing capacity (with exponent $1-c$) and the achievable rate under Class-A, which is only decreasing in $1/c^2$, illustrates the dramatic benefits of keeping secret the time-sharing sequence.

\begin{figure*}[t] 
    \begin{minipage}[b]{0.49\textwidth}
        \centerline{\epsfig{figure = 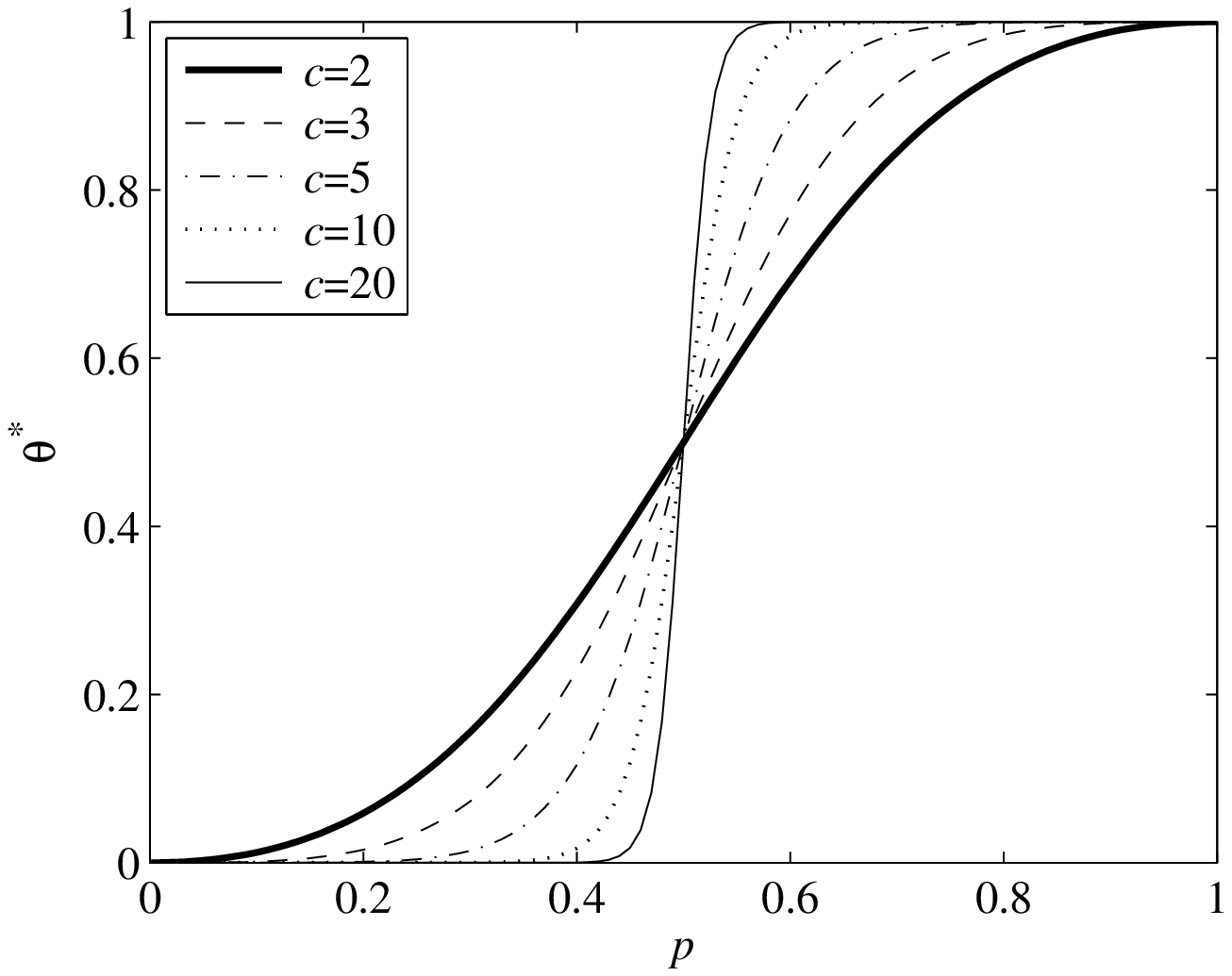, width=1.0\textwidth}}
        \centerline{(a)}
    \end{minipage}
    \begin{minipage}[b]{0.49\textwidth}
    \psfrag{prov}[c][]{$r_{\joint}^{D}(c,p)$ (bits)}    		
        \centerline{\epsfig{figure = 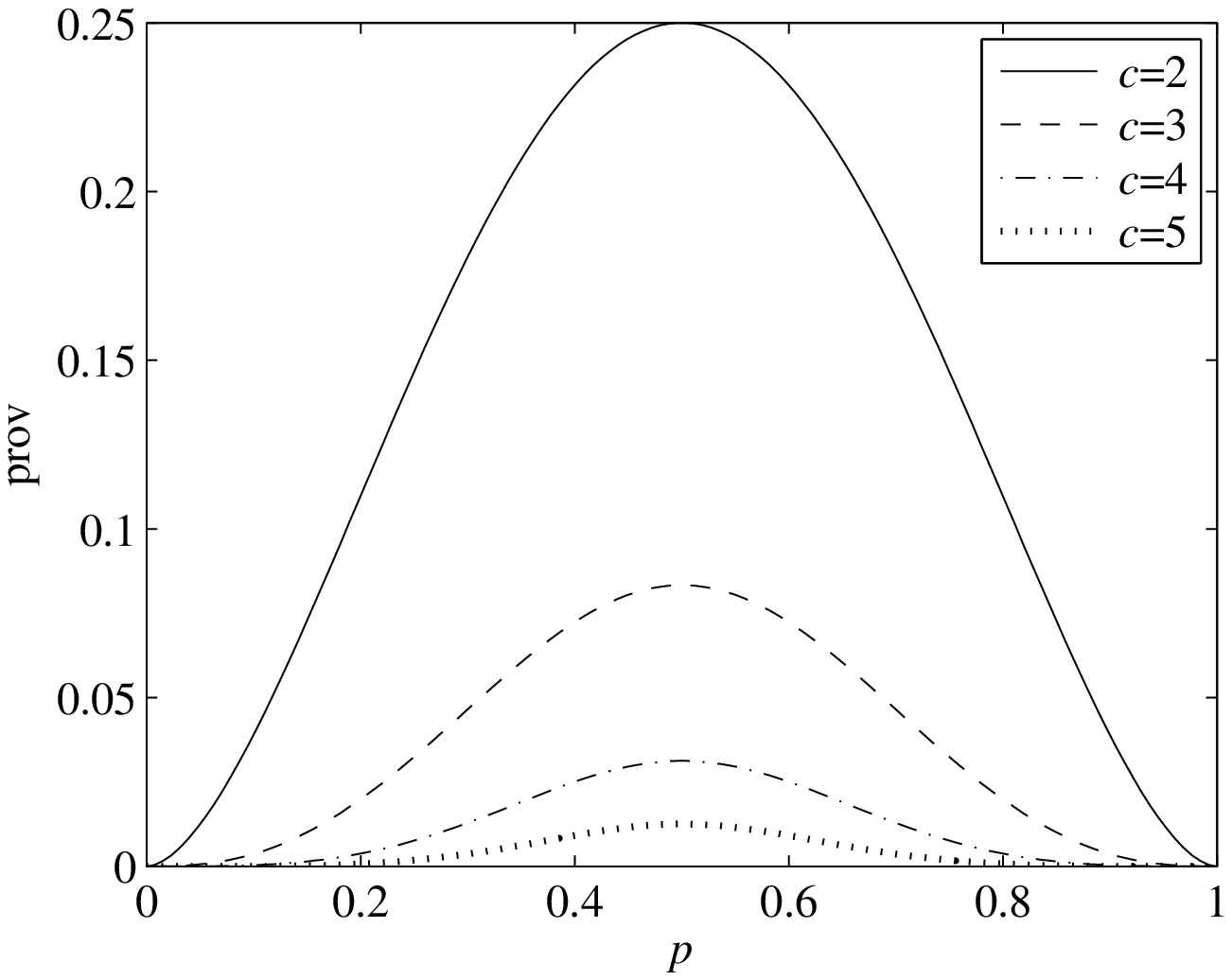, width=1.0\textwidth}}
        \centerline{(b)}
    \end{minipage}
    \caption{Worst case Class-D attack against the joint decoder: parameter of the worst case attack (a),
    and $r_{\joint}^{D}(c,p)$ (b).}
    \label{fig:opt-joint-knownp}
\end{figure*}

\section{The simple decoder}
\label{sec:simple}

\begin{figure*}[t] 
    \begin{minipage}[b]{1.0\textwidth}
        \centerline{\epsfig{figure = 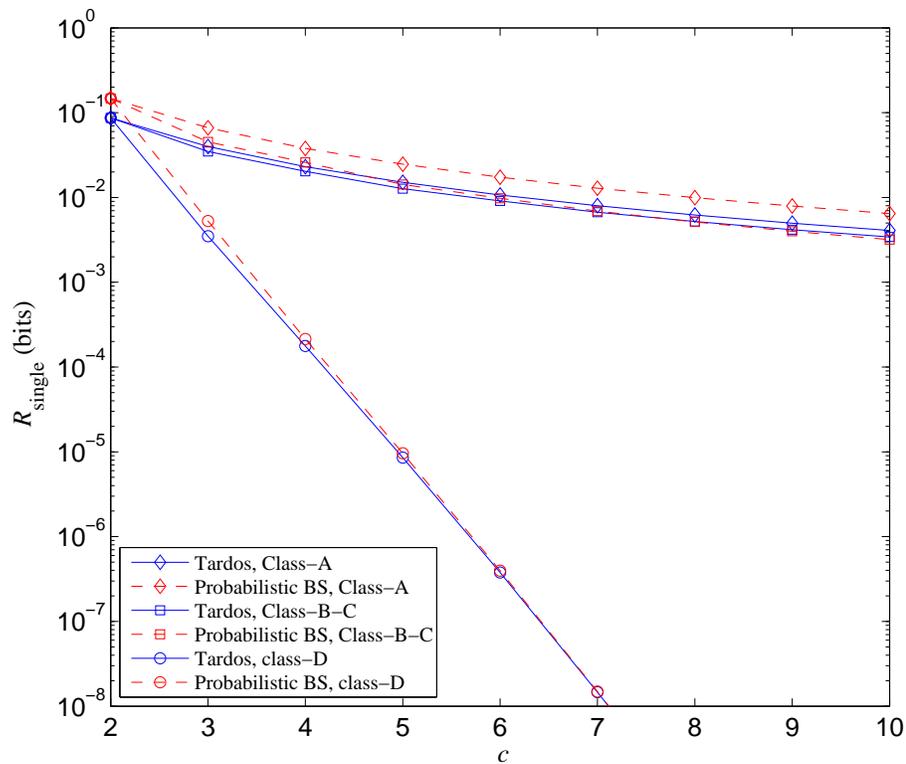, width=0.75\textwidth}}
    \end{minipage}
    \caption{Achievable rates for the simple decoder against different classes of collusion, for Tardos and probabilistic BS codes.}
    \label{fig:rate-simple}
\end{figure*}

\subsection{Colluders Class-A}
\label{sub:simpleA}

\begin{proposition}
A Class-A collusion produces the following achievable rate:
\begin{equation}
R_{\simple}^{A}(c)=\expect{P}{h_b(P)}-\expect{P}{Ph_b(P+(1-P)/c)+(1-P)h_b(P(1-1/c))}
\end{equation}
\end{proposition}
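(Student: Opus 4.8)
The plan is to start from the mutual-information expression for the simple decoder, (eq:rate-simple-decoder), namely $R_{\simple}(\boldtheta) = \expect{P}{H(Y|P=p)} - \expect{P}{H(Y|X,P=p)}$, and evaluate each term for the unique Class-A attack $\theta_\sigma = \sigma/c$. Since $\class{A}{c}=\{c^{-1}(0,1,\ldots,c)^T\}$ is a singleton, the minimization in (eq:AchievRate) is vacuous and $R_{\simple}^{A}(c)=R_{\simple}(\boldtheta)$ for this particular $\boldtheta$, so it only remains to compute the two entropy terms.

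For the first term, I would invoke the Class-A result already established for the joint decoder, namely that $\Prob{Y}{1|P=p}=p$ for any collusion size. Hence $Y|P=p$ is Bernoulli$(p)$, $H(Y|P=p)=h_b(p)$, and $\expect{P}{H(Y|P=p)}=\expect{P}{h_b(P)}$, which is the first term of the claimed identity.

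For the second term I would condition on the binary variable $X$, whose marginal law given $P=p$ is $\Prob{X}{1|P=p}=p$ by construction of the code. Thus
\begin{equation}
H(Y|X,P=p)=p\,h_b\!\big(\Prob{Y}{1|X=1,P=p}\big)+(1-p)\,h_b\!\big(\Prob{Y}{1|X=0,P=p}\big).
\nonumber
\end{equation}
The two conditional probabilities are then supplied directly by Prop.~\ref{prop:CondProb}: plugging $\Prob{Y}{1|P=p}=p$, so that $\tfrac{\partial}{\partial p}\Prob{Y}{1|P=p}=1$, gives $\Prob{Y}{1|X=1,P=p}=p+(1-p)/c$ and $\Prob{Y}{1|X=0,P=p}=p(1-1/c)$. (Equivalently, one checks these two values directly from (eq:CondProb1)--(eq:CondProb0) with $\theta_k=k/c$, using $k\binom{c-1}{k-1}=(c-1)\binom{c-2}{k-2}+\binom{c-1}{k-1}$ or simply $\tfrac{k}{c}\binom{c-1}{k-1}$ rearranged, together with the binomial theorem.) Substituting and taking the expectation over $P$ yields $\expect{P}{H(Y|X,P=p)}=\expect{P}{P\,h_b(P+(1-P)/c)+(1-P)\,h_b(P(1-1/c))}$, which is exactly the subtracted term in the statement.

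There is essentially no hard step here; the only points requiring care are the correct sign and structure when invoking Prop.~\ref{prop:CondProb}, and — if a self-contained derivation is preferred over citing the earlier propositions — confirming the two small Bernstein-type sums that evaluate $\Prob{Y}{1|P=p}$, $\Prob{Y}{1|X=1,P=p}$ and $\Prob{Y}{1|X=0,P=p}$. Everything else is bookkeeping, and the result follows by combining the two computed terms.
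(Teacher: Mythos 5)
Your proposal is correct and follows essentially the same route as the paper: the paper's proof likewise combines Prop.~\ref{prop:CondProb} with $\Prob{Y}{1|P=p}=p$ from the Class-A analysis to get $\Prob{Y}{1|X=1,P=p}=p+(1-p)/c$ and $\Prob{Y}{1|X=0,P=p}=p(1-1/c)$, and plugs them into \eqref{eq:rate-simple-decoder}. Your version merely spells out the intermediate steps (the singleton class, the entropy decomposition with $\Prob{X}{1|P=p}=p$) that the paper leaves implicit.
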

\begin{IEEEproof}
Prop.~\ref{prop:CondProb} and \eqref{eq:ProbYClassA} yield that $\Prob{Y}{1|X=1,P=p}=p+(1-p)/c$ and $\Prob{Y}{1|X=0,P=p}=p-p/c$. These expressions are then plugged in~\eqref{eq:rate-simple-decoder}.    
\end{IEEEproof}

Fig.~\ref{fig:rate-simple} shows the achievable rate (obtained through numerical integration) against the collusion size for the Tardos and the probabilistic BS codes. As can be seen, the rate for the probabilistic BS code against Class-A is higher than that of the Tardos code. The reason is in the shape of $r_{\simple}^A(c,p)$, which is plotted in Fig.~\ref{fig:rate-simple-class-a} for different values of $c$. This figure suggests that $r_{\simple}^A(c,p)$ achieves its global maximum at $p=1/2$. According to this, capacity for the simple decoder against Class-A would be achieved when $f(p) = \delta(p - 1/2)$.

\begin{figure*}[t] 
		\psfrag{prov}[c][]{$r_{\simple}^A(c,p)$ (bits)}
    \begin{minipage}[b]{1.0\textwidth}
        \centerline{\epsfig{figure = 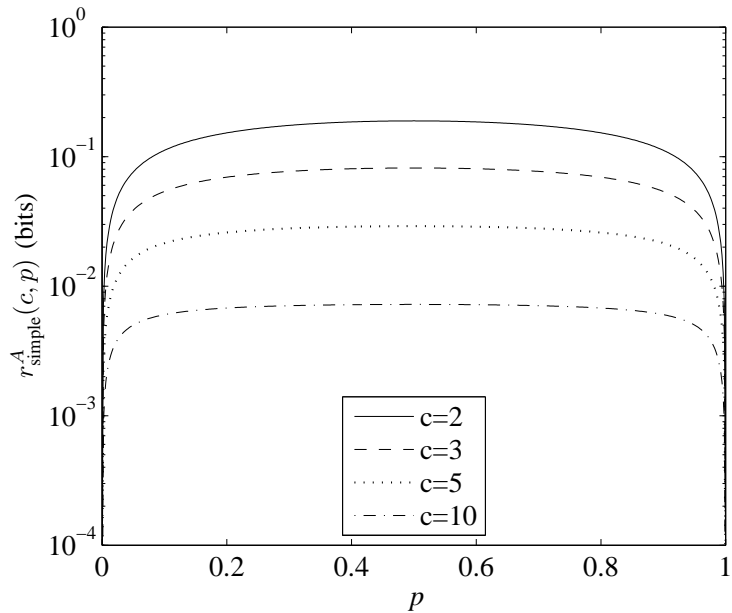, width=0.6\textwidth}}
    \end{minipage}
    \caption{Plot of $r_{\simple}^A(c,p)$ for the simple decoder under Class-A attack.}
    \label{fig:rate-simple-class-a}
\end{figure*}

\subsection{Colluders Classes B and C}
\label{sub:simpleB}

We do not have any proof for the Classes B and C. We were only able to find the worst collusion attack thanks to a numerical optimization tool which performs well only if the collusion size is not too big: $c\leq 15$. This was done for the Tardos and probabilistic BS codes. For the Tardos code, Table~\ref{tab:TardossimpleB} shows the resulting worst collusion attacks for $c<10$. The rate achievable by Tardos and probabilistic BS codes under the worst Class B-C attacks is plotted in Fig.~\ref{fig:rate-simple}, which suggests that the Tardos pdf is a better choice than the flat pdf as the number of colluders increases (note that the rate for the flat pdf already becomes lower than that of the Tardos pdf for $c=8$).

The observation of the numerical results allows us to formulate the two following conjectures (without formal proof).

\begin{conjecture} \label{conj:ClassBsimple1}
    For a symmetric $f(p)$, the Class-C worst collusion attack indeed belongs to the Class-B subset, i.e. $\theta_\sigma^* = 1 - \theta_{c-\sigma}^*,\,\forall\sigma\in[c]$, and $R_{\simple}^{B}(c)=R_{\simple}^{C}(c)$.
\end{conjecture}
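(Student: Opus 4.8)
The plan is to adapt to the simple decoder the symmetrisation argument that underlies Proposition~\ref{prop:ClassBJoint1}. Define the \emph{mirror map} $M\colon\boldtheta\mapsto\tilde\boldtheta$ by $\tilde\theta_\sigma=1-\theta_{c-\sigma}$. Since $\tilde\theta_0=1-\theta_c=0$, $\tilde\theta_c=1-\theta_0=1$ and $\tilde\theta_\sigma\in[0,1]$, the map $M$ is affine and sends the polytope $\class{C}{c}$ into itself; moreover, within $\class{C}{c}$ its fixed-point set is precisely $\class{B}{c}$. I would then establish two facts: (i) $R_{\simple}(\tilde\boldtheta)=R_{\simple}(\boldtheta)$ whenever $f(p)=f(1-p)$; and (ii) $\boldtheta\mapsto R_{\simple}(\boldtheta)$ is convex on $\class{C}{c}$. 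Given (i) and (ii), pick any minimiser $\boldtheta^*$ of $R_{\simple}$ over the compact set $\class{C}{c}$: then $M\boldtheta^*$ is a minimiser as well, hence so is the midpoint $\boldtheta^\sharp=\tfrac12(\boldtheta^*+M\boldtheta^*)$, and a one-line computation gives $1-\boldtheta^\sharp_{c-\sigma}=\boldtheta^\sharp_\sigma$, i.e. $\boldtheta^\sharp\in\class{B}{c}$. Therefore $R_{\simple}^{B}(c)=R_{\simple}(\boldtheta^\sharp)=R_{\simple}^{C}(c)$.

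For (i), note that, conditioned on $P=p$, the integrand of \eqref{eq:rate-simple-decoder} is $I(Y;X\mid P=p)$, which depends on $\boldtheta$ only through the induced binary channel $\bigl(a(p),b(p)\bigr)\triangleq\bigl(\Prob{Y}{1|X=1,P=p},\,\Prob{Y}{1|X=0,P=p}\bigr)$ given by \eqref{eq:CondProb1}--\eqref{eq:CondProb0}, together with the fixed input prior $\Prob{X}{1}=p$. Using $\binom{c-1}{k-1}=\binom{c-1}{c-k}$ and $\sum_k\binom{c-1}{k}p^k(1-p)^{c-1-k}=1$, a direct manipulation of \eqref{eq:CondProb1}--\eqref{eq:CondProb0} shows that the channel induced by $\tilde\boldtheta$ at $p$ equals $\bigl(1-b(1-p),\,1-a(1-p)\bigr)$. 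Relabelling $X\mapsto 1-X$ and $Y\mapsto 1-Y$ leaves mutual information unchanged and converts the pair $\bigl(\text{prior }p,\ \text{channel }(1-b(1-p),1-a(1-p))\bigr)$ into $\bigl(\text{prior }1-p,\ \text{channel }(a(1-p),b(1-p))\bigr)$; hence $I(Y;X\mid P=p,\tilde\boldtheta)=I(Y;X\mid P=1-p,\boldtheta)$. Averaging over $P$ and substituting $p\mapsto1-p$, the symmetry $f(p)=f(1-p)$ then yields $R_{\simple}(\tilde\boldtheta)=R_{\simple}(\boldtheta)$.

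For (ii), fix $p$: for the fixed input prior $\Prob{X}{1}=p$, the mutual information $I(Y;X\mid P=p)$ is a convex function of the channel $(a,b)$ \cite[Th.~2.7.4]{CT91}, and $(a,b)$ is an affine function of $\boldtheta$ by \eqref{eq:CondProb1}--\eqref{eq:CondProb0}; hence $\boldtheta\mapsto I(Y;X\mid P=p)$ is convex, and averaging over $P$ preserves convexity. This is the one step that genuinely differs from the joint case of Sect.~\ref{sec:more-colluders-joint-class-b-c}: there $\boldtheta$ \emph{is} the channel from $\Sigma$ to $Y$, so the Blahut--Arimoto/KL structure makes convexity immediate, whereas here one must first notice that $\boldtheta$ enters only through the affine reduction $\boldtheta\mapsto(a,b)$.

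The hard part --- and, presumably, the reason the statement is posed as a conjecture rather than a proposition --- is upgrading the rate identity $R_{\simple}^{B}(c)=R_{\simple}^{C}(c)$ to the \emph{pointwise} assertion $\theta_\sigma^*=1-\theta_{c-\sigma}^*$, which requires ruling out asymmetric minimisers, i.e. strict convexity of $R_{\simple}$ transverse to $\class{B}{c}$. Strictness is not automatic because $I(Y;X\mid P=p)$ vanishes (and its Hessian in $(a,b)$ degenerates) wherever the induced channel is already symmetric, $a(p)=b(p)$; one would have to argue that, for the cut-off-free Tardos and flat $f$ considered here, such $p$ form an $f$-negligible set for the candidate minimisers --- since $a(p)-b(p)\propto\frac{\partial}{\partial p}\Prob{Y}{1|P=p}$ is a non-trivial polynomial (Prop.~\ref{prop:CondProb}), this is plausible, and is consistent with the worst attacks tabulated in Table~\ref{tab:TardossimpleB}, but it is not settled by the argument above.
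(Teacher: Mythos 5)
You should first note that the paper offers no proof of this statement at all: it is presented explicitly as a conjecture, supported only by numerical optimization (Table~\ref{tab:TardossimpleB}), so there is no "paper proof" to match against. Your symmetrization argument is, as far as I can check, correct for what it claims. The mirror identity is right: writing $a(p)=\Prob{Y}{1|X=1,P=p}$ and $b(p)=\Prob{Y}{1|X=0,P=p}$, the substitution $k\mapsto c-k$ in \eqref{eq:CondProb1}--\eqref{eq:CondProb0} indeed gives $a(p;\tilde\boldtheta)=1-b(1-p;\boldtheta)$ and $b(p;\tilde\boldtheta)=1-a(1-p;\boldtheta)$, and the relabelling $X\mapsto 1-X$, $Y\mapsto 1-Y$ together with the fact that $\Prob{X}{1|P=1-p}=1-p$ yields $I(Y;X|P=p,\tilde\boldtheta)=I(Y;X|P=1-p,\boldtheta)$, hence $R_{\simple}(\tilde\boldtheta)=R_{\simple}(\boldtheta)$ for symmetric $f$. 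Convexity of $\boldtheta\mapsto R_{\simple}(\boldtheta)$ also holds, since $(a,b)$ is linear in $\boldtheta$, $I$ is convex in the channel for fixed input prior, and averaging over $P$ preserves convexity (the paper itself implicitly uses this convexity in its Class-D analysis). The midpoint argument then shows that $\boldtheta^\sharp=\tfrac12(\boldtheta^*+M\boldtheta^*)\in\class{B}{c}$ is a minimizer, so $R_{\simple}^{B}(c)=R_{\simple}^{C}(c)$. This actually upgrades the rate-equality half of the conjecture (and the existence of a Class-B worst-case attack) from conjecture to theorem, for any symmetric $f(p)$, and it is also a more elementary route than the one the paper uses for the analogous joint-decoder statement (Prop.~\ref{prop:ClassBJoint1}), which goes through closure of $\class{B}{c}$ under the Blahut--Arimoto iteration; your argument would apply verbatim there as well.

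What you do not prove — and you say so honestly — is the literal pointwise claim $\theta_\sigma^*=1-\theta_{c-\sigma}^*$, i.e. that \emph{every} (or "the") Class-C minimizer is symmetric. Your argument only excludes that all minimizers are asymmetric; if an asymmetric minimizer existed, convexity would make the whole segment joining it to its mirror image a set of minimizers, which is not ruled out without a strict-convexity (or uniqueness) argument transverse to $\class{B}{c}$. The degeneracy you point to is real: $I(Y;X|P=p)$ is flat in directions that leave $(a(p),b(p))$ unchanged whenever $a(p)=b(p)$, and the boundary values $\theta_\sigma^*\in\{0,1\}$ appearing in Table~\ref{tab:TardossimpleB} for $c\geq 5$ suggest active constraints where uniqueness is delicate. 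So the conjecture as stated is only partially settled by your proposal; the remaining gap is exactly the one you identify, and it is consistent with the paper's choice to leave the statement as a conjecture.
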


\begin{conjecture} \label{conj:ClassBsimple2}
	For the Tardos pdf $f(p)=\sqrt{p(1-p)}/\pi$, the worst collusion attack surprisingly makes the probability $\Prob{Y}{1|P=p}$ converging to $q^{\mbox{\scriptsize conv}}(p)=2\arcsin(\sqrt{p})/\pi$, as the collusion size increases.\footnote{Note that $q^{\mbox{\scriptsize conv}}(p_0)$ is nothing but the integral in $p$ of the Tardos pdf from $0$ to $p_0$.} More specifically, $\Prob{Y}{1|P=p}$ is the orthogonal projection of $q^{\mbox{\scriptsize conv}}(p)$ over the affine subspace spanned by the Bernstein polynomials $\{\Prob{\Sigma}{\sigma|P=p}\}_{\sigma\in[c-1]}$ and containing the polynomial $\Prob{\Sigma}{c|P=p}$. In other words, $\int_{0}^1 (\Prob{Y}{1|P=p}-q^{\mbox{\scriptsize conv}}(p))\Prob{\Sigma}{\sigma|P=p}dp=0,\,\forall\sigma\in[c-1]$.
\end{conjecture}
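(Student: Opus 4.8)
The plan starts from the observation that $\min_{\boldtheta\in\class{C}{c}}R_{\simple}(\boldtheta)$ is a convex program: for fixed $p$, $(X,Y)$ given $P=p$ is a binary channel with fixed input law $\mathrm{Bern}(p)$ and transition probabilities $a(p)=\Prob{Y}{1|X=1,P=p}$, $b(p)=\Prob{Y}{1|X=0,P=p}$ that are affine in $\boldtheta$ by \eqref{eq:CondProb1}--\eqref{eq:CondProb0}; mutual information is convex in the channel for a fixed input, and $\expect{P}{\cdot}$ preserves convexity, so $R_{\simple}$ is convex on the polytope $\class{C}{c}$. Hence the worst attack is a global minimizer, and --- the numerics of Table~\ref{tab:TardossimpleB} suggesting it can be taken interior --- it is characterized by $\partialder{\theta_k}R_{\simple}=0$ for $k\in[c-1]$. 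Writing $q(p)=\Prob{Y}{1|P=p}$, $h_b'(x)=\log\tfrac{1-x}{x}$, and using $\partialder{\theta_k}q=\nchoosek{c}{k}p^k(1-p)^{c-k}$, $\partialder{\theta_k}a=\nchoosek{c-1}{k-1}p^{k-1}(1-p)^{c-k}$, $\partialder{\theta_k}b=\nchoosek{c-1}{k}p^{k}(1-p)^{c-k-1}$ together with $\nchoosek{c}{k}=\nchoosek{c-1}{k-1}+\nchoosek{c-1}{k}$, I would rewrite these as
\begin{equation}
\expect{P}{p^k(1-p)^{c-k}\!\left[\nchoosek{c-1}{k-1}\bigl(h_b'(q)-h_b'(a)\bigr)+\nchoosek{c-1}{k}\bigl(h_b'(q)-h_b'(b)\bigr)\right]}=0,\qquad k\in[c-1].
\nonumber
\end{equation}
Using Prop.~\ref{prop:CondProb} to substitute $a=q+\tfrac{1-p}{c}\partialder{p}q$ and $b=q-\tfrac{p}{c}\partialder{p}q$ makes the system depend on $q$ only; by convexity and the symmetry of the Tardos pdf one may take $q(1-p)=1-q(p)$, i.e. a Class-B attack (in line with Conj.~\ref{conj:ClassBsimple1}).

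The heart of the proof is to show that, for the Tardos pdf, this nonlinear system is equivalent to the linear vanishing-moment conditions $\int_0^1\bigl(q(p)-q^{\mbox{\scriptsize conv}}(p)\bigr)\Prob{\Sigma}{\sigma|P=p}\,dp=0$, $\sigma\in[c-1]$, of the statement. Here is the intended route. Expand $h_b'(q)-h_b'(a)$ and $h_b'(q)-h_b'(b)$ about $q$; their leading terms are $-(a-q)h_b''(q)=\tfrac{(1-p)\partial_p q}{c\ln 2}\cdot\tfrac1{q(1-q)}$ and $\tfrac{p\,\partial_p q}{c\ln 2}\cdot\tfrac1{q(1-q)}$. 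Absorbing the Tardos weight $\tfrac{1}{\pi\sqrt{p(1-p)}}$ into $p^k(1-p)^{c-k}$ and regrouping $\nchoosek{c-1}{k-1}(1-p)$ and $\nchoosek{c-1}{k}p$ back into degree-$(c-1)$ Bernstein polynomials, the conditions become ``the $L^2[0,1]$--projection of $G(p):=\tfrac{\partial_p q(p)\sqrt{p(1-p)}}{q(p)(1-q(p))}$ onto polynomials of degree $\le c-1$ is a constant.'' Letting $c\to\infty$ forces $G\equiv\kappa$, i.e. $\tfrac{\partial_p q}{q(1-q)}=\tfrac{\kappa}{\sqrt{p(1-p)}}$; this is where $q^{\mbox{\scriptsize conv}}$ should appear, since $\tfrac1{\pi\sqrt{p(1-p)}}=(q^{\mbox{\scriptsize conv}})'(p)$, so integration yields $\ln\tfrac{q}{1-q}=\kappa\pi\,q^{\mbox{\scriptsize conv}}(p)+C$, after which one would pin down $\kappa,C$ and the endpoint behaviour and read off both the projection description and the convergence.

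The main obstacle is precisely this closure. The leading-order outer ODE does not fit the boundary values $q(0)=0$, $q(1)=1$ with a finite $C$ (since $q^{\mbox{\scriptsize conv}}(p)\sim\tfrac2\pi\sqrt p$ has infinite slope at the endpoints), and the outer solution one reads off at leading order is not $q^{\mbox{\scriptsize conv}}$ itself; both facts show that the sub-leading terms of the expansion of $h_b'$, together with boundary layers on the scale $p\sim 1/c$ where the regime $a\approx q\approx b$ breaks down, contribute at the same order and cannot be discarded. A rigorous matched-asymptotics treatment of those endpoint layers --- reconciled with $\theta_0=0$, $\theta_c=1$ --- and an argument that the corrected fixed point is solved exactly by the $L^2$ projection of $q^{\mbox{\scriptsize conv}}$ is what I do not see how to carry out, and is the reason the statement is only conjectural. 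An asymptotics-free alternative would be to plug the claimed projection $q$ into the KKT system and verify it directly, which amounts to an identity turning the Tardos/Bernstein-weighted integrals of $h_b'(q)-h_b'(a)$ and $h_b'(q)-h_b'(b)$ into the orthogonality of $q-q^{\mbox{\scriptsize conv}}$ against $\{\Prob{\Sigma}{\sigma|P=p}\}_{\sigma\in[c-1]}$; I know of no such identity.

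Granting the finite-$c$ projection characterization, the convergence $q\to q^{\mbox{\scriptsize conv}}$ is routine: $q^{\mbox{\scriptsize conv}}\in C[0,1]$ is a uniform limit of polynomials $g_c$ of degree $\le c$ (Weierstrass), and $\tilde g_c(p):=g_c(p)-g_c(0)(1-p)^c+\bigl(1-g_c(1)\bigr)p^c$ has degree $\le c$, satisfies $\tilde g_c(0)=0$, $\tilde g_c(1)=1$, and still obeys $\|\tilde g_c-q^{\mbox{\scriptsize conv}}\|_\infty\to 0$ because $g_c(0)\to 0$ and $g_c(1)\to 1$; hence the $L^2$ distance from $q^{\mbox{\scriptsize conv}}$ to the admissible affine subspace tends to $0$, so its projections converge to $q^{\mbox{\scriptsize conv}}$ in $L^2(0,1)$.
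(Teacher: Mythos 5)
Be aware first that the paper itself contains no proof of this statement: it is explicitly introduced as a conjecture ``without formal proof,'' supported only by numerical optimization (Table~\ref{tab:TardossimpleB}, Fig.~\ref{fig:rate-simple-unknownp}) and by the observation that the Durrmeyer--Sevy projection reproduces the numerically found attacks. So there is no reference argument to match yours against, and your closing admission that the asymptotic closure is out of reach is consistent with the conjectural status of the claim; your proposal should be read as a plausible attack plan, not a proof.

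Within that plan, two points deserve concrete criticism. First, your characterization of the worst attack by pure stationarity, $\partialder{\theta_k}R_{\simple}=0$ for all $k\in[c-1]$, rests on the minimizer being interior, and you invoke Table~\ref{tab:TardossimpleB} in support; the table actually shows the opposite: already for $c=5$ (and again for $c=7,8,9$) the optimal $\boldtheta^*$ has components equal to $0.000$ and $1.000$, so the minimizer sits on the boundary of $\class{C}{c}$ and the correct first-order conditions are KKT conditions with sign constraints on the partial derivatives for the active coordinates (exactly as the paper handles the Class-D case in Appendix~\ref{app:LemmaClassDsimple}). Any rigorous derivation of the claimed orthogonality must cope with these active bounds, and note that the conjectured projection is onto an \emph{affine subspace}, i.e.\ it ignores the box constraints altogether, which makes the reconciliation nontrivial rather than a technicality. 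Second, the passage from the exact stationarity system to ``the $L^2$-projection of $G(p)$ onto degree-$(c-1)$ polynomials is constant'' keeps only the leading term of the expansion of $h_b'$ around $q$; as you yourself observe, the neglected terms are of the same order near $p\in\{0,1\}$ (where $q\to 0,1$ and the Bernstein weights concentrate), so the limiting ODE, and hence the emergence of $q^{\mbox{\scriptsize conv}}$, is heuristic and cannot currently be promoted to the finite-$c$ identity $\int_0^1(q-q^{\mbox{\scriptsize conv}})\Prob{\Sigma}{\sigma|P=p}\,dp=0$. Your last step is fine as a conditional statement: granted that identity, the Weierstrass argument with the endpoint-corrected approximant $\tilde g_c$ (which indeed lies in the admissible affine subspace, since its Bernstein coefficients of order $0$ and $c$ are its values at $0$ and $1$) does give $q\to q^{\mbox{\scriptsize conv}}$ in $L^2(0,1)$; but it is conditional on precisely the part of the conjecture that remains unproven, both in your proposal and in the paper.
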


Fig.~\ref{fig:rate-simple-unknownp}(a) illustrates how the probability $\Prob{Y}{1|P=p}$ quickly converges to $q^{\mbox{\scriptsize conv}}(p)$ (the thick solid line) as $c$ is increased. Fig.~\ref{fig:rate-simple-unknownp}(b) shows the resulting rates $r_{\simple}^C(c,p)$.
According to the second conjecture, in practice we can obtain the parameters of the worst case attack for the Tardos pdf by performing the projection of $q^{\mbox{\scriptsize conv}}(p)-\Prob{\Sigma}{c|P=p}$ onto the linear subspace spanned by the Bernstein polynomials. The Durrmeyer-Sevy algorithm is an elegant way to perform this orthogonal projection~\cite[Th. 2]{Sevy1995:Lagrange}.  

\begin{figure}[t]
    \begin{minipage}[b]{0.49\textwidth}
        \psfrag{prov}[c][]{{\scriptsize $\Prob{Y}{1|P=p}$}}
        \centerline{\epsfig{figure = 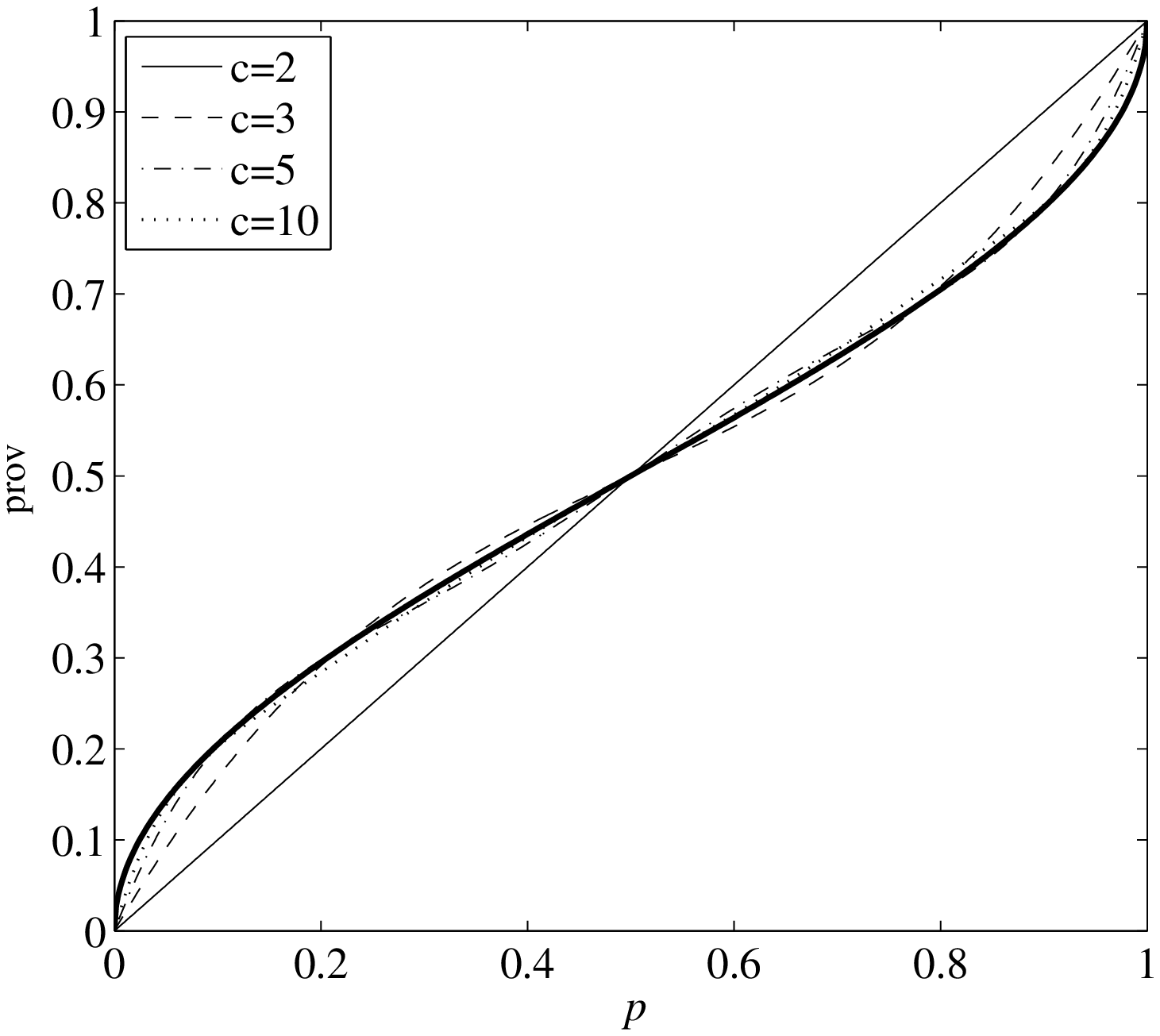, width=0.98\textwidth}}
        \centerline{(a)}
    \end{minipage}
    \begin{minipage}[b]{0.49\textwidth}
    		\psfrag{prov}[c][]{$r_{\simple}^B(c,p)$ (bits)}
        \centerline{\epsfig{figure = 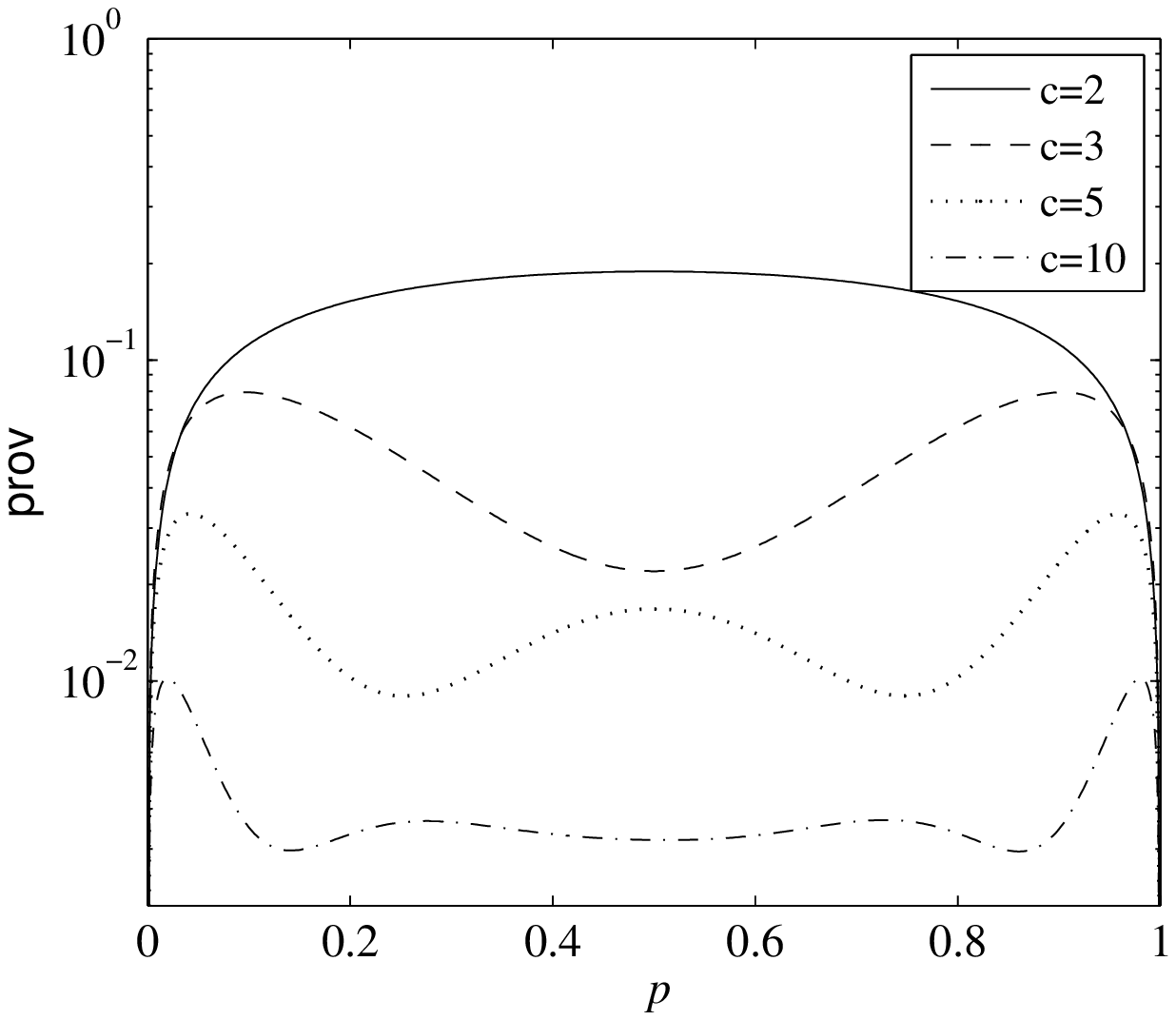, width=1.0\textwidth}}
        \centerline{(b)}
    \end{minipage}    
    \caption{Worst Class-C collusion attack against the simple decoder and Tardos time-sharing pdf: plot of $\Prob{Y}{1|P=p}$ (a), and plot of $r_{\simple}^C(c,p)$ (b).}
    \label{fig:rate-simple-unknownp}
\end{figure}

\begin{table}
\begin{center}
\caption{Worst collusion attacks, simple decoder, Tardos pdf, Class-C.}
\label{tab:TardossimpleB}
  \begin{tabular}{|c|l|c|}
    \hline
    $c$ & $\boldtheta^*$ & $R_{\simple}^{B}(c)$ in bits\\
    \hline
    \hline
    2 & $(0,0.5,1)$ & 0.087 \\
    \hline
    3 & $(0,0.652,0.348,1)$ & 0.035 \\
    \hline
    4 & $(0,0.488,0.5,0.512,1)$ & 0.02 \\
    \hline
    5 & $(0,0.594,0.000,1.000,0.406,1)$ & 0.013 \\
    \hline
    6 & $(0,0.503,0.175,0.500,0.825,0.497,1)$ & 0.009 \\
    \hline
    7 & $(0,0.492, 0.000, 0.899,0.101,1.000, 0.508,1)$ & 0.007 \\
    \hline
    8 & $(0,0.471, 0.000, 0.689, 0.500, 0.310, 1.000, 0.529,1)$ & 0.005 \\
    \hline
    9 & $(0,0.440, 0.000, 0.698, 0.230, 0.770, 0.302, 1.000, 0.560,1)$ & 0.004 \\ \hline 
  \end{tabular}
  \end{center}
\end{table}

\subsection{Colluders Class-D}
\label{sub:simpleD}
The mutual information between $Y$ and $X$ knowing the value of $p$ is as follows:
\begin{eqnarray}
I(Y;X|P=p)&=&H(Y|P=p)-H(Y|X,P=p)\\
&=& h_b(\Prob{Y}{1|P=p})-ph_b(\Prob{Y}{1|X=1,P=p}) \nonumber\\
& - & (1-p)h_b(\Prob{Y}{1|X=0,P=p})
\end{eqnarray}
\subsubsection{2 colluders}
For the case $c=2$, the collusion strategy has only one degree of freedom, i.e. $\boldtheta= [0, \theta_1, 1]$.
\begin{proposition}
    The worst collusion strategy for $c=2$ is given by $\theta_1^* = p^2/(p^2+(1-p)^2)$,
    exactly as for the joint decoder (see Prop.~\ref{prop:jointD}).
\end{proposition}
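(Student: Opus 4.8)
The plan is to reduce the statement to a one‑dimensional convex minimisation in the single free parameter $\theta_1$. For $c=2$ and $\boldtheta=[0,\theta_1,1]$, the conditional probabilities \eqref{eq:ProbY}, \eqref{eq:CondProb1} and \eqref{eq:CondProb0} become $\Prob{Y}{1|P=p}=2\theta_1 p(1-p)+p^{2}$, $\Prob{Y}{1|X=1,P=p}=p+(1-p)\theta_1$ and $\Prob{Y}{1|X=0,P=p}=p\theta_1$; call these $q$, $a$ and $b$, so that each is an affine function of $\theta_1$ and the rate \eqref{eq:rate-simple-decoder} reads $I(Y;X|P=p)=h_b(q)-p\,h_b(a)-(1-p)\,h_b(b)$. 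First I would record the convexity of this map in $\theta_1$: the input law $\Prob{X}{1|P=p}=p$ is held fixed while the channel $X\to Y$ varies affinely with $\theta_1$, hence by \cite[Th.~2.7.4]{CT91} (the very argument already used for the joint decoder) $I(Y;X|P=p)$ is convex in $\theta_1\in[0,1]$. It therefore suffices to exhibit an interior stationary point.

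Next I would differentiate. Using $h_b'(x)=\log\frac{1-x}{x}$ together with $\tfrac{dq}{d\theta_1}=2p(1-p)$, $\tfrac{da}{d\theta_1}=1-p$, $\tfrac{db}{d\theta_1}=p$,
\[
\frac{\partial}{\partial\theta_1}I(Y;X|P=p)=p(1-p)\left(2\log\frac{1-q}{q}-\log\frac{1-a}{a}-\log\frac{1-b}{b}\right),
\]
so for $p\in(0,1)$ the stationarity condition is the algebraic identity $\bigl(\tfrac{1-q}{q}\bigr)^{2}=\tfrac{(1-a)(1-b)}{ab}$. The heart of the proof is then to check that $\theta_1^{*}=p^{2}/(p^{2}+(1-p)^{2})$ satisfies it. Writing $D=p^{2}+(1-p)^{2}$ and using $2p(1-p)+D=1$, a short computation gives the pleasant fact $q=p^{2}/D=\theta_1^{*}$, so $\tfrac{1-q}{q}=\tfrac{(1-p)^{2}}{p^{2}}$; and with the factorisations $p^{3}-3p^{2}+3p-1=(p-1)^{3}$ and $p^{3}-2p^{2}+2p-1=(p-1)(p^{2}-p+1)$ one obtains $a=\tfrac{p(1-p+p^{2})}{D}$, $1-a=\tfrac{(1-p)^{3}}{D}$, $b=\tfrac{p^{3}}{D}$, $1-b=\tfrac{(1-p)(1-p+p^{2})}{D}$, whence $\tfrac{(1-a)(1-b)}{ab}=\tfrac{(1-p)^{4}}{p^{4}}=\bigl(\tfrac{1-q}{q}\bigr)^{2}$. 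Since $\theta_1^{*}\in(0,1)$ for $p\in(0,1)$ and the objective is convex, this stationary point is the global minimiser, which is exactly the claim; one checks it coincides with $\theta^{*}(p)$ of Prop.~\ref{prop:jointD} for $c=2$. For the boundary cases $p\in\{0,1\}$ the rate is $0$ for every $\theta_1$ and there is nothing to prove.

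The routine part is the polynomial bookkeeping of the last step. The one point deserving care is the convexity argument — it is essential that the distribution held fixed is the \emph{input} distribution $\Prob{X}{1|P=p}=p$ while the channel varies, and that the stationary point lies strictly inside $[0,1]$, so that convexity promotes it from a local to the global minimiser rather than leaving open the possibility of a boundary optimum.
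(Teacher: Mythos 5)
Your proof is correct and follows essentially the same route as the paper's: differentiate $I(Y;X|P=p)$ with respect to $\theta_1$, obtain the stationarity condition $\bigl(\tfrac{1-q}{q}\bigr)^{2}=\tfrac{(1-a)(1-b)}{ab}$ (the paper's condition $A(\boldtheta,p)=1$ for $c=2$), and check that $\theta_1^{*}=p^{2}/(p^{2}+(1-p)^{2})$ satisfies it. You merely make explicit two points the paper leaves implicit --- the convexity of the rate in $\theta_1$ (via convexity of mutual information in the channel for fixed input law) that promotes the interior stationary point to the global minimiser, and the polynomial verification itself --- which is a welcome completion rather than a different method.
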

\begin{IEEEproof}
    The steps are roughly the same as those followed in
    Appendix~\ref{app:worstattack-joint-knownp} for the joint decoder.
    Taking the derivative of (\ref{eq:rate-simple-decoder}) with respect to
    $\theta_\sigma$ we obtain
    \begin{eqnarray}
        \partialder{\theta_\sigma} I(Y;X|P=p) = \Prob{\Sigma}{\sigma|P=p}
        \log\left(A(\boldtheta, p)\right),
    \end{eqnarray}
    where
    \begin{eqnarray}
        A(\boldtheta, p) =
        \left(\frac{1-\Prob{Y}{1|P=p}}{\Prob{Y}{1|P=p}}\right)
        \left(\frac{\Prob{Y}{1|X=1,P=p}}{1-\Prob{Y}{1|X=1,P=p}}\right)^{\sigma/c}
        \left(\frac{\Prob{Y}{1|X=0,P=p}}{1-\Prob{Y}{1|X=0,P=p}}\right)^{(c-\sigma)/c}.
    \end{eqnarray}
    It only remains to search for the collusion strategy that makes $A(\boldtheta,p)=1$,
    taking into account that for $c=2$, $\boldtheta=[0, \theta_1,
    1]$.
\end{IEEEproof}

\subsubsection{More colluders}
When $c>2$, obtaining a closed-form expression for the worst case
attack is not possible, in general. However, it is possible to
reduce the computation of the optimal collusion strategy to
solving for a simple line search or linear equation. This is based
on some fundamental results given in
Lemma~\ref{lemma:null-rate-simpledec} and
Lemma~\ref{lemma:opt-strat-simpledec-outrange} below.


\begin{lemma}
    The worst case collusion strategy when 3 or more colluders are involved achieves
    null rate in the range $p \in [\eta_c, 1-\eta_c]$, with $\eta_{c}$ the unique real root in the interval $[1/c,2/c]$ of the following polynomial: 
\begin{equation}
    (1-p)^{c-2}(1 -cp) + p^{c-1}.
    \label{eq:polynomial-null-rate}
\end{equation}
Moreover, the value of $\eta_c$ asymptotically approaches $1/c$ as $c$ is increased.
\label{lemma:null-rate-simpledec}
\end{lemma}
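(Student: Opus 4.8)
The plan is to reduce ``the worst-case simple-decoder rate is zero at a given $p$'' to an explicit inequality between Bernstein monomials, and then describe the set of $p$ for which that inequality holds. From \eqref{eq:rate-simple-decoder}, $I(Y;X|P=p)=h_b(\Prob{Y}{1|P=p})-p\,h_b(a)-(1-p)\,h_b(b)$ with $a=\Prob{Y}{1|X=1,P=p}$, $b=\Prob{Y}{1|X=0,P=p}$, and $\Prob{Y}{1|P=p}=pa+(1-p)b$; strict concavity of $h_b$ makes this nonnegative and it vanishes iff $a=b$. Using \eqref{eq:CondProb1}--\eqref{eq:CondProb0} and reindexing one gets $a-b=\sum_{j=0}^{c-1}w_j(p)\,(\theta_{j+1}-\theta_j)$ with $w_j(p)=\binom{c-1}{j}p^{j}(1-p)^{c-1-j}$, and a summation by parts using $\theta_0=0,\theta_c=1$ rewrites this as $w_{c-1}(p)+\sum_{j=1}^{c-1}\bigl(w_{j-1}(p)-w_j(p)\bigr)\theta_j$. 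As $\boldtheta$ runs over $\class{C}{c}$ this affine function of $(\theta_1,\dots,\theta_{c-1})\in[0,1]^{c-1}$ sweeps an interval $[L(p),U(p)]$ with $U(p)\ge w_{c-1}(p)>0$; minimizing term by term and using that $j\mapsto w_j(p)$ is unimodal with mode $\lfloor cp\rfloor$ (because $w_j(p)/w_{j-1}(p)=(c-j)p/(j(1-p))>1\Leftrightarrow j<cp$), a telescoping sum yields $L(p)=w_0(p)+w_{c-1}(p)-\max_{0\le j\le c-1}w_j(p)$. Hence the rate can be made $0$ at $p$ — and, $[0,1]^{c-1}$ being connected, a minimizing $\boldtheta^*\in\class{C}{c}$ with $a=b$ then exists — iff $\max_j w_j(p)\ge w_0(p)+w_{c-1}(p)$.

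Next I would identify the $p$'s obeying this criterion by splitting $[0,1]$ at $1/c,2/c,1-2/c,1-1/c$. For $p\in(0,1/c)$ the mode is $j=0$, so the criterion reads $w_{c-1}(p)\le 0$ and fails; symmetrically on $(1-1/c,1)$. For $p\in[1/c,2/c]$ one has $\max_j w_j(p)=w_1(p)$, and $w_1(p)\ge w_0(p)+w_{c-1}(p)$ rearranges exactly to $g(p):=(1-p)^{c-2}(1-cp)+p^{c-1}\le 0$, i.e.\ to the polynomial \eqref{eq:polynomial-null-rate}. For $p\in[2/c,1-2/c]$ the criterion always holds: by the $p\leftrightarrow 1-p$ symmetry of the criterion assume $p\le 1/2$, so $w_{c-1}(p)\le w_0(p)$ and it suffices that $\max_j w_j(p)\ge 2w_0(p)$; with $J=\lfloor cp\rfloor\in\{2,\dots,c-2\}$ and $p\ge J/c$ one gets $w_J(p)/w_0(p)=\binom{c-1}{J}\bigl(p/(1-p)\bigr)^{J}\ge \binom{c-1}{J}\bigl(J/(c-J)\bigr)^{J}\ge \frac{(c-J)^J}{J!}\cdot\frac{J^J}{(c-J)^J}=\frac{J^J}{J!}\ge 2$.

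It then remains to analyse $g$ on $[1/c,2/c]$. One has $g(1/c)=c^{-(c-1)}>0$, and for $c\ge 4$, $g(2/c)=(2/c)^{c-1}-(1-2/c)^{c-2}<0$ since the ratio equals $\tfrac2c\bigl(\tfrac2{c-2}\bigr)^{c-2}<1$. Writing $g'(p)=(c-1)h(p)$ with $h(p)=p^{c-2}-(1-p)^{c-3}(2-cp)$, on $[1/c,2/c]$ we have $2-cp\ge 0$, hence for $c\ge 4$ the three terms of $h'(p)=(c-2)p^{c-3}+(c-3)(1-p)^{c-4}(2-cp)+c(1-p)^{c-3}$ are nonnegative (two strictly positive), so $h$ is strictly increasing with $h(1/c)<0<h(2/c)$; thus $g$ is strictly decreasing then strictly increasing on $[1/c,2/c]$, and together with $g(1/c)>0>g(2/c)$ this forces a unique root $\eta_c\in(1/c,2/c)$ with $g>0$ on $[1/c,\eta_c)$ and $g<0$ on $(\eta_c,2/c]$. (For $c=3$, $g(p)=(2p-1)^2\ge 0$ has the single root $\eta_3=1/2$ and $[2/c,1-2/c]$ is empty, so nothing more is needed.) Collecting the cases, the set where the worst-case rate is null is $[\eta_c,2/c]\cup[2/c,1-2/c]\cup[1-2/c,1-\eta_c]=[\eta_c,1-\eta_c]$ for $c\ge 4$ (degenerating to $\{1/2\}=[\eta_3,1-\eta_3]$ for $c=3$), and there the minimum of the nonnegative mutual information is $0$, as claimed. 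For the asymptotics, writing $p=\alpha/c$ gives $g(\alpha/c)=(1-\alpha/c)^{c-2}(1-\alpha)+(\alpha/c)^{c-1}\to e^{-\alpha}(1-\alpha)<0$ for any fixed $\alpha\in(1,2)$; hence $g(\alpha/c)<0$ and so $\eta_c<\alpha/c$ for $c$ large, and with $\eta_c>1/c$ this yields $c\eta_c\to 1$, i.e.\ $\eta_c\sim 1/c$.

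The hard part is the reduction in the first paragraph — extracting the closed form $L(p)=w_0(p)+w_{c-1}(p)-\max_j w_j(p)$ from the box minimization, which hinges on the unimodality of the Bernstein sequence and on the telescoping — together with the monotonicity analysis of $g$ via its factor $h$ on $[1/c,2/c]$, and the bookkeeping that merges the three $p$-ranges into the single interval $[\eta_c,1-\eta_c]$; the inequality $J^J/J!\ge 2$ on $[2/c,1-2/c]$ and the degenerate case $c=3$ are the points most prone to slips.
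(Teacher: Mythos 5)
Your argument is correct. It shares the paper's starting point --- at fixed $p$ the rate vanishes iff $\Prob{Y}{1|X=1,P=p}=\Prob{Y}{1|X=0,P=p}$, an affine condition in $(\theta_1,\ldots,\theta_{c-1})\in[0,1]^{c-1}$ whose coefficients $w_{j-1}(p)-w_j(p)$ are exactly the paper's $\rho_j(p)$ --- but from there the routes diverge. The paper works sub-interval by sub-interval: on $[1/c,2/c]$ only $\rho_1$ is negative, so $J(\boldtheta,p)\ge J(\boldtheta_{low},p)=(1-p)^{c-2}(1-cp)+p^{c-1}$, a sign change of this bound locates $\eta_c$, continuity between $\boldtheta_{low}$ and $(0,\ldots,0,1)^T$ yields a zero of $J$, the mirror interval is treated by symmetry, and negativity of $J(\boldtheta_{low},p)$ on all of $[\eta_c,1/2]$ is only asserted (``can be shown''). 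You instead minimize the affine form exactly, obtaining the closed form $L(p)=w_0(p)+w_{c-1}(p)-\max_j w_j(p)$ by unimodality and telescoping, hence the clean criterion $\max_j w_j(p)\ge w_0(p)+w_{c-1}(p)$; on $[1/c,2/c]$ this is precisely $g(p)\le 0$ with $g$ the polynomial of the lemma, on $[2/c,1-2/c]$ you verify it directly via $J^J/J!\ge 2$ (thereby proving the step the paper leaves unproved), and your analysis $g'=(c-1)h$ with $h$ strictly increasing establishes the uniqueness of $\eta_c$, which the statement claims but the paper never argues. The trade-off runs the other way on the asymptotics: your scaling $p=\alpha/c$ only gives $c\eta_c\to 1$, whereas the paper's bound in $u=cp$ gives $\eta_c=1/c+O(c^{-c})$; both suffice for the lemma as stated. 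Only cosmetic points remain: the one-line check $c^{-(c-2)}<(1-1/c)^{c-3}$ behind your assertion $h(1/c)<0$ is worth writing out (though uniqueness already follows from $h$ increasing together with $g(1/c)>0>g(2/c)$), and the mode formula $\lfloor cp\rfloor$ should be understood as clamped to $\{0,\ldots,c-1\}$, which does not affect your expression for $L(p)$.
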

\begin{IEEEproof}
See Appendix~\ref{app:null-rate-simple-knownp}.
\end{IEEEproof}


\begin{lemma}
Let $\eta_c$ be the root given in
Lemma~\ref{lemma:null-rate-simpledec}. For $p \notin [\eta_c,
1-\eta_c]$ and $c\geq 3$, there is at most one component of $\boldtheta^*(p)$ which is not equal to zero or one:
\begin{itemize}
    \item If $p<\eta_c$, the worst collusion is of the form $\boldtheta_a(p) =(0,\theta_1(p),0,\ldots,0,1)^T$. Furthermore, $\theta_1(p) = 1$ for $p\in[1/c,\eta_c]$.
    \item If $p>1-\eta_c$, the worst collusion is of the form
    $\boldtheta_b(p) =(0,1,\ldots,1,\theta_{c-1}(p),1)^T$. Furthermore, $\theta_{c-1}(p) = 0$ for $p\in[1-\eta_c,1/c]$.
\end{itemize}
\label{lemma:opt-strat-simpledec-outrange}
\end{lemma}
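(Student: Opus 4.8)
The plan is to start from the stationarity condition already derived for the simple decoder: at an interior optimum, $\partialder{\theta_\sigma} I(Y;X|P=p) = \Prob{\Sigma}{\sigma|P=p}\log A(\boldtheta,p) = 0$, where $A(\boldtheta,p)$ is the product expression appearing in the $c=2$ proposition above, generalized to arbitrary $c$. Since $\Prob{\Sigma}{\sigma|P=p}>0$ for $p\in(0,1)$, any component $\theta_\sigma$ that is \emph{strictly} interior to $[0,1]$ must satisfy $A(\boldtheta,p)=1$. First I would observe that $A(\boldtheta,p)$ factorizes as $B(\boldtheta,p)\cdot a(\boldtheta,p)^{\sigma}$ for quantities $B,a$ that do not depend on $\sigma$ (with $a=\left(\frac{\Prob{Y}{1|X=1,P=p}}{1-\Prob{Y}{1|X=1,P=p}}\right)^{1/c}\left(\frac{1-\Prob{Y}{1|X=0,P=p}}{\Prob{Y}{1|X=0,P=p}}\right)^{1/c}$). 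The equation $A=1$ then reads $\sigma\log a = -\log B$, which is \emph{linear} in $\sigma$ and hence has at most one solution $\sigma\in\{1,\ldots,c-1\}$ unless $a=1$. The case $a=1$ would force $A$ constant in $\sigma$, so every interior $\theta_\sigma$ sees a vanishing partial derivative simultaneously; I would argue this degenerate situation corresponds exactly to the null-rate regime $p\in[\eta_c,1-\eta_c]$ of Lemma~\ref{lemma:null-rate-simpledec} (where $I(Y;X|P=p)=0$ and the optimizer is not unique), and is therefore excluded for $p\notin[\eta_c,1-\eta_c]$. This establishes that at most one component of $\boldtheta^*(p)$ is strictly interior; all others are pinned to $0$ or $1$ by the box constraints.

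Next I would determine \emph{which} configuration of zeros and ones occurs on each side of the critical interval, and which index carries the free component. The key input is monotonicity in $p$: for $p<1/c$ the ``mass'' of $\Sigma$ concentrates near small $\sigma$, and one expects the colluders to push $Y$ toward $0$ whenever possible, i.e. $\theta_\sigma=0$ for all $\sigma$ except possibly $\sigma=1$ (the smallest index for which the marking assumption does not already force $\theta_\sigma=0$); symmetrically for $p>1-1/c$. To make this rigorous I would use the sign of the gradient at the ``extreme'' vertices: evaluate $\partialder{\theta_\sigma} I$ at $\boldtheta=(0,0,\ldots,0,1)$ and show that for $p<\eta_c$ it is $\geq 0$ for $\sigma\geq 2$ (so increasing those $\theta_\sigma$ cannot decrease the rate — they stay at $0$) while it may be negative for $\sigma=1$ (so $\theta_1$ wants to move up). The boundary refinement $\theta_1(p)=1$ for $p\in[1/c,\eta_c]$ (and dually $\theta_{c-1}(p)=0$ for $p\in[1-\eta_c,1/c]$... here one reads this as the symmetric interval $[1-\eta_c,1-1/c]$) I would get by checking that at that sub-vertex the partial derivative with respect to $\theta_1$ still has the sign that pushes it to the boundary, or equivalently by invoking continuity with the null-rate regime, where at $p=\eta_c$ the attack $\boldtheta_a$ must match the limiting null-rate attack computed in Lemma~\ref{lemma:null-rate-simpledec}.

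The main obstacle will be the second part: the first-order condition alone only says that a strictly interior minimizer satisfies $A=1$, but it does not by itself preclude a minimizer at a vertex with a \emph{different} pattern of $0$s and $1$s than claimed (nor does it guarantee the claimed pattern is the global, not merely local, minimizer — $I(Y;X|P=p)$ is not convex in $\boldtheta$). Ruling out the ``wrong'' vertices requires a genuine sign analysis of all $c-1$ partial derivatives over the admissible box, which is where the Bernstein-polynomial structure of $\Prob{\Sigma}{\sigma|P=p}$ and the explicit forms \eqref{eq:CondProb1}--\eqref{eq:CondProb0} must be exploited: one shows that for $p<\eta_c$ the function is monotone in each $\theta_\sigma$, $\sigma\geq 2$, throughout the box, so those coordinates are forced to $0$, reducing the problem to a single-variable optimization in $\theta_1$ whose solution is then read off from the one-dimensional stationarity equation $A=1$ (a simple line search / linear equation in the exponent, as announced). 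I expect the monotonicity claim to follow from the fact that, with the other coordinates fixed at their extreme values, $\Prob{Y}{1|P=p}$, $\Prob{Y}{1|X=1,P=p}$ and $\Prob{Y}{1|X=0,P=p}$ are all close to $0$ when $p$ is small, placing $A(\boldtheta,p)$ strictly above $1$ and hence keeping $\partialder{\theta_\sigma}I>0$; quantifying ``close to $0$'' uniformly in the remaining free coordinate is the delicate estimate, and it is exactly here that the precise threshold $\eta_c$ (the root of \eqref{eq:polynomial-null-rate}) enters.
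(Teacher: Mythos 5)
Your opening step is sound and is in fact the same structural observation the paper uses: the partial derivative $\partialder{\theta_\sigma}I(Y;X|P=p)$ equals $\Prob{\Sigma}{\sigma|P=p}$ times a quantity that is affine in $\sigma$ (the paper writes it as $K_1(p,c)(\sigma-K_2(p,c))$ with $K_1,K_2$ depending on $\boldtheta$ through the conditional probabilities), and, combined with the KKT sign conditions at the global minimizer (which are necessary conditions, so your non-convexity worry is not an obstacle) and with the fact established in the proof of Lemma~1 that $J(\boldtheta,p)>0$ for \emph{all} admissible $\boldtheta$ when $p<\eta_c$ (hence $K_1>0$), this yields a threshold structure with at most one interior coordinate. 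The genuine gap is in the second half, which you correctly flag as the delicate part but for which your proposed route does not work. You plan to force $\theta_\sigma=0$ for all $\sigma\geq 2$ by proving that $I(Y;X|P=p)$ is increasing in each such $\theta_\sigma$ \emph{throughout the box}; this coordinate-wise monotonicity is false in general, because the zero-crossing $K_2(\boldtheta,p)$ moves with $\boldtheta$ and can exceed $2$ inside the box. For instance, for large $c$ and $p\simeq 1/c<\eta_c$, taking $\theta_\sigma=0$ for $\sigma\leq 3$ and $\theta_\sigma=1$ for $\sigma\geq 4$, a Poisson-limit computation gives $\Prob{Y}{1|X=1,P=p}\approx 0.080$, $\Prob{Y}{1|X=0,P=p}\approx 0.019$ and $K_2\approx 2.2$, so $\partialder{\theta_2}I<0$ at that point. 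Likewise, ``continuity with the null-rate regime'' cannot deliver $\theta_1(p)=1$ on $[1/c,\eta_c]$: at $p=\eta_c$ the null-rate minimizer is not unique (it is a whole section of a hyperplane intersected with the box), so there is no single limiting attack to match.

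What is missing, and what the paper actually supplies, is a quantitative bound on $K_2$ evaluated \emph{at the minimizer}, exploiting the threshold structure already obtained. Since $g(x)=xh_b^\prime(x)$ is strictly concave and $\Prob{Y}{1|P=p}$ is the $p$-mixture of $\Prob{Y}{1|X=1,P=p}$ and $\Prob{Y}{1|X=0,P=p}$, one gets $\boldtheta^{*T}\nabla I(Y;X|P=p)>0$; writing this scalar product using the threshold form of $\boldtheta^*$ gives $K_2<B(K_2,p,c)$, where $B(K,p,c)$ is a ratio of truncated binomial sums shown (by an auxiliary lemma) to be nondecreasing in $K$, whence $K_2<B(c-1,p,c)=cp/(1-(1-p)^c)$; since $\eta_c\leq 3/(2c)$ this is below $\tfrac{3}{2}(1-e^{-3/2})\approx 1.93<2$, which pins the optimizer to the form $(0,\theta_1(p),0,\ldots,0,1)^T$. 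The refinement $\theta_1(p)=1$ on $[1/c,\eta_c]$ then follows from showing $K_2>cp\geq 1$ via convexity of $h_b^\prime$ on $[0,1/2]$ (all three probabilities lie in that interval for $c\geq 4$), with a separate l'H\^opital argument for $c=3$; the symmetric statement for $p>1-\eta_c$ (where the interval should be read as $[1-\eta_c,1-1/c]$, as you noticed) follows by the substitution $p\mapsto 1-p$, $\sigma\mapsto c-\sigma$. None of these estimates are in your plan, and they are precisely what converts the (correct) structural observation into a proof.
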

\begin{IEEEproof}
    See Appendix~\ref{app:LemmaClassDsimple}.
\end{IEEEproof}
Using lemmas~\ref{lemma:null-rate-simpledec} and
\ref{lemma:opt-strat-simpledec-outrange}, the optimal collusion
strategy is characterized by the following proposition.
\begin{proposition}
    The worst Class-D collusion strategy $\boldtheta^*(p)$ for a simple decoder is given by:
    \begin{enumerate}
        \item   In the interval $p\in[\eta_c,1-\eta_c]$,
        $\boldtheta^*(p) \in \mathcal{H}^c$,
        where $$\mathcal{H}^c \triangleq \{\boldtheta \in \mathcal{P}_D(c):
        \boldtheta^T (\bm q_{\Sigma1} - \bm q_{\Sigma0}) = 0\},$$
        with
        \begin{eqnarray}
	        {\bm q}_{\Sigma1}&=&(\Prob{\Sigma}{0|X=1,P=p},\ldots,\Prob{\Sigma}{c|X=1,P=p})^T \label{eq:definition_qsigma1}\\
	        {\bm q}_{\Sigma0}&=&(\Prob{\Sigma}{0|X=0,P=p},\ldots,\Prob{\Sigma}{c|X=0,P=p})^T.
	        \label{eq:definition_qsigma0}
        \end{eqnarray}

        \item   For $p \notin [\eta_c,1-\eta_c]$, $\boldtheta^*(p)$
        is given by
        Lemma~\ref{lemma:opt-strat-simpledec-outrange} with $\theta_1(p)
        = 1 - \theta_{c-1}(1-p) = \theta^*$, which is defined as
        \begin{eqnarray}
            \theta^* \triangleq \arg\min_{\theta} \left( h_{b}(g_1(\theta,c,p)) -
            p h_{b}(g_2(\theta,c,p)) -
            (1-p)h_{b}(g_3(\theta,c,p)) \right),
            \label{eq:linesearch}
        \end{eqnarray}
        where
        \begin{eqnarray}
            g_1(\theta,c,p) & = & \theta c p(1-p)^{c-1} + p^c,
            \nonumber\\
            g_2(\theta,c,p) & = & \theta(1-p)^{c-1}+p^{c-1}, \nonumber\\
            g_3(\theta,c,p) & = & \theta(c-1)p(1-p)^{c-2}.
            \nonumber
        \end{eqnarray}
    \end{enumerate}
    \label{prop:optimal-classD-simple}
\end{proposition}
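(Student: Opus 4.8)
The plan is to reduce the proposition to Lemmas~\ref{lemma:null-rate-simpledec} and \ref{lemma:opt-strat-simpledec-outrange}, after which only elementary bookkeeping remains. First I would record the basic fact that, $Y$ and $X$ being binary, $I(Y;X|P=p)\ge 0$ with equality if and only if $Y$ is independent of $X$ given $P=p$, which for binary variables is equivalent to $\Prob{Y}{1|X=1,P=p}=\Prob{Y}{1|X=0,P=p}$. Using the Bernstein-polynomial expansions \eqref{eq:CondProb1}--\eqref{eq:CondProb0}, these two conditional probabilities are precisely the inner products $\boldtheta^T\bm q_{\Sigma1}$ and $\boldtheta^T\bm q_{\Sigma0}$, with $\bm q_{\Sigma1},\bm q_{\Sigma0}$ as in \eqref{eq:definition_qsigma1}--\eqref{eq:definition_qsigma0}. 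Hence $I(Y;X|P=p)=0$ exactly when $\boldtheta^T(\bm q_{\Sigma1}-\bm q_{\Sigma0})=0$, i.e. when $\boldtheta\in\mathcal{H}^c$.

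For item~1, the interval $p\in[\eta_c,1-\eta_c]$, Lemma~\ref{lemma:null-rate-simpledec} guarantees that the minimal rate over $\class{D}{c}$ is $0$; combined with the characterisation above, the set of minimisers is exactly $\mathcal{H}^c$ (which is therefore nonempty and is the intersection of $\class{D}{c}$ with a hyperplane), proving $\boldtheta^*(p)\in\mathcal{H}^c$.

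For item~2, take first $p<\eta_c$. Lemma~\ref{lemma:opt-strat-simpledec-outrange} tells us the worst strategy has the shape $\boldtheta_a(p)=(0,\theta_1,0,\ldots,0,1)^T$, so it remains to optimise the single scalar $\theta_1\in[0,1]$. Plugging this vector into \eqref{eq:ProbY}, \eqref{eq:CondProb1} and \eqref{eq:CondProb0}, and using $\Prob{\Sigma}{c|X=0,P=p}=0$ (a user carrying a `0' cannot sit in an all-ones column), a short computation of Bernstein coefficients gives $\Prob{Y}{1|P=p}=g_1(\theta_1,c,p)$, $\Prob{Y}{1|X=1,P=p}=g_2(\theta_1,c,p)$ and $\Prob{Y}{1|X=0,P=p}=g_3(\theta_1,c,p)$. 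Substituting into the mutual-information formula of this section yields $I(Y;X|P=p)=h_b(g_1(\theta_1,c,p))-p\,h_b(g_2(\theta_1,c,p))-(1-p)h_b(g_3(\theta_1,c,p))$, whose minimiser over the compact interval $[0,1]$ is the quantity $\theta^*$ of \eqref{eq:linesearch}, existence being immediate from continuity. The case $p>1-\eta_c$ I would not redo from scratch: the mutual information $I(Y;X|P=p)$ is invariant under the simultaneous relabelling $X\mapsto 1-X$, $Y\mapsto 1-Y$, $P\mapsto 1-P$, under which $\Sigma\mapsto c-\Sigma$ and hence $\theta_\sigma\mapsto 1-\theta_{c-\sigma}$; applying this to the solution $\boldtheta_a(1-p)$ sends it to $(0,1,\ldots,1,1-\theta_1(1-p),1)^T=\boldtheta_b(p)$, giving $\theta_{c-1}(p)=1-\theta_1(1-p)$ and the same optimal value, consistent with the stated relation $\theta_1(p)=1-\theta_{c-1}(1-p)=\theta^*$.

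The argument is essentially bookkeeping once Lemmas~\ref{lemma:null-rate-simpledec} and \ref{lemma:opt-strat-simpledec-outrange} are in hand, so I do not expect a genuine obstacle here; the only points requiring a little care are (i) verifying that the reduced objective in item~2 is minimised over $[0,1]$ rather than over all of $\mathbb{R}$, so that the boundary value $\theta_1(p)=1$ on $[1/c,\eta_c]$ predicted by Lemma~\ref{lemma:opt-strat-simpledec-outrange} is recovered as a constrained minimiser, and (ii) tracking the index reversal in the symmetry step so that $\boldtheta_b(p)$ and the relation $\theta_1(p)=1-\theta_{c-1}(1-p)$ come out with the correct indices. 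The substantive work is buried in the two lemmas — the characterisation of the root $\eta_c$ of \eqref{eq:polynomial-null-rate} and the ``at most one free component'' structure — which are proved in the referenced appendices.
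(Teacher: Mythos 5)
Your proposal is correct and takes essentially the same route as the paper's proof: part~1 follows from the zero-rate characterisation inside the proof of Lemma~\ref{lemma:null-rate-simpledec} (zero mutual information iff $\boldtheta^T(\bm q_{\Sigma1}-\bm q_{\Sigma0})=0$), and part~2 by inserting the one-degree-of-freedom form from Lemma~\ref{lemma:opt-strat-simpledec-outrange} into the conditional probabilities to obtain $g_1,g_2,g_3$ and then performing the line search. The only cosmetic difference is that you treat $p>1-\eta_c$ via the bit-flip relabelling invariance of the mutual information, while the paper substitutes $\boldtheta_b(p)$ explicitly and invokes the symmetry of $h_b$ --- the same symmetry expressed differently.
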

\begin{IEEEproof}
    Proving the first part of the proposition is straightforward: if $p$ belongs to the
    interval defined in
    Lemma~\ref{lemma:null-rate-simpledec}, then it necessarily implies
    that the global minimum of the mutual information functional is
    achieved by a vector $\boldtheta \in \mathcal{P}^D(c)$. In such
    case, according to the proof of
    Lemma~\ref{lemma:null-rate-simpledec},
    the optimal collusion strategy must fulfill the condition
    (\ref{eq:condition-null-mutinf}).

    For proving the second part of the proposition we resort to
    Lemma~\ref{lemma:opt-strat-simpledec-outrange}, which states
    that for $p \notin [\eta_c, 1-\eta_c]$ the optimal strategy
    has only one degree of freedom. We have to consider two cases:
    \begin{enumerate}
        \item   If $p<1/2$: We have $\Prob{Y}{1|P=p,\boldcaptheta=\boldtheta_a(p)} =
        g_1(\theta_1(p), c, p)$, $\Prob{Y}{1|P=p,\boldcaptheta=\boldtheta_a(p),X=1} =
        g_2(\theta_1(p), c, p)$, and $\Prob{Y}{1|P=p,\boldcaptheta=\boldtheta_a(p),X=0} =
        g_3(\theta_1(p), c, p)$. Hence, the parameter $\theta_1(p)$ of the optimal collusion strategy is the result of (\ref{eq:linesearch}).

        \item   If $p>1/2$:
        $\Prob{Y}{1|P=p,\boldcaptheta=\boldtheta_b(p)} =
        1 - g_1(1-\theta_{c-1}(p), c, 1-p)$, $\Prob{Y}{1|P=p,\boldcaptheta=\boldtheta_b(p),X=1} =
        1 - g_3(1-\theta_{c-1}(p), c, 1-p)$, and $\Prob{Y}{1|P=p,\boldcaptheta=\boldtheta_b(p),X=0} =
        1 - g_2(1-\theta_{c-1}(p), c, 1-p)$. Taking into account the symmetry of the binary entropy function $h_{b}(.)$, it is easy to see
        that for the optimum strategy $\theta_{c-1}(p)=1-\theta_{1}(1-p)$.
    \end{enumerate}
\end{IEEEproof}

\begin{figure*}[t] 
    \begin{minipage}[b]{0.49\textwidth}
    		\psfrag{prov}[c][]{{\footnotesize $\boldtheta^*$}}
        \centerline{\epsfig{figure = 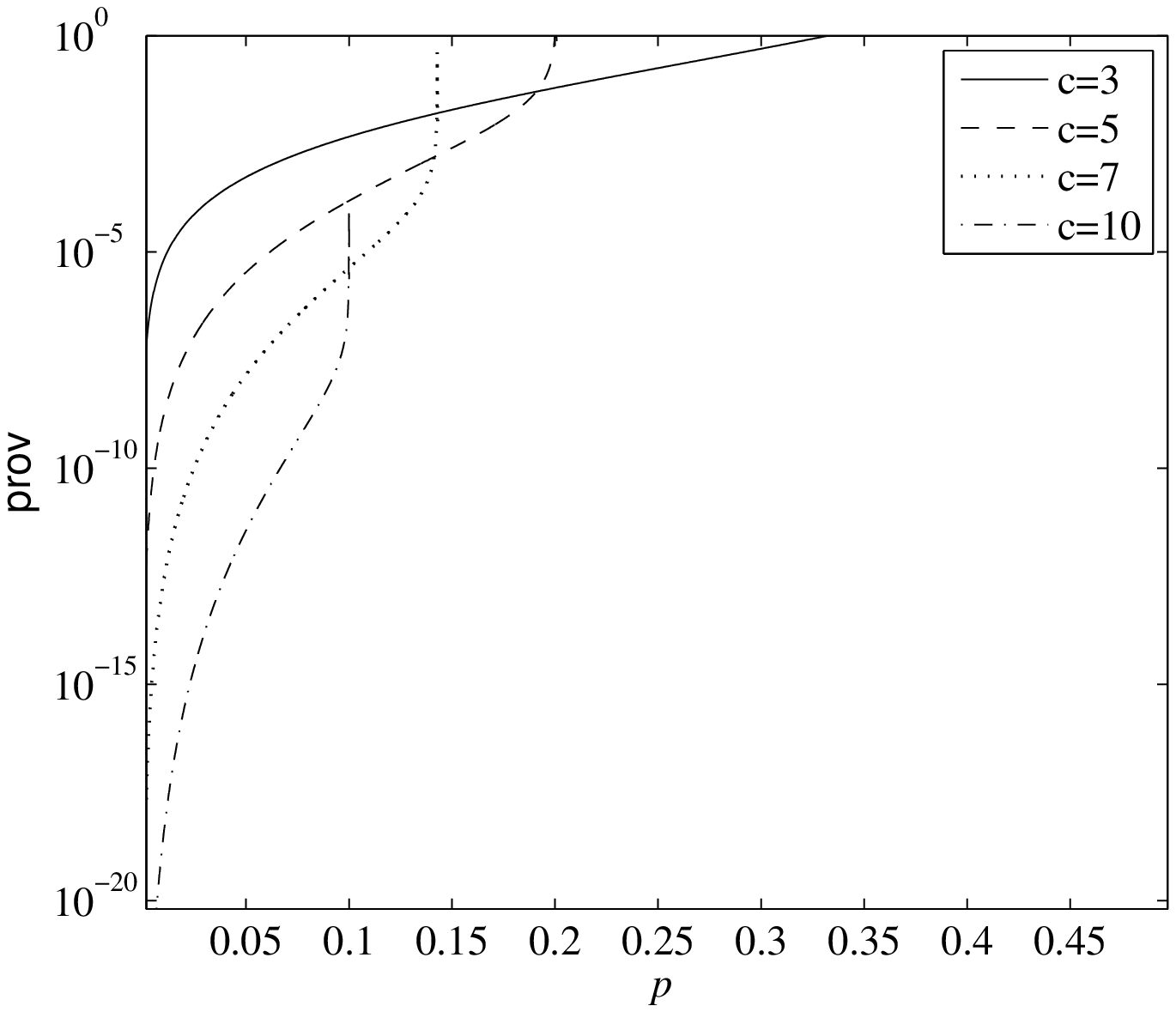, width=1.0\textwidth}}    
        \centerline{(a)}
    \end{minipage}
    \begin{minipage}[b]{0.49\textwidth}
    		\psfrag{prov}[c][]{{\footnotesize $r_{\simple}^D(c,p)$ (bits)}}
        \centerline{\epsfig{figure = 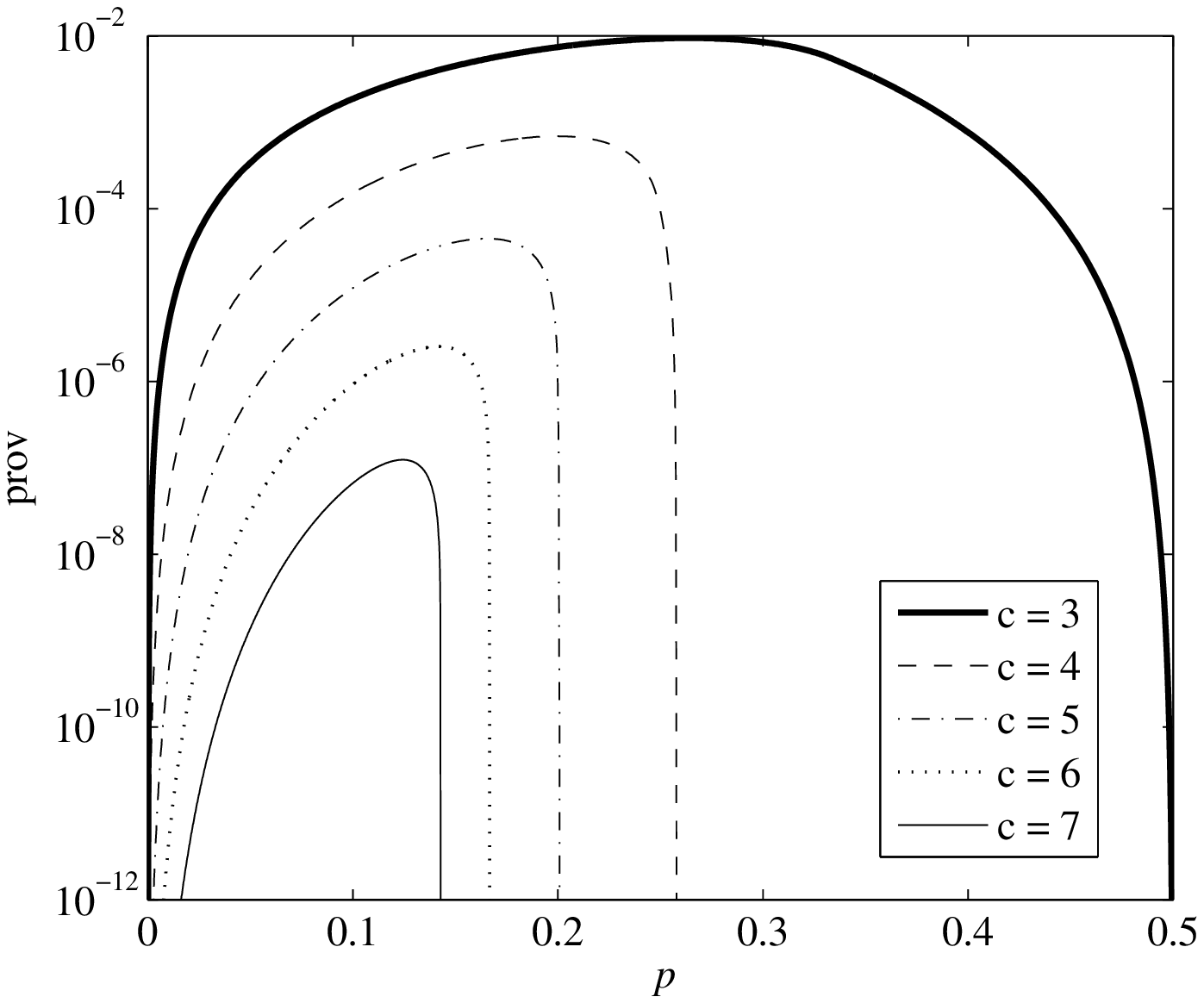, width=1.0\textwidth}}
        \centerline{(b)}
    \end{minipage}
    \caption{Simple decoder against worst Class-D collusion:
    plot of $\boldtheta^*$ according to \eqref{eq:linesearch} (a), and plot of $r_{\simple}^D(c,p)$ (b).}
    \label{fig:opt-simple-knownp}
\end{figure*}

Let us makes some comments about the optimal Class-D attack.

This attack may seem somewhat counterintuitive. The
simplest strategy when the colluders know $p$ would be to generate
a new sequence independent from the embedded sequences:
$\theta_\sigma(p)=p$. However, when all the colluders' symbols are
the same, they cannot generate the desired output. This is why
this simple strategy is indeed not the worst.

Fig.~\ref{fig:opt-simple-knownp}(a) shows the value of the optimal parameter $\theta_1(p)$ for $p\notin[\eta_c,1-\eta_c]$. A corollary of Prop.~\ref{prop:optimal-classD-simple} is
that $r^D_{\simple}(c,p) = r^D_{\simple}(c,1-p)$.
Fig.~\ref{fig:opt-simple-knownp}(b) shows $r_{\simple}^D(c,p)$ for $p \in [0, 0.5]$ and a different number of colluders, and Fig.~\ref{fig:rate-simple} shows the achievable rate for the Tardos and flat pdf compared to the rates achievable under the other classes of attacks. Surprisingly, $R_{\simple}^D(c)$ is not null, although its decrease seems to be exponentially fast as for the joint decoder.
For every $c$, the capacity-achieving pdf is a symmetric two Dirac's deltas distribution in the values of $p$ maximizing $r_{\simple}^D(c,p)$. 

In the interval $p\in[\eta_c,1-\eta_c]$, the optimal
collusion strategy is given by any vector $\boldtheta$ in the
intersection between a hyperplane and the feasible set
$\mathcal{P}^D(c)$. Hence, the solution is not unique. Yet the
problem is convex, all the solutions cancel the achievable rate.
Notice that the minimum collusion size for nullifying the achievable rate is $c=3$. As proved in Appendix~\ref{app:null-rate-simple-knownp},
for $c=3$ this can be achieved only for the
singleton $p=1/2$, and the resulting worst collusion is the minority collusion strategy.

These results show the need for time-sharing if we want to be
protected against malicious attack based on Class-D collusion
strategies. For instance, a codebook with a fixed value
$p=1/2$ is a bad idea since colluders can always nullify the rate as long as
they are at least 3.

\section{Conclusion}
In this paper we have carried out a performance assessment of probabilistic traitor tracing codes from an information-theoretic point of view. From our investigation, considering four different classes of attackers with increasing power and two different classes of decoders, several important conclusions can be drawn.

Let us first list the bad news. Not knowing the embedded symbols (e.g. the Class-B, a.k.a. symmetric channel~\cite{Amiri2009:High}, or multimedia scenario~\cite{Furon08IH}) does not make the colluders less powerful (see Prop.~\ref{prop:ClassBJoint1} for the joint decoder, and Conj.~\ref{conj:ClassBsimple1} for the simple decoder). The case of the joint decoder is even more hopeless: the simplest collusion attack, Class-A, is asymptotically optimal.
A mixed result is the following: disclosing the secret time-sharing sequence opens the door to a powerful collusion attack but, surprisingly, it does not render the code completely useless since the achievable rate is indeed strictly positive.

The goods news are seldom: the time-sharing sequence plays a key role in the performance of the probabilistic traitor tracing code, offering a polynomial decrease of the achievable rate instead of an exponential decay. The achievable rate of the simple decoder is not so smaller than the one of the joint decoder. This is good news because the complexity of the simple decoder is in $O(n)$ whereas the one of the joint decoder is almost in $O(n^c)$, and in some scenarios, $n$ can be very large. On the other hand, we have focused in the study of two particular codes, but as we have seen, their performance is really close to that of the optimal code (asymptotically the same), obtained in \cite{Amiri2009:High} for the joint decoder. Furthermore, the codes studied here make use of a fixed time-sharing distribution, whereas for the capacity-achieving codes it is strongly dependent on the number of colluders.

The problem of finding the optimal time-sharing distribution for the simple decoder still remains open. However, the results of this paper suggest that no big improvement will be brought about over the existing Tardos pdf, especially if a large number of colluders is concerned. Our future works will investigate the trade-off between the complexity and the efficiency of the decoder proposing new traitor tracing decoding algorithms.

\section*{Acknowledgements}
The authors are grateful to Dr. Arnaud Guyader for his valuable advise while preparing the revised version of this paper.

\appendices

\section{Proofs of the propositions about the joint decoder}
\label{App:ClassBJoint}

\subsection{Proof of Prop.~\ref{prop:JointClassA}}
\label{sub:ProofClassAJoint}

The expression \eqref{eq:RateJointClassA} of the rate $R^{A}_{\joint}(c)$ can be rewritten in terms of a double expectation:
$$
R^{A}_{\joint}(c) =c^{-1}\expect{P}{\expect{S_{c}}{h_{b}(P)-h_{b}(S_{c})|P=p}}, 
$$ 
where $S_{c}$ is a random variable distributed as a binomial $B(c,p)$ but divided by $c$. Thus, its expectation equals $p$ and its variance $p(1-p)c^{-1}$.
For a given $p\in(0,1)$, we have:
$$
h_{b}(S_{c})=h_{b}(p)+(S_{c}-p)h_{b}^{\prime}(p)+(S_{c}-p)^2h_{b}^{\prime\prime}(p)/2+o_{S_{c}}((S_{c}-p)^2),
$$
where $o_{S_{c}}(\phi(S_{c}))$ means that, statistically, the term $\phi(S_{c})$ is getting smaller and smaller in the sense that $\forall \epsilon>0, \Prob{S_{c}}{|\phi(S_{c})|>\epsilon}\stackrel{c\rightarrow+\infty}{\rightarrow}0$. Taking the expectation conditioned on $P=p$, and the natural logarithm in $h_b(\cdot)$, we have:
\begin{eqnarray}
\expect{S_{c}}{h_b(P)-h_{b}(S_{c})|P=p}&=&-\expect{S_{c}}{S_{c}-p}h_{b}^{\prime}(p)-\expect{S_{c}}{(S_{c}-p)^2}h_{b}^{\prime\prime}(p)/2-o(\expect{S_{c}}{(S_{c}-p)^2}),\nonumber\\
&=&-\frac{p(1-p)}{2c}h_{b}^{\prime\prime}(p)+o(p(1-p)c^{-1})\\
&=&\frac{1}{2\ln(2)c}+p(1-p)o(c^{-1}).
\end{eqnarray}
Therefore, we can write (in natural units) that $R^{A}_{\joint}(c)=\displaystyle{\frac{1}{2\ln(2)c^2}}+\expect{P}{p(1-p)}o(c^{-2})$, and Prop.~\ref{prop:JointClassA} follows.

\subsection{Proof of Prop.~\ref{prop:ClassBJoint1}}
\label{sub:ProofPropClassBJoint1}
The proof uses the following two lemmas:
\begin{lemma}
Class-B collusion attacks have the following property:
\begin{equation}
\Prob{Y}{1|P=p}=1-\Prob{Y}{1|P=1-p}.
\label{eq:AntiSymProbY}
\end{equation}
\end{lemma}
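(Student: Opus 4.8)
The plan is to work directly from the closed form of $\Prob{Y}{1|P=p}$ given in \eqref{eq:ProbY}, namely $\Prob{Y}{1|P=p}=\sum_{\sigma=0}^c\theta_\sigma\Prob{\Sigma}{\sigma|P=p}$ with the Bernstein polynomials $\Prob{\Sigma}{\sigma|P=p}=\binom{c}{\sigma}p^\sigma(1-p)^{c-\sigma}$, and to exploit two elementary facts: the reflection symmetry of the Bernstein basis, $\Prob{\Sigma}{\sigma|P=1-p}=\Prob{\Sigma}{c-\sigma|P=p}$ (immediate since $\binom{c}{\sigma}=\binom{c}{c-\sigma}$), and the Class-B constraint $\theta_\sigma=1-\theta_{c-\sigma}$ from \eqref{eq:classBcollusion}.

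First I would substitute $p\mapsto 1-p$ in the expression for $\Prob{Y}{1|P=p}$ to get $\Prob{Y}{1|P=1-p}=\sum_{\sigma=0}^c\theta_\sigma\Prob{\Sigma}{c-\sigma|P=p}$. Then I would re-index the sum with $\sigma'=c-\sigma$, turning it into $\sum_{\sigma'=0}^c\theta_{c-\sigma'}\Prob{\Sigma}{\sigma'|P=p}$, and apply the Class-B relation $\theta_{c-\sigma'}=1-\theta_{\sigma'}$. Splitting the resulting sum into two pieces gives $\sum_{\sigma'=0}^c\Prob{\Sigma}{\sigma'|P=p}-\sum_{\sigma'=0}^c\theta_{\sigma'}\Prob{\Sigma}{\sigma'|P=p}$. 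The first piece equals $1$ by the binomial theorem, since $\sum_{\sigma'=0}^c\binom{c}{\sigma'}p^{\sigma'}(1-p)^{c-\sigma'}=(p+(1-p))^c=1$, and the second piece is exactly $\Prob{Y}{1|P=p}$. Hence $\Prob{Y}{1|P=1-p}=1-\Prob{Y}{1|P=p}$, which is \eqref{eq:AntiSymProbY} after trivial rearrangement.

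There is no genuine obstacle here; the only point requiring a little care is bookkeeping in the index substitution and making sure the endpoint conventions $\theta_0=0$, $\theta_c=1$ are consistent with $\theta_\sigma=1-\theta_{c-\sigma}$ (they are: $\theta_0=1-\theta_c$). I would also note in passing that the argument uses only the symmetry relation among the $\theta_\sigma$'s and nothing about whether $f(p)$ is symmetric, so the statement is a pointwise identity in $p$ valid for every Class-B attack vector $\boldtheta$.
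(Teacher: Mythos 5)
Your proof is correct and follows essentially the same route as the paper, which proves the lemma by exactly the change of variables $p'=1-p$ and $\sigma'=c-\sigma$ applied to \eqref{eq:ProbY}; you simply spell out the bookkeeping (Bernstein reflection symmetry, the Class-B relation $\theta_{c-\sigma}=1-\theta_\sigma$, and the binomial-theorem normalization) that the paper leaves implicit.
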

This is easily proven with the change of variables $p'=1-p$ and $\sigma'=c-\sigma$ in~(\ref{eq:ProbY}).

\begin{lemma}
If $f(p)$ is symmetric, i.e. $f(p)=f(1-p),\,\forall p\in(0,1)$, and $q^{(k-1)}(p)=1-q^{(k-1)}(1-p),\,\forall p\in[0,1]$, then $B^{(k)}(\sigma)=1/B^{(k)}(c-\sigma)$.
\end{lemma}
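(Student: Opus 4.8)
The plan is to directly compute $B^{(k)}(\sigma)$ and $B^{(k)}(c-\sigma)$ from the definition~\eqref{eq:BSigma} and exhibit a change of variables that turns one into the reciprocal of the other. The whole argument should be an exercise in substituting $p \mapsto 1-p$ inside the two expectations that build $B^{(k)}$, exploiting simultaneously the symmetry of $f(p)$, the antisymmetry relation $q^{(k-1)}(p) = 1 - q^{(k-1)}(1-p)$ assumed in the hypothesis, and the elementary identity for Bernstein polynomials $\Prob{\Sigma}{\sigma\,|\,P=p} = \binom{c}{\sigma} p^\sigma (1-p)^{c-\sigma} = \Prob{\Sigma}{c-\sigma\,|\,P=1-p}$.

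\textbf{Key steps.} First I would write out
\[
B^{(k)}(c-\sigma) = 2\,\hat{}\,\left(\frac{\expect{P}{\Prob{\Sigma}{c-\sigma|P=p}\log\frac{1-q^{(k-1)}(p)}{q^{(k-1)}(p)}}}{\expect{P}{\Prob{\Sigma}{c-\sigma|P=p}}}\right),
\]
where each expectation is $\int_0^1 (\cdots) f(p)\,dp$. In both integrals I substitute $p' = 1-p$, so $dp' = -dp$ and the limits flip back to $[0,1]$; by symmetry $f(1-p') = f(p')$, by the Bernstein identity $\Prob{\Sigma}{c-\sigma|P=1-p'} = \Prob{\Sigma}{\sigma|P=p'}$, and by the antisymmetry hypothesis $q^{(k-1)}(1-p') = 1 - q^{(k-1)}(p')$, hence
\[
\log\frac{1-q^{(k-1)}(1-p')}{q^{(k-1)}(1-p')} = \log\frac{q^{(k-1)}(p')}{1-q^{(k-1)}(p')} = -\log\frac{1-q^{(k-1)}(p')}{q^{(k-1)}(p')}.
\]
The denominator expectation is therefore unchanged (it becomes $\expect{P}{\Prob{\Sigma}{\sigma|P=p'}}$), while the numerator expectation picks up an overall minus sign (it becomes $-\expect{P}{\Prob{\Sigma}{\sigma|P=p'}\log\frac{1-q^{(k-1)}(p')}{q^{(k-1)}(p')}}$). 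Consequently the exponent of $2$ defining $B^{(k)}(c-\sigma)$ is exactly the negative of the exponent defining $B^{(k)}(\sigma)$, i.e. $B^{(k)}(c-\sigma) = 2^{-(\cdots)} = 1/B^{(k)}(\sigma)$, which rearranges to the claim $B^{(k)}(\sigma) = 1/B^{(k)}(c-\sigma)$.

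\textbf{Where the work is.} There is essentially no obstacle here — the lemma is a bookkeeping statement, and the only thing to be careful about is that the three structural facts being used (symmetry of $f$, the Bernstein reflection identity, the antisymmetry of $q^{(k-1)}$) are combined in the right places: $f$ and the Bernstein factor conspire to leave the denominator invariant, while the antisymmetry of $q^{(k-1)}$ is solely responsible for the sign flip in the log. I would also note in passing why the hypothesis on $q^{(k-1)}$ is not vacuous: it is propagated through the iteration, since if $\boldtheta^{(k)}$ is chosen symmetrically (which is exactly what $B^{(k)}(\sigma) = 1/B^{(k)}(c-\sigma)$ guarantees via~\eqref{eq:1StepRJoint}, giving $\theta_\sigma^{(k)} = 1 - \theta_{c-\sigma}^{(k)}$), then by Lemma~1 (the antisymmetry of $\Prob{Y}{1|P=p}$ for Class-B attacks) and~\eqref{eq:2StepRJoint} the updated $q^{(k)}(p)$ again satisfies $q^{(k)}(p) = 1 - q^{(k)}(1-p)$ — but that induction is the business of the surrounding Prop.~\ref{prop:ClassBJoint1} proof, not of this lemma.
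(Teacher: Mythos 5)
Your proof is correct and follows essentially the same route as the paper's: the change of variables $p\mapsto 1-p$ (equivalently $\sigma\mapsto c-\sigma$), combined with the symmetry of $f$, the Bernstein reflection identity, and the antisymmetry of $q^{(k-1)}$, leaves the denominator of the exponent in~\eqref{eq:BSigma} invariant and negates the numerator, giving $B^{(k)}(c-\sigma)=1/B^{(k)}(\sigma)$. Your closing remark correctly identifies that the propagation of the hypothesis on $q^{(k-1)}$ belongs to the surrounding proof of Prop.~\ref{prop:ClassBJoint1}, not to this lemma.
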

Again, the change of variables $p'=1-p$ and $\sigma'=c-\sigma$ shows that:
\begin{eqnarray}
\expect{P}{\Prob{\Sigma}{\sigma|P=p}}&=&\expect{P}{\Prob{\Sigma}{c-\sigma|P=p}}\\
\expect{P}{\Prob{\Sigma}{\sigma|P=p}\log\frac{1-q^{(k-1)}(p)}{q^{(k-1)}(p)}}&=&-\expect{P}{\Prob{\Sigma}{c-\sigma|P=p}\log\frac{1-q^{(k-1)}(p)}{q^{(k-1)}(p)}}
\label{eq:LemmaRJoint}
\end{eqnarray}
Thus, $B^{(k)}(\sigma)=1/B^{(k)}(c-\sigma)$.

These two lemmas show that the Class B is closed for the iteration defined in the proposed algorithm, for any symmetric pdf $f(p)$. In other words, if $\boldtheta^{(k)}\in\class{B}{c}$, then so is $\boldtheta^{(k+1)}$. Since the Blahut-Arimoto algorithm converges to the minimum achievable rate whatever the initial vector $\boldtheta^{(0)}$, and in particular, for $\boldtheta^{(0)}\in\class{B}{c}$, we can conclude that Class-B colluders can lead the worst case collusion.

\subsection{Proof of Prop.~\ref{th:worstattack-joint-knownp}}
\label{app:worstattack-joint-knownp}

We compute the gradient of the mutual information with respect to
the parameters of the collusion model $\theta_\sigma,\, \sigma \in
[c-1]$. For the first term in the rhs of
(\ref{eq:rate-joint-decoder}):
\begin{equation}
\frac{\partial}{\partial\theta_{\sigma}}H(Y|P=p)=
\Prob{\Sigma}{\sigma|P=p}\log\left(\frac{1-\Prob{Y}{1|P=p}}{\Prob{Y}{1|P=p}}\right).
\label{eq:gradient-HY-joint}
\end{equation}
For the conditional entropy:
\begin{eqnarray}
    \frac{\partial}{\partial\theta_{\sigma}} H(Y|\Sigma,P=p) & = & \sum_{\sigma=0}^c   \frac{\partial}{\partial\theta_{\sigma}}\left(H(Y|\Sigma=\sigma)\Prob{\Sigma}{\sigma|P=p} \right) \nonumber\\
    & = &  \Prob{\Sigma}{\sigma|P=p}\log\left(\frac{1-\theta_\sigma}{\theta_\sigma}\right)
    \label{eq:gradient-condentropy-Y-joint}
\end{eqnarray}
By combining (\ref{eq:gradient-HY-joint}) and
(\ref{eq:gradient-condentropy-Y-joint}), we obtain the expression
\begin{eqnarray}
    \frac{\partial}{\partial\theta_{\sigma}} I(Y;\Sigma|P=p) =    \Prob{\Sigma}{\sigma|P=p}\log\left(\frac{(1-\Prob{Y}{1|P=p})\theta_\sigma}{(1-\theta_\sigma)\Prob{Y}{1|P=p}}\right). \nonumber
\end{eqnarray}
Hence, in order to cancel the gradient we need to fulfill
$\Prob{Y}{1|P=p} = \theta_\sigma =
\theta^*,\,\forall\,\sigma\in[c-1]$. This condition can be written
as
\begin{eqnarray}
    \theta^* & = & \Prob{Y}{1|P=p,\boldcaptheta=\boldtheta} =\theta^*\sum_{\sigma=1}^{c-1} \Prob{\Sigma}{\sigma|P=p} + \Prob{\Sigma}{c|P=p}  \nonumber\\
    & = & \theta^*(1-(1-p)^c-p^c) + p^c.
\end{eqnarray}
Working out this last expression, the Class-D worst case collusion
results in the one stated in Prop.~\ref{th:worstattack-joint-knownp}.

\subsection{Proof of Prop.~\ref{th:joint_capacity_classD}}
\label{app:joint_capacity_classD}

According to Section~\ref{sec:class-z-attack}, the rate can be written as
$R^D_{joint}(c) = \expect{P}{r^D(c,P)}$. Our objective here is to show
that $R^D_{joint}(c)$ is maximized for $f(p) = \delta(p-1/2)$. We first
insert \eqref{eq:optimal-attack-joint-knownp-final} in
\eqref{eq:rjoint-kl} to obtain, after simplifications,
\begin{eqnarray}
	r^D(c,p) = \frac{1}{c}\left( p^c \log\left( \frac{(1-p)^c}{p^c} +
1\right) + (1-p)^c \log\left( \frac{p^c}{(1-p)^c} + 1 \right) \right),\;
\mbox{for } p \in [0,1].
	\nonumber
\end{eqnarray}
This function is not negative and symmetric: $r^D(c,p)=r^D(c,1-p)$. Its
derivative in $p$ can readily be shown to be given, after pertinent
simplications, by the following expression:
\begin{equation}
r^{D\prime}(c,p)=(p^{c-1}+(1-p)^{c-1})\left(\frac{1-\theta^\star(p)}{1-p}\log(1-\theta^\star(p))-\frac{\theta^\star(p)}{p}\log(\theta^\star(p))\right)
\end{equation}
This function clearly cancels in $p\in\{0,1/2,1\}$. We only focus on the
interval $p\in (0,1/2)$ to show that it never cancels again.
Then $(1-p)^{-1}<2$ and $-p^{-1}<-2$. Since $\log(1-\theta^\star(p))$ and
$\log(\theta^\star(p))$ have negative values, we have:
\begin{equation}
r^{D\prime}(c,p)>
2(p^{c-1}+(1-p)^{c-1})\left((1-\theta^\star(p))\log(1-\theta^\star(p))-\theta^\star(p)\log(\theta^\star(p)\right)
\end{equation}
Knowing that $0<\theta^{\star}(p)<1/2$ on the interval $p\in (0,1/2)$ and that
$(1-x)\log(1-x)-x\log(x)$ is positive for $0<x<1/2$, it appears that the
derivative is strictly positive over $p\in (0,1/2)$.
This proves that $r^D(c,p)$ is strictly increasing on this interval and
reaches a unique maximum in $p=1/2$.

\section{Proofs of the propositions about the simple decoder}
\label{App:ClassBsimple}

\subsection{Proof of Lemma~\ref{lemma:null-rate-simpledec}}
\label{app:null-rate-simple-knownp}

We first redefine (\ref{eq:CondProb1}) and (\ref{eq:CondProb0}) as:
\begin{eqnarray}
    \Prob{Y}{1|X=1,P=p} & = & \boldtheta^T \bm q_{\Sigma1}, \nonumber\\
    \Prob{Y}{1|X=0,P=p} & = & \boldtheta^T \bm q_{\Sigma0},
    \label{eq:probsy}
\end{eqnarray}
with $\bm q_{\Sigma1}$ and $\bm q_{\Sigma0}$ defined in \eqref{eq:definition_qsigma1} and \eqref{eq:definition_qsigma0}. The $\sigma$-th component of $\bm q_{\Sigma1}$ and $\bm
q_{\Sigma0}$ is also given by $\Prob{\Sigma}{\sigma|P=p}\sigma/(cp)$
and $\Prob{\Sigma}{\sigma|P=p}(c-\sigma)/(c(1-p))$, respectively.
A necessary and sufficient condition for achieving $I(Y;X|P=p)=0$
is that $\Prob{Y}{y|X=1,P=p}=\Prob{Y}{y|X=0,P=p}.$ Taking into
account the identities above, this can be expressed as
\begin{eqnarray}
    J(\boldtheta) \triangleq \boldtheta^T (\bm q_{\Sigma1}- \bm q_{\Sigma0}) = 0.
    \label{eq:condition-null-mutinf}
\end{eqnarray}
Hence, we must find at least one vector $\boldtheta \in\mathcal{P}^D(c)$ orthogonal to
$(\bm q_{\Sigma1}- \bm q_{\Sigma0})$, with $\mathcal{P}^D(c)$ defined in
\eqref{eq:classDcollusion}.
Taking into account the linearity of the scalar product, and that $\theta_0=0$, $\theta_c=1$ by the marking assumption, $J(\boldtheta,p)$ can be written as a convex conical combination of scalar products: 
\begin{eqnarray}
	J(\boldtheta,p) = \rho_c(p) + \sum_{i=1,\ldots,c-1} \theta_i \cdot \rho_i(p),\; \theta_i \in [0,1]
	\label{eq:conical-comb}
\end{eqnarray}
where 
\begin{eqnarray}
	\rho_i(p) = \bm e_{i+1}^T(\bm q_{\Sigma1}- \bm
q_{\Sigma0}) = \nchoosek{c}{i}p^{i-1}(1-p)^{c-i-1}(i/c-p), \forall i\in[c],
\end{eqnarray}
with $\bm e_i$ the $i$th canonical vector.

Note that, on the interval $[0,1/c]$, only $\rho_{0}(p)$ has negative values, but this term is excluded from the sum since $\theta_{0}=0$. Hence, \eqref{eq:condition-null-mutinf} can't be satisfied on this interval. In the same way, $\rho_{1}(p)$ is the only term producing negative values over the interval $[1/c,2/c]$.
Therefore, we have the lower bound:
\begin{eqnarray}
J(\boldtheta,p) \geq J(\boldtheta_{low},p) = \rho_{1}(p)+\rho_{c}(p),\; p\in[1/c,2/c],
\label{eq:lowbound-proof}
\end{eqnarray}
with $\boldtheta_{low}=(0,1,0,\ldots,0,1)^T$.

For $c=3$, $J(\boldtheta_{low},p)=(2p-1)^2\geq 0$. Therefore, it is not possible to find any vector $\boldtheta \in\mathcal{P}^D(c)$ orthogonal to $(\bm q_{\Sigma1}- \bm q_{\Sigma0})$, except if $p=1/2$ and then $\boldtheta_{low}=(0,1,0,1)^T$ (i.e. a minority vote) cancels the mutual information. 

For $c>3$, $J(\boldtheta_{low},p) = (1-p)^{c-2}(1-cp)+p^{c-1}$ is positive for $p=1/c$ and negative for $p=2/c$. Therefore, there exists some $\eta_{c}\in[1/c,2/c]$ such that, for $p>\eta_c$, $J(\boldtheta_{low},p)$ is negative. 
The vector $\boldtheta=(0,\ldots,0,1)$ gives $J(\boldtheta,p) = \rho_{c}(p) > 0\,\forall\,p\in[0,1]$. Therefore, by continuity, there exists at least one vector $\boldtheta$ satisfying ~\eqref{eq:condition-null-mutinf} and thus canceling the mutual information. Conversely, for $p<\eta_{c}$, \eqref{eq:condition-null-mutinf} cannot be satisfied. Moreover, $J(\boldtheta_{low},p)$ can be shown to be negative in the whole interval $[\eta_c,1/2]$, for which $\rho_c(p)$ is strictly positive. Hence, \eqref{eq:condition-null-mutinf} can be satisfied in this whole interval.

As $c$ increases, $\eta_c$ asymptotically approaches $1/c$ (see Tab.~\ref{tab:etac}). Intuitively, this is explained as follows: the behavior of $J(\boldtheta_{low},p)$ over $[1/c,2/c]$ is dominated by the term $\rho_{1}(p)$ which is strictly decreasing on this interval and equaling zero in $p=1/c$. This justifies why $\lim_{c\rightarrow\infty}\eta_{c}-1/c=0$. To be more rigorous, let us first denote $u=cp$ with $u\in[1,2]$. In the interval $p\in[1/c,2/c]$, the polynomial $J(\boldtheta_{low},p)$ in \eqref{eq:lowbound-proof} can be expressed as $J(\boldtheta_{low},u) = (1-u/c)^{c-2}(1-u)+(u/c)^{c-1}$. 
For $u\in [1,2]$, $J(\boldtheta_{low},u) \leq (1-2/c)^{c-2}(1-u)+(2/c)^{c-1}$ which cancels for $u_{c}=1 + ((2/c)^{c-1})/(1-2/c)^{c-2}$, and $\lim_{c\rightarrow\infty} u_{c}=1$. Since $1/c\leq\eta_{c}\leq u_{c}/c$, then $\lim_{c\rightarrow\infty} \eta_{c}-1/c=0$. From the expresion of $u_{c}$, we can write $\eta_{c}=1/c + O(1/c^{c})=1/c+o(1/c)$ since $c>2$.



\begin{table}
\begin{center}
\caption{Values of $\eta_{c}$.} 
\label{tab:etac}
  \begin{tabular}{|l||c|c|c|c|c|c|c|}
    \hline
    $c$ & 3 & 4 & 5 & 6 & 10 & 15 & 20 \\
    \hline
	$\eta_{c}-1/c$ & $1.7*10^{-1}$ & $7.8*10^{-3}$ & $6.3*10^{-4}$ & $4.5*10^{-5}$ & $2.3*10^{-10}$ & $<\epsilon$ & $<\epsilon$ \\
    \hline 
  \end{tabular}
  \end{center}
\end{table}

The same rationale holds on the interval $[1-2/c,1-1/c]$, where all the scalar products have negative values except $\rho_{c-1}(p)$, hence a lower bound for \eqref{eq:conical-comb} is:
$$
J(\boldtheta, p) \geq \sum_{i=1}^{c-2}\rho_{i}(p)+\rho_{c}(p).
$$
We can simplify the lower bound into: $p^{c-2}(1-c(1-p))+(1-p)^{c-1}$, which is the symmetric version of the first bound. Hence, for $p>1-\eta_{c}$, it is not possible to cancel the mutual information.

\subsection{Proof of Lemma~\ref{lemma:opt-strat-simpledec-outrange}}
\label{app:LemmaClassDsimple}

For the sake of simplicity, we replace the notation $P=p$ by $p$ and $X=x$ by $x$ in the sequel.
This appendix concerns the worst case for values of $p$ outside the interval $[\eta_c,1-\eta_c]$, i.e. $\Prob{Y}{1|p}\neq\Prob{Y}{1|0,p}\neq\Prob{Y}{1|1,p}$ necessarily.
Denote by $\nabla I(Y;X|p)(\sigma)$ the derivative with respect to the parameter of the collusion model $\theta_\sigma$:
\begin{eqnarray}
\nabla I(Y;X|p)(\sigma)&=&\Prob{\Sigma}{\sigma|p}h_b^\prime(\Prob{Y}{1|p})-p\Prob{\Sigma}{\sigma|1,p}h_b^\prime(\Prob{Y}{1|1,p})\nonumber\\
&-&(1-p)\Prob{\Sigma}{\sigma|0,p}h_b^\prime(\Prob{Y}{1|0,p})
\label{eq:DerivMutInfoKnowing}
\end{eqnarray}
with $h_b^\prime(x)=\log\frac{1-x}{x}$, the derivative of the binary entropy which is strictly decreasing. This simplifies in
\begin{eqnarray}
\nabla I(Y;X|p)(\sigma)&=&\Prob{\Sigma}{\sigma|p}\left(h_b^\prime(\Prob{Y}{1|p})-\frac{\sigma}{c}h_b^\prime(\Prob{Y}{1|1,p})-\frac{c-\sigma}{c}h_b^\prime(\Prob{Y}{1|0,p})\right)\label{eq:DerivMutInfoKnowing2}\\
&=&\Prob{\Sigma}{\sigma|p}K_1(p,c)(\sigma-K_2(p,c))
\label{eq:K2}
\end{eqnarray}
with
\begin{eqnarray}
K_1(p,c)&=&c^{-1}(h_b^\prime(\Prob{Y}{1|0,p})-h_b^\prime(\Prob{Y}{1|1,p}))\\
K_2(p,c)&=&c\frac{h_b^\prime(\Prob{Y}{1|0,p})-h_b^\prime(\Prob{Y}{1|p})}{h_b^\prime(\Prob{Y}{1|0,p})-h_b^\prime(\Prob{Y}{1|1,p})}
\label{eq:K_2def}
\end{eqnarray}

For the parameters of the collusion attack $\boldtheta^*$ (except $\theta_0=0$ and $\theta_c=1$), there are three possibilities :
\begin{itemize}
    \item if $\theta^*_\sigma\in]0,1[$ then $\nabla I(Y;X|p)(\sigma)=0$,
    \item if $\theta^*_\sigma=0$, then $\nabla I(Y;X|p)(\sigma)\geq0$,
    \item if $\theta^*_\sigma=1$, then $\nabla I(Y;X|p)(\sigma)\leq0$.
\end{itemize}

From now on, we detail the case of $p\in[0,\eta_{c})$, but the case of the interval $[1-\eta_{c},1]$ can be deduced by symmetry. Appendix~\ref{app:null-rate-simple-knownp} shows that $J(\boldtheta,p)$, which was defined in \eqref{eq:condition-null-mutinf}, is positive for $p\in[0,\eta_{c})$. This implies that $\Prob{Y}{1|0,p}<\Prob{Y}{1|p}<\Prob{Y}{1|1,p}$. Since $h_b^\prime(x)$ is strictly decreasing, $0<K_{1}(p,c)$ and $0\leq K_{2}(p,c)\leq c$. Therefore, if $\theta_{\sigma}^{\star}=1$ (resp. 0) then $\sigma\leq K_{2}(p,c)$ (resp. $K_{2}(p,c) \leq \sigma$), and if $\theta_{K_{2}(p,c)}^{\star}\in[0,1]$ then $K_{2}(p,c)\in\{1,2,...,c-1\}$. In the sequel we look for closer bounds on $K_{2}(p,c)$.

Bound \#1: $1\leq K_{2}(p,c)\leq c$. This amounts to prove that $(0,\ldots,0,1)$ and $(0,1,\ldots,1)$ do not minimize $I(Y;X|p)$. The first choice raises a contradiction: $\Prob{Y}{1|0,p}=0$ implies that $\nabla I(Y;X|p)(\sigma)<0$ and necessarily $\theta^{\star}_{\sigma}=1$. The second choice also leads to a contradiction: $\Prob{Y}{1|1,p}=1$ implies that $\nabla I(Y;X|p)(\sigma)>0$ and necessarily $\theta^\star_{\sigma}=0$.

Bound \#2: $1\leq K_{2}(p,c)<2$.  Let us define $A(p)\triangleq\boldtheta^T\nabla I(Y;X|p)$. According to~\eqref{eq:DerivMutInfoKnowing}, it follows that:
\begin{equation}
A(p)=g(\Prob{Y}{1|p})-pg(\Prob{Y}{1|1,p})-(1-p)g(\Prob{Y}{1|0,p}),
\label{eq:AKnowing}
\end{equation}
with $g(x)=xh_b^\prime(x)$. As $g(x)$ is strictly concave, $A(p)>0$ for any $p\in (0,\eta_c)$.
With the help of \eqref{eq:K2}, $A(p)$ can also be written as:
\begin{equation}
A(p)=K_{1}(p,c)\left(c\cdot p^{c}+\sum_{0<\sigma\leq K_{2}(p,c)}\sigma \Prob{\Sigma}{\sigma|p}-K_{2}(p,c)\left(p^c+\sum_{0<\sigma\leq K_{2}(p,c)}\Prob{\Sigma}{\sigma|p}\right)\right).
\end{equation}
Since $A(p)>0$, then $K_{2}(p,c)<B(K_{2}(p,c),p,c)$ with
\begin{equation}
B(K,p,c)=\frac{c\cdot p^{c}+\sum_{0<\sigma\leq K}\sigma \Prob{\Sigma}{\sigma|p}}{p^c+\sum_{0<\sigma\leq K}\Prob{\Sigma}{\sigma|p}}
\end{equation}
In the following we will make use of the next lemma, which is proved at the end of the appendix.
\begin{lemma}
For $1\leq K$, $B(K,p,c)\leq B(K+1,p,c)$.
\label{lemma:proof-lemma2}
\end{lemma}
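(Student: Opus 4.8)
The plan is to view $B(K,p,c)$ as a weighted average of the values $\{c,1,2,\dots,K\}$ and to establish its monotonicity in $K$ by the elementary ``mediant'' inequality. Concretely, write $B(K,p,c)=N_K/D_K$ with
\[
N_K\triangleq c\,p^{c}+\sum_{\sigma=1}^{K}\sigma\,\Prob{\Sigma}{\sigma|p},\qquad
D_K\triangleq p^{c}+\sum_{\sigma=1}^{K}\Prob{\Sigma}{\sigma|p},
\]
and recall $\Prob{\Sigma}{\sigma|p}=\nchoosek{c}{\sigma}p^{\sigma}(1-p)^{c-\sigma}\ge 0$, so that $D_K>0$ on $(0,1)$. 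Stepping from $K$ to $K+1$ adds the single nonnegative term $a_{K+1}\triangleq\Prob{\Sigma}{K+1|p}$ to $D_K$ and $(K+1)a_{K+1}$ to $N_K$, whence a one-line computation gives
\[
B(K+1,p,c)-B(K,p,c)=\frac{a_{K+1}\bigl((K+1)D_K-N_K\bigr)}{D_K\,(D_K+a_{K+1})}.
\]
Since $a_{K+1}\ge 0$ and the denominator is positive, the lemma reduces to proving $(K+1)D_K-N_K\ge 0$.

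The second step expands this quantity as
\[
(K+1)D_K-N_K=(K+1-c)\,p^{c}+\sum_{\sigma=1}^{K}(K+1-\sigma)\,\Prob{\Sigma}{\sigma|p},
\]
in which every summand is nonnegative because $\sigma\le K$. If $K\ge c-1$ the leading term is nonnegative as well and we are finished. If $K<c-1$, the only negative contribution is $(K+1-c)p^{c}$, which I would absorb using just the $\sigma=1$ summand: $\sum_{\sigma=1}^{K}(K+1-\sigma)\,\Prob{\Sigma}{\sigma|p}\ge K\,\Prob{\Sigma}{1|p}=K\,c\,p\,(1-p)^{c-1}$. Hence it remains to check the scalar inequality $K\,c\,p\,(1-p)^{c-1}\ge (c-1-K)\,p^{c}$, i.e., after dividing by $p>0$ and by $(1-p)^{c-1}>0$, that $\frac{Kc}{c-1-K}\ge\left(\frac{p}{1-p}\right)^{c-1}$.

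This last inequality holds throughout the regime of Appendix~\ref{app:LemmaClassDsimple}, namely $p\in[0,\eta_c)$ with $\eta_c\le 1/2$ (for $c\ge 4$ because $\eta_c\le 2/c\le 1/2$, and $\eta_3=1/2$): then $\frac{p}{1-p}\le 1$ bounds the right-hand side by $1$, while $K\ge 1$ gives $\frac{Kc}{c-1-K}\ge\frac{c}{c-2}>1$. Consequently $(K+1)D_K-N_K\ge 0$ and $B(K,p,c)\le B(K+1,p,c)$, which proves the lemma. The only delicate point is the negative term $(K+1-c)p^{c}$, which must be neutralized uniformly over $K=1,\dots,c-1$; this is precisely where the bound $p<1/2$ (automatically in force throughout this appendix) enters. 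Everything else is the mediant identity together with the nonnegativity of the Bernstein polynomials $\Prob{\Sigma}{\sigma|p}$, so no further obstacle is expected.
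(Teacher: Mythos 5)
Your proof is correct and follows essentially the same route as the paper's: both reduce the claim to the sign of $(K+1)D_K-N_K$ (the paper's $\Delta$, i.e.\ $K+1-B(K,p,c)$), lower-bound the binomial sum by its $\sigma=1$ term $cp(1-p)^{c-1}$, and close the argument using the smallness of $p$ in the regime of the appendix (the paper via $p\le\eta_c\le g(c)=1/(1+(1-1/c)^{1/(c-1)})$, you via $\eta_c\le 1/2$). The mediant identity you display is simply an explicit form of the paper's statement that the increment has the same sign as $\Delta$.
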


Therefore, $K_{2}(p,c)<B(c-1,p,c)=cp/(1-(1-p)^c)$. This last function is increasing with $p$: $K_{2}(p,c)<B(c-1,3/2c,c)=3/2(1-(1-3/2c)^c),\,\forall p\in(0,\eta_{c}]$ since $\eta_{c}$ is never bigger than $3/2c$. This function is increasing with $c$ and converges to $3/2(1-e^{-3/2})\approx1.93$. Thus, combining this result with the Bound \#1, we have $1\leq K_{2}(p,c)<2$ for $p\leq\eta_{c}$, and $\boldtheta^\star$ has the form $(0,\theta_{1}(p),0,\ldots,0,1)^{T}$ on the interval $(0,\eta_{c}]$, as expressed in the statement of Lemma~\ref{lemma:opt-strat-simpledec-outrange}. The remaining of the proof deals with a refinement of this result in the interval $p\in[c^{-1},\eta_{c}(c)]$.

Bound \#3: $K_{2}(p,c)>cp$ when $p\in[c^{-1},\eta_{c}(c)]$.
At most, when $\theta_{1}=1$, $\Prob{Y}{1|1,p}=(1-p)^{c-1}+p^{c-1}$ which is a decreasing function over $[0,1/2]$. Taken in $p=1/c$, we have decreasing values with $c$ which are all lower than $1/2$ when $c\geq4$. Therefore, $\forall c\geq4,\,\forall p\in[c^{-1},\eta_{c}]$, $\Prob{Y}{1|1,p}$, but also $\Prob{Y}{1|p}$ and $\Prob{Y}{1|0,p}$, lies in the interval $[0,1/2]$ where the function $h_b^\prime(x)$ is strictly convex:
\begin{equation}
h_b^\prime(\Prob{Y}{1|p})<ph_b^\prime(\Prob{Y}{1|1,p})+(1-p)h_b^\prime(\Prob{Y}{1|0,p}).
\label{eq:cond_k_2}
\end{equation}
Using \eqref{eq:K_2def} and \eqref{eq:cond_k_2}, it results that $K_{2}(p,c)>cp$. Hence, we can conclude that $\boldtheta^\star=(0,1,0,\ldots,0,1)^T$ when $p\in[c^{-1},\eta_{c}]$ and $c\geq4$.

We now address the case of $c=3$ and we verify that $\boldtheta^\star=(0,1,0,1)^T$ when $p\in[1/3,1/2]$ (recall from Appendix~\ref{app:null-rate-simple-knownp} that $\eta_3 = 1/2$). With this choice of $\boldtheta^{\star}$, $\Prob{Y}{1|1,p}=1-\Prob{Y}{1|0,p}$, which yields that $K_{2}(p,3)\geq1$ if $h_b^\prime(\Prob{Y}{1|p})/h_b^\prime(\Prob{Y}{1|0,p})\leq 1/3$. Since both derivatives equal $0$ in $p=\eta_{3}=1/2$, we apply l'Hôpital's rule twice to obtain:
$$
\lim_{p\rightarrow1/2}\frac{h_b^\prime(\Prob{Y}{1|p})}{h_b^\prime(\Prob{Y}{1|0,p})}=\lim_{p\rightarrow1/2}\frac{d\Prob{Y}{1|p}/d p}{d\Prob{Y}{1|0,p}/dp}=\lim_{p\rightarrow1/2}\frac{d^2\Prob{Y}{1|p}/d^2 p}{d^2\Prob{Y}{1|0,p}/d^2 p}=0.
$$ 
This shows that $K_{2}(p,3)\geq 1$ in an interval $[\alpha,1/2]$. Remarkably, it appears that $\alpha=1/3$.\\



{\it Proof of Lemma~\ref{lemma:proof-lemma2}}: For fixed $(p,c)$, if $K\geq c$, $B(K,p,c)$ is constant. Otherwise, $B(K+1,p,c)-B(K,p,c)$ has the same sign as $\Delta=K+1-(cp^c+\sum_{\sigma\leq K}\sigma\Prob{\Sigma}{\sigma|p})/(p^c+\sum_{\sigma\leq K}\Prob{\Sigma}{\sigma|p})$. The successive derivations hold:
\begin{eqnarray}
\Delta&=&K+1-c\frac{p^c}{p^c+\sum_{\sigma\leq K}\Prob{\Sigma}{\sigma|p}}-\frac{\sum_{\sigma\leq K}\sigma\Prob{\Sigma}{\sigma|p}}{p^c+\sum_{\sigma\leq K}\Prob{\Sigma}{\sigma|p}}\\
&>& 1-c\frac{p^c}{p^c+\sum_{\sigma\leq K}\Prob{\Sigma}{\sigma|p}}>1-c\frac{p^c}{p^c+cp(1-p)^{c-1}}>0
\end{eqnarray}
The last inequality holds provided that $p<g(c)=1/(1+(1-1/c)^{1/(c-1)}$.
$g(.)$ is a decreasing function and $\lim_{c\rightarrow\infty}g(c)=1/2$. Thus, $\forall c\geq3$, we have $p\leq\eta_{c}\leq g(c)$, which proves the lemma.
  
\bibliographystyle{IEEEtran}
\bibliography{fingerprinting_biblio}

\end{document}